\newcommand{\citerepl}{\cite{zenodoreplpackage}}
\definecolor{darkyellow}{RGB}{125, 99, 45}
\definecolor{blue}{RGB}{0, 24, 255}
\definecolor{purple}{RGB}{175, 0, 219}
\definecolor{lightblue}{RGB}{38, 128, 153}
\definecolor{green}{RGB}{8, 128, 0}
\definecolor{red}{RGB}{163, 21, 21}
\definecolor{lightgrey}{RGB}{204, 204, 204}
\definecolor{red}{RGB}{163, 21, 21}
\definecolor{errorred}{RGB}{241,156,153}
\lstdefinestyle{Python}{
  language=Python,
  captionpos=b,
  basicstyle=\ttfamily\footnotesize,
  numbers=left,
  numberstyle=\small\color{lightgray},
  tabsize=2,
  columns=fixed,
  showstringspaces=false,
  showspaces=false,
  showtabs=false,
  keepspaces,
  commentstyle=\color{green},
  stringstyle=\color{red},
  keywordstyle=\color{purple},
  morekeywords={with,as,then,do,end},
  deletekeywords={def,not,in,and,or},
  keywords={[2]@invariant,sample,score,resample,read,read_trace,revisit,visit},
  keywordstyle={[2]\color{lightblue}}, 
  keywords={[3]@invariant,
  len,
  },
  keywordstyle={[3]\color{darkyellow}},
  keywords={[4]@invariant,def,not,and,or,in},
  keywordstyle={[4]\color{blue}},
}
\lstdefinestyle{Julia}{
  language=Python,
  captionpos=b,
  basicstyle=\scriptsize\textnormal\ttfamily,
  numbers=left,
  numberstyle=\scriptsize\color{lightgray},
  tabsize=2,
  columns=fixed,
  showstringspaces=false,
  showspaces=false,
  showtabs=false,
  keepspaces,
  commentstyle=\color{green},
  stringstyle=\color{red},
  keywordstyle=\color{purple},
  morekeywords={end, begin, elseif},
  keywords={[2]@invariant,},
  keywordstyle={[2]\color{lightblue}}, 
  keywords={[3]@invariant,
  @model, @gen
  },
  keywordstyle={[3]\color{darkyellow}},
  keywords={[4]@invariant, function},
  keywordstyle={[4]\color{blue}},
}
\lstdefinestyle{C++}{
  language=C++,
  captionpos=b,
  basicstyle=\scriptsize\textnormal\ttfamily,
  numbers=left,
  numberstyle=\scriptsize\color{lightgray},
  tabsize=2,
  columns=fixed,
  showstringspaces=false,
  showspaces=false,
  showtabs=false,
  keepspaces,
  commentstyle=\color{green},
  stringstyle=\color{red},
  keywordstyle=\color{purple},
  morekeywords={data,parameters,model},
  deletekeywords={},
  keywords={[2]@invariant,},
  keywordstyle={[2]\color{lightblue}}, 
  keywords={[3]@invariant,
  len,
  },
  keywordstyle={[3]\color{darkyellow}},
  keywords={[4]@invariant,def,not,and,or,in},
  keywordstyle={[4]\color{blue}},
}
\newtheorem{theorem}{Theorem}
\newtheorem{corollary}{Corollary}
\newtheorem{proposition}{Proposition}
\newtheorem{lemma}{Lemma}
\newtheorem{definition}{Definition}
\newenvironment{statement}{\hspace{\parindent}\textsc{Statement}.\itshape}{}
\newcommand{\hbra}{
  \hbox to \columnwidth{\vrule width0.3mm height 1.8mm depth-0.3mm
    \leaders\hrule height1.8mm depth-1.5mm\hfill
    \vrule width0.3mm height 1.8mm depth-0.3mm}}
\newcommand{\hket}{
  \hbox to \columnwidth{\vrule width0.3mm height1.5mm
    \leaders\hrule height0.3mm\hfill
    \vrule width0.3mm height1.5mm}}
\newcommand{\ratio}{.35}
\newenvironment{display}[2][\ratio]{
  \begin{tabbing}
    \hspace{0.1em} \= \hspace{1.5em} \= \hspace{#1\linewidth-3.2em} \= \hspace{1.5em} \= \kill
    \textbf{#2}\\[-.8ex]
    \\[-.8ex]
  }
  {\\[-.8ex]
  \end{tabbing}}
\newcommand{\entry}[2]{\>\>$#1$\>\>#2} 
\newcommand{\clause}[3][]{\>$#2$\>#1\>#3}
\newcommand{\Category}[2]{\clause{#1::=}{#2}}
\newcommand{\kw}[1]{\mbox{\lstinline$#1$}}
\newcommand{\pdf}{\textnormal{pdf}}
\newcommand{\medcup}{\textstyle \bigcup}
\newcommand{\Reals}{\mathbb{R}}
\newcommand{\pdfvalues}{\Reals_{\ge 0}}
\newcommand{\Nats}{\mathbb{N}}
\newcommand{\Ints}{\mathbb{Z}}
\newcommand{\powerset}{\mathscr{P}}
\newcommand{\true}{\mathtt{true}}
\newcommand{\false}{\mathtt{false}}
\newcommand{\none}{\mathtt{null}}
\newcommand{\cfg}{G}
\newcommand{\unrolledcfg}{H}
\newcommand{\measurespace}{\mathcal{M}}
\newcommand{\sigmaalgebra}{\Sigma}
\newcommand{\refmeasure}{\nu}
\newcommand{\values}{\mathcal{V}}
\newcommand{\exprs}{\mathbf{Exprs}}
\newcommand{\progstates}{\Sigma}
\newcommand{\strings}{\mathbf{Strings}}
\newcommand{\types}{\mathbf{Types}}
\newcommand{\dict}{\mathtt{tr}}
\newcommand{\obsdict}{\mathtt{obs\_tr}}
\newcommand{\dicts}{\mathcal{T}}
\newcommand{\addr}[1]{\textnormal{"#1"}}
\newcommand{\undefval}{\textnormal{undefined}}
\newcommand{\opsemrel}{\Downarrow^{\dict}}
\newcommand{\progstate}{\sigma}
\newcommand{\probvar}{\textbf{p}}
\newcommand{\cfgrel}{\overset{\dict}{\longrightarrow}}
\newcommand{\cfgreloverset}[1]{\overset{#1}{\longrightarrow}}
\newcommand{\staticprov}{\textnormal{prov}}
\newcommand{\staticprovnode}{\textnormal{provnode}}
\newcommand{\evalf}[2]{V^{#1}_{#2}}
\newcommand{\restricedfs}[2]{[\dicts|_{#1} \to #2]}
\newcommand{\RD}{\textnormal{RD}}
\newcommand{\CP}{\textnormal{BP}}
\newcommand{\cfgnode}{\iota_\textnormal{cfg}}
\newcommand{\dirac}[1]{\delta_{#1}}
\newcommand{\milliseconds}{ms}
\newcommand{\microseconds}{\mu s}
\newcommand{\megafastarrow}{\uparrow\uparrow}
\newcommand{\fastarrow}{\nearrow}
\newcommand{\samearrow}{\rightarrow}
\newcommand{\slowarrow}{\searrow}
\newcommand{\megaslowarrow}{\downarrow\downarrow}
\newcommand{\expectation}[2]{\mathbb{E}_{#1}{\left[#2\right]}}
\newcommand{\revcolor}{}
\newcommand{\revvcolor}{}
\begin{document}

\title{Static Factorisation of Probabilistic Programs with User-Labelled Sample Statements and While Loops}

\author{Markus Böck}
\orcid{0009-0001-6704-5903}
\affiliation{%
  \institution{TU Wien}
  \city{Vienna}
  \country{Austria}
}
\email{markus.h.boeck@tuwien.ac.at}

\author{Jürgen Cito}
\orcid{0000-0001-8619-1271}
\affiliation{%
  \institution{TU Wien}
  \city{Vienna}
  \country{Austria}
}
\email{juergen.cito@tuwien.ac.at}


\begin{abstract}
It is commonly known that any Bayesian network can be implemented as a probabilistic program, but the reverse direction is not so clear.
In this work, we address the open question to what extent a probabilistic program with user-labelled sample statements and while loops -- features found in languages like Gen, Turing, and Pyro -- can be represented graphically.
To this end, we extend existing operational semantics to support these language features.
By translating a program to its control-flow graph, we define a sound static analysis that approximates the dependency structure of the random variables in the program.
As a result, we obtain a static factorisation of the implicitly defined program density, which  is equivalent to the known Bayesian network factorisation for programs without loops and constant labels, but constitutes a novel graphical representation for programs that define an unbounded number of random variables via loops or dynamic labels.
We further develop a sound program slicing technique to leverage this structure to statically enable three well-known optimisations for the considered program class:
we reduce the variance of gradient estimates in variational inference and we speed up both single-site Metropolis Hastings and sequential Monte Carlo.
These optimisations are proven correct and empirically shown to match or outperform existing techniques.
\end{abstract}

\begin{CCSXML}
<ccs2012>
   <concept>
       <concept_id>10003752.10010124.10010131.10010134</concept_id>
       <concept_desc>Theory of computation~Operational semantics</concept_desc>
       <concept_significance>500</concept_significance>
       </concept>
   <concept>
       <concept_id>10003752.10010124.10010138.10010143</concept_id>
       <concept_desc>Theory of computation~Program analysis</concept_desc>
       <concept_significance>500</concept_significance>
       </concept>
   <concept>
       <concept_id>10002950.10003648.10003662.10003664</concept_id>
       <concept_desc>Mathematics of computing~Bayesian computation</concept_desc>
       <concept_significance>300</concept_significance>
       </concept>
   <concept>
       <concept_id>10002950.10003648.10003649.10003650</concept_id>
       <concept_desc>Mathematics of computing~Bayesian networks</concept_desc>
       <concept_significance>300</concept_significance>
       </concept>
 </ccs2012>
\end{CCSXML}

\ccsdesc[500]{Theory of computation~Operational semantics}
\ccsdesc[500]{Theory of computation~Program analysis}
\ccsdesc[300]{Mathematics of computing~Bayesian computation}
\ccsdesc[300]{Mathematics of computing~Bayesian networks}

\keywords{Probabilistic programming, operational semantics, static program analysis, factorisation, Bayesian networks}


\maketitle

\section{Introduction}

Probabilistic programming provides an intuitive means to specify probabilistic models as programs.
Typically, probabilistic programming systems also come with machinery for automatic posterior inference.
This enables practitioners without substantial knowledge in Bayesian inference techniques to work with complex probabilistic models.

Besides these practical aspects, there is also a long history of studying probabilistic programming languages (PPLs) from a mathematical perspective~\cite{kozen1979semantics, barthe2020foundations}.
It is commonly known that any Bayesian network~\cite{koller2009pga} can be implemented as a probabilistic program \cite{gordon2014probabilistic-overview}, but the reverse direction is not so clear.
The understanding is that first-order probabilistic programs without loops can be compiled to Bayesian networks, because the number of executed sample statements is fixed and finite, such that each sample statement corresponds to exactly one random variable \cite{van2018introppl, borgstrom2011measure-transformer, sampson2014assertions}.
Generally, a graphical representation of a probabilistic program is desirable as it lends itself for optimising inference algorithms by exploiting the dependency structure between random variables.
In fact, many probabilistic programming languages require the user to explicitly construct the model in a graphical structure~\cite{salvatier2016pymc,minka2012infer,mccallum2009factorie,lunn2000winbugs,milch2007blog}.

However, in the universal probabilistic programming paradigm probabilistic constructs are embedded in a Turing-complete language~\cite{goodman2012church}.
While loops and recursive function calls easily lead to an unbounded number of executed sample statements which makes {\revcolor the typical compilation to a graph with one node per random variable impossible}.
Even more, many PPLs allow the user to label each sample statement with a dynamic address.
This further complicates matters, because in this case even a single sample statement may correspond to a potentially infinite number of random variables.
As a consequence, many approaches resort to dynamic methods to leverage the dependency structure of this kind of probabilistic programs for efficient inference~\cite{wu2016swift,yang2014generatingefficient,chen2014sublinear}.

{\revcolor
Sample statements with dynamic address expressions constitute a valuable language feature, as they allow the definition of random variables whose identifiers differ from the program variables to which values are assigned to.
This permits the specification of a broader model class relative to PPLs that lack this feature.
Below we give examples of such user-labelled sample statements in Gen~\cite{cusumano2019gen}, Pyro~\cite{bingham2019pyro}, and Turing~\cite{ge2018turing}, where the address expressions are highlighted with }{\color{red}\underbar{red}}:
\begin{lstlisting}[style=Python, numbers=none, moredelim={[is][\color{red}\underbar]{@}{@}}, moredelim={[is][\color{black}]{?}{?}}],
    x = {@:x => i@} ~ Normal(0.,1.)             # Gen
    x = pyro.?sample?(@f"x{i}"@, Normal(0.,1.))   # Pyro
    @x[i]@ ~ Normal(0.,1.)                      # Turing
    x = sample(@"x"+str(i)@, Normal(0.,1.))     # our formal syntax
\end{lstlisting}

{\revcolor In this work, we address the open question to what extent a probabilistic program with \emph{user-labelled sample statements and while loops} can be represented graphically.
We prove that that the density of a probabilistic program with $K$ user-labelled sample statements decomposes into $K$ factors.
This finite factorisation holds even if the program contains while loops.
Moreover, the dependency structure implied by the factorisation can be computed \emph{statically} which, as we will show, enables new opportunities for the optimisation of inference algorithms.
}

\subsection{Overview}
\textbf{\cref{sec:semantics}. Address-Based Semantics For Probabilistic Programs} {\revcolor generalises the operational semantics of Core Stan~\cite{gorinova2019slicstan}.}
Instead of interpreting the program density $p$ as a function over only a fixed set of finite variables, we extend this approach to interpret the density as a function over traces.
In contrast to \citet{gorinova2019slicstan}, this allows us to consider programs with dynamic addresses and while loops.
These language features greatly expand the class of expressible probabilistic models, for instance, models with an infinite number of random variables.
{\revcolor The described semantics can be seen as an operational analog to the denotational models of Pyro~\cite{lee2019towards} and Gen~\cite{lew2023stochasticprobs}.
}

\textbf{\cref{sec:without-loops}. Static Factorisation for Programs Without Loops} examines the factorisation of the program density for the simplified case where programs do not contain while loops.

In \cref{sec:staticprov}, {\revcolor building on standard data and control flow analysis techniques}, we present a static provenance analysis that finds the dependencies of a program variable $x$ at any program state $\progstate$.
The dependencies are described by specifying for which addresses a trace~$\dict$ needs to be known to also know the value of $x$ in $\progstate$.
This analysis forms the basis for finding the factorisation over addresses and operates on the control-flow graph (CFG) of a program.

In \cref{sec:theorem-without-loops}, we {\revcolor derive a novel factorisation from the static analysis}, $p(\dict) = \prod_{k=1}^K p_k(\dict)$, for programs without loops by explicitly constructing one factor~$p_k$ per sample statement.
If we further restrict the program to constant addresses, we arrive at the language of \citet{gorinova2019slicstan}, where our factorisation theorem implies {\revcolor a known} equivalence to Bayesian networks.
{\revcolor However, we} show that already in the case of dynamic addresses, programs are only equivalent to Markov networks, because of the possibility of cyclic dependencies and an infinite number of random variables.


\textbf{\cref{sec:loops}. Static Factorisation for Programs With Loops} describes a way to generalise the factorisation result of \cref{sec:without-loops} to the much more challenging case where programs may contain while loops.
{\revcolor The idea is to mathematically unroll while loops to obtain a directed acyclic graph, the \emph{unrolled CFG}, that essentially represents all possible program paths.
Our main theoretical contribution is to show that even in this case, the density still factorises into a finite number of terms -- one factor for each of the $K$ sample statements}.
Like dynamic addresses, while loops may also be used to model an infinite number of random variables and thus, the resulting factorisation in general implies the equivalence of programs to Markov networks.

\textbf{\cref{sec:casestudy}. {\revcolor Optimising Posterior Inference Algorithms}}  {\revcolor presents a novel and sound program slicing method building on the factorisation theory established in \Cref{sec:loops}.
By generating sub-programs for each factor, we can 1) decrease the runtime of a Metropolis Hastings algorithm by speeding up the acceptance probability computation; 2) construct a lower-variance gradient estimator for Black-Box Variational Inference; 3) facilitate an iterative implementation of sequential Monte Carlo.
Each of the considered optimisations are known, but have been either realised dynamically or were reserved for PPLs that require the explicit graphical construction of probabilistic models. 
Our approach is the first to enable them statically for the considered program class.
We evaluate our approach on a benchmark set of 16 probabilistic programs, the majority of which make use of while loops to define an unbounded number of random variables.


}

\subsection{Contributions}
{\revcolor We briefly summarise the key contributions of this work:
\begin{itemize}
    \item A static provenance analysis in the context of our formal PPL building on standard data and control flow techniques,
    \item A novel density factorisation for the considered program class, which can be computed statically and subsumes a known equivalence to Bayesian networks in a restricted setting,
    \item A novel and sound program slicing method derived from the factorisation theory,
    \item An application of the slicing method to enable well-known optimisations \emph{statically and provably correct} for the considered program class and three inference algorithms: Metropolis Hastings, variational inference, and sequential Monte Carlo,
    \item An open-source implementation and empirical evaluation of our proposed approach~\citerepl{}.
\end{itemize}
}

\section{Address-Based Semantics for Probabilistic Programs}\label{sec:semantics}

First, we introduce the syntax of our formal probabilistic programming language in terms of expressions and statements similar to \citet{gorinova2019slicstan} and \citet{hur2015provablycorrectsampler}.
We assume that the constants of this language range over a set of values including booleans, integers, real numbers, immutable real-valued vectors, finite-length strings, and a special $\none$ value.
We define $\values$ to be the set of all values.
A program contains a finite set of program variables denoted by $x$, $x_i$, $y$, or $z$.
We further assume a set of built-in functions $g$.
We assume that functions $g$ are total and return $\none$ for erroneous inputs.
The only non-standard construct in the language are sample statements.
For sample statements, the expression $E_0$ corresponds to the user-defined sample address, which is assumed to evaluate to a string.
The symbol $f$ ranges over a set of built-in distributions which are parameterised by arguments $E_1,\dots,E_n$.

\noindent
\begin{minipage}[c]{1.5\textwidth}
\begin{display}[.15]{Syntax of Expressions:}
	\Category{E}{expression}\\
	\entry{c}{constant}\\
	\entry{x}{variable}\\
	\entry{g(E_1,\dots,E_n)}{function call}\\
	\\
	\\
\end{display}
\end{minipage}\hspace{5mm}%
\begin{minipage}[c]{\textwidth}
\begin{display}[.4]{Syntax of Statements:}
	\Category{S}{statement}\\
	\entry{x = E}{assignment}\\
	\entry{S_1; S_2}{sequence}\\
	\entry{\kw{if}\;E\;\kw{then}\;S_1\,\kw{else}\;S_2}{if statement} \\
	\entry{\kw{skip}}{skip}\\
    \entry{\kw{while}\;E\;\kw{do}\;S}{while loop}\\
    \entry{x = \kw{sample}(E_0, f(E_1,\dots,E_n))}{sample}
\end{display}
\end{minipage}

\subsection{Operational Semantics}

The meaning of a program is the density implicitly defined by its sample statements.
This density is evaluated for a \emph{program trace} and returns a real number.
A program trace is a mapping from finite-length addresses to numeric values, $\dict \colon \strings \to \values$.
We denote the set of all traces with $\dicts$.
How the value at an address is determined is not important for this work and in practice typically depends on the algorithm that is used to perform posterior inference for the program.
For instance, if an address $\alpha$ corresponds to an observed variable, the value $\dict(\alpha)$ is fixed to the data.
If an address $\alpha$ does not appear in a program, then its value can be assumed to be $\dict(\alpha) = \none$.

For each trace $\dict$, the operational semantics of our language are defined by the big-step relation $(\progstate,S) \opsemrel \progstate'$,
where $S$ is a statement and $\progstate \in \progstates$ is a program state -- a finite map from program variables to values
\begin{displaymath}
    \progstate ::= x_1\mapsto V_1,\dots, x_n \mapsto V_n, \quad \, x_i\;\text{distinct},\; V_i \in \values.
\end{displaymath}
A state can be naturally lifted to a map from expressions to values $\progstate\colon \exprs \to \values$:
\begin{displaymath}
    \progstate(x_i) = V_i, \quad \progstate(c) = c, \quad \progstate(g(E_1,\dots,E_n)) = g(\progstate(E_1),\dots,\progstate(E_n)).
\end{displaymath}
The operational semantics for the non-probabilistic part of our language are standard:
\vspace{1mm}
\begin{displaymath}
    \frac{}{(\progstate, \kw{skip}) \opsemrel \progstate}\quad
    \frac{}{(\progstate, x = E) \opsemrel \progstate[x \mapsto \progstate(E)]}\quad
    \frac{(\progstate, S_1)\opsemrel \progstate' \quad (\progstate', S_2)\opsemrel \progstate''}{(\progstate, S_1;S_2) \opsemrel \progstate''}
\end{displaymath}
\vspace{2mm}
\begin{displaymath}
    \frac{\progstate(E) = \true \quad (\progstate, S_1) \opsemrel \progstate'}%
{(\progstate, \kw{if}\;E\;\kw{then}\;S_1\,\kw{else}\;S_2) \opsemrel \progstate'} \quad
    \frac{\progstate(E) = \false \quad (\progstate, S_2) \opsemrel \progstate'}%
{(\progstate, \kw{if}\;E\;\kw{then}\;S_1\,\kw{else}\;S_2) \opsemrel \progstate'}
\end{displaymath}
\vspace{2mm}
\begin{displaymath}
    \frac{\progstate(E)= \false}%
{(\progstate, \kw{while}\;E\;\kw{do}\;S) \opsemrel \progstate}\quad
\frac{\progstate(E) = \true \quad (\progstate, (S; \kw{while}\;E\;\kw{do}\;S)) \opsemrel \progstate'}%
{(\progstate, \kw{while}\;E\;\kw{do}\;S) \opsemrel \progstate'}
\end{displaymath}
\vspace{1mm}

Above, $\progstate[x \mapsto V]$ denotes the updated program state $\progstate'$, where $\progstate'(x) = V$ and $\progstate'(y) = \progstate(y)$ for all other variables $y \neq x$.
The inference rules are to be interpreted inductively such that the semantics of a while loop can be rewritten as 
\begin{eqnarray*}
     (\progstate, \kw{while}\;E\;\kw{do}\;S) \opsemrel \progstate' \iff & \exists n \in \Nats_0\colon \left((\progstate, \textnormal{repeat}_n(S)) \opsemrel \progstate' \land \progstate'(E) = \false\right) \land \\
    & \quad \forall m < n\colon \left((\progstate, \textnormal{repeat}_m(S)) \opsemrel \progstate'_m \land \progstate'_m(E) = \true\right),
\end{eqnarray*}
where $\textnormal{repeat}_n(S) = (S;...;S)$ repeated $n$-times. In particular, if a while loop does not terminate for state $\progstate$ then $\nexists \progstate'\colon (\progstate, \kw{while}\;E\;\kw{do}\;S) \opsemrel \progstate'$.
This makes the presented operational semantics partial, where errors and non-termination are modelled implicitly.
Formally, $\opsemrel$ is the least-relation closed under the stated rules.

At sample statements, we inject the value of the trace at address $V_0 = {\revcolor\progstate(E_0)}$.
We multiply the density, represented by the reserved variable $\probvar$, with the value of function $\pdf_f$ evaluated at $\dict(V_0)$.
Note that the function $\pdf_f$ may be an arbitrary function to real values.
However, in the context of probabilistic programming, we interpret $\pdf_f$ as the density function of distribution $f$ with values in $\pdfvalues$.
Lastly, the value $\dict(V_0)$ is stored in the program state at variable $x$.
The semantics of sample statements are summarised in the following rule:
\begin{equation}\label{eq:sample-semantics}
    \frac{\forall i \colon \progstate(E_i) = V_i \land  V_i \neq \none \quad \quad V_0 \in \strings \quad \quad V = \dict(V_0) \land V \neq \none}%
{(\progstate, x = \kw{sample}(E_0, f(E_1,\dots,E_n))) \opsemrel \progstate[x \mapsto V, \probvar \mapsto \progstate(\probvar) \cdot \pdf_f(V; V_1,\dots,V_n)]}
\end{equation}
Finally, the meaning of a program $S$ is defined as the mapping of trace to density.
This definition is well-defined, because the semantics are deterministic for each trace.
If a trace leads to errors or non-termination, then the density is undefined.

\begin{definition}\label{def:prog-semantics}
{\revcolor The density of a program $S$ is the function $\bar{p}_S\colon \dicts \times \progstates \to \pdfvalues$ given by 
\begin{displaymath}
    \bar{p}_S(\dict, \sigma_0) \coloneqq 
    \begin{cases}
        \sigma(\probvar) & \text{if }\;\exists\, \progstate\in \progstates \colon (\sigma_0, S) \opsemrel \progstate \\
        \undefval & \text{otherwise}
    \end{cases}
\end{displaymath}
For the initial program state $\sigma_0 = (x_1\mapsto \none,\dots,x_n\mapsto \none, \probvar \mapsto 1)$, mapping all program variables to $\none$, we define $p_S\colon \dicts \to \pdfvalues$ with $p_S(\dict) = \bar{p}_S(\dict,\sigma_0).$
}
\end{definition}


\begin{figure}[h]
\centering
    \begin{minipage}[b]{.45\columnwidth}
\begin{lstlisting}[style=Python, numbers=none, caption=Simple probabilistic program with stochastic branching., label=lst:semantics]
p = sample("p", Uniform(0,1))
x = sample("x", Bernoulli(p))
if x == 1 then
    y = sample("y", Bernoulli(0.25))
else
    z = sample("z", Bernoulli(0.75))
\end{lstlisting}   
    \end{minipage}%
    \hspace{5mm}
    \begin{minipage}[b]{.45\columnwidth}
\begin{lstlisting}[style=Python, numbers=none, caption={Program of \cref{lst:semantics} rewritten with dynamic addresses.}, label=lst:semantics2]
p = sample("p", Uniform(0,1))
x = sample("x", Bernoulli(p))
if x == 1 then
    addr = "y"; prob = 0.25;
else
    addr = "z"; prob = 0.75;
r = sample(addr, Bernoulli(prob))
\end{lstlisting}  
    \end{minipage}
\end{figure}
Consider the simple probabilistic program in \cref{lst:semantics} {\revcolor (Note that we slightly deviate from the formal syntax in listings for better readability)}.
The density defined by our operational semantics is given below, where $\dirac{v}$ denotes the delta function whose value is 1 if $v$ equals $\true$ and 0 otherwise.
\begin{align*}
    p(\dict) =&\,\, \pdf{}_\textnormal{Uniform}(\dict(\addr{p});0,1) \\
        &\cdot \pdf_\textnormal{Bernoulli}(\dict(\addr{x});\dict(\addr{p})) \\
        &\cdot \left(\dirac{\dict(\addr{x})}\pdf_\textnormal{Bernoulli}(\dict(\addr{y});0.25) +(1 - \dirac{\dict(\addr{x})})  \right) \\
        &\cdot \left(\dirac{\dict(\addr{x})} +(1 - \dirac{\dict(\addr{x})}) \pdf_\textnormal{Bernoulli}(\dict(\addr{z});0.75)  \right).
\end{align*}

As the addresses in \cref{lst:semantics} are constant and identical to the program variables, the program is well-described by many existing PPL semantics.
However, in \cref{lst:semantics2}, we rewrite the program with \emph{dynamic addresses} which results in the same above density.
Most formal PPLs studied in prior work do not support user-defined addresses for sample statements and cannot interpret a program with three sample statements as a four-dimensional joint density.
{\revcolor In contrast, address-based semantics like those of \citet{lee2019towards} and \citet{lew2023stochasticprobs}, denotationally modelling Pyro and Gen, respectively, correctly interpret the program with dynamic addresses.
This is also true for our extensions of \citet{gorinova2019slicstan}.
While closely related to the aforementioned denotational models, we opted for an operational approach to more directly model the implementations of imperative PPLs.
}

\begin{figure}[h]
\centering
    \begin{minipage}[b]{.45\columnwidth}
\begin{lstlisting}[style=Python, numbers=none, caption=Program with random address defining an unbounded number of variables., label=lst:poisson]
n = sample("n", Poisson(5))
x = sample("x_"+str(n), Normal(0.,1.))
\end{lstlisting}
    \end{minipage}%
    \hspace{5mm}
    \begin{minipage}[b]{.45\columnwidth}
\begin{lstlisting}[style=Python, numbers=none, caption= Program with while loop implementing a Geometric distribution., label=lst:geometric]
b = true; i = 0;
while b do
  i = i + 1
  b = sample("b_"+i, Bernoulli(0.25))
\end{lstlisting}  
    \end{minipage}
\vspace{-5mm}
\end{figure}

{\revcolor Note that probabilistic programs with dynamic addresses and while loops pose a significant theoretical challenge, because both cases easily lead to an unbounded number of addresses as can be seen in \cref{lst:poisson} and \cref{lst:geometric}}.

%
%

\subsection{Remark on a Measure-Theoretic Interpretation}\label{sec:measure}
Even though it suffices to consider the explicit computation of the density of a probabilistic program in this work, a measure-theoretic interpretation of the presented operational semantics is still worth exploring.
The density of a measure $\mu$ with respect to some reference measure $\nu$ is formally given by the Radon-Nikodym derivative $p$ such that $\mu(A) = \int_A p \;d\nu$.
Thus, the measure-theoretic interpretation of the operational semantics is described by a construction of a measure space of traces $(\dicts, \Sigma_\dicts, \nu)$ such that the density $p_S$ is measurable.
Then, $\mu_S(A) = \int_A p_S \;d\nu$ is the measure denoted by the program $S$.
As our results do not rely on measure theoretic assumptions, we refer to Appendix~C for a sketch of this measure space.

\subsection{Remark on Conditioning}
In the presented semantics conditioning is only supported via soft-constraints~\cite{staton2016semanticshigherorder}.
This means that the input trace $\dict$ is assumed to be fixed to observed data at certain addresses.
If those addresses are encountered during execution at sample statements the data likelihood is multiplied to $\sigma(\probvar)$.
Informally, the program density $p_S$ corresponds to the unnormalised posterior density.
Again, a rigorous treatment of the normalisation of $p_S$ is not required for our results.

\section{Static Factorisation for Programs without Loops}\label{sec:without-loops}
The goal of this work is to statically find a factorisation of the density implicitly defined by a probabilistic program in terms of addresses.
This static analysis will operate on the control-flow graph (CFG) of a program.
We begin by considering programs without loops first and describe how we can formally translate such a program to its corresponding CFG.
We will equip this CFG with semantics equivalent to the operational semantics of the original program.
The following construction is fairly standard but necessary for formally verifying the correctness of the factorisation.

\subsection{Control-Flow Graph}\label{sec:cfg}

A CFG has five types of nodes: start, end, assign, branch, and join nodes. Branch nodes have two successor nodes, end nodes have no successor, all other nodes have one.
Only join nodes have multiple predecessors.
The CFG contains exactly one start node from which every other node is reachable. Further, it contains exactly one end node that is reachable from any other node.
We describe the recursive translation rules for program $S$ as diagrams, which serve to convey the general idea of the construction.
For a formal description see Appendix~A.
Sub-graphs are drawn by circular nodes.
Branch and join nodes are drawn as diamond nodes, the latter filled black.
Assign, start, and end nodes have rectangular shape.
In text, branch nodes are written as $\text{Branch}(E)$.
Assign nodes are denoted with $\text{Assign}(x = E)$ and $\text{Assign}(x = \kw{sample}(E_0, \dots))$.

\begin{itemize}
    \item The CFG of a skip-statement, $S = \kw{skip}$, consists of only start and end nodes:\newline
    \includegraphics[scale=0.9]{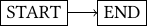}

    \item  The CFG of an assignment, $x = E$, or sample statement $x = \kw{sample}(E_0, f(E_1,\dots,E_n))$, is a sequence of start, assign, and end nodes:\newline
    \raisebox{-0.5\height}{\includegraphics[scale=0.9]{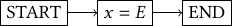}}
    \hspace{5mm}
    \raisebox{-0.5\height}{\includegraphics[scale=0.9]{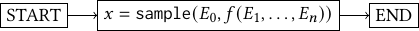}}

    \item The CFG for a sequence of statements, $S = S_1; S_2$, is recursively defined by "stitching together" the CFGs $\cfg_1$ and $\cfg_2$ of $S_1$ and $S_2$:\newline
    \includegraphics[scale=0.9]{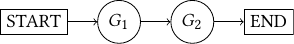}


     \item The CFG of an if statement, $S=(\kw{if}\;E\;\kw{then}\;S_1\,\kw{else}\;S_2)$, is also defined in terms of the CFGs of $S_1$ and $S_2$ together with a branch node $B = \textnormal{Branch}(E)$ and a join node $J$:  \newline
    \includegraphics[scale=0.9]{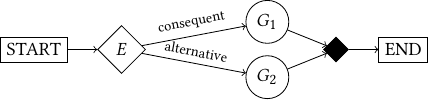}

    
    We denote the successor nodes of the branch node $B$ as $\textnormal{consequent}(B)$ and $\textnormal{alternative}(B)$.
    Furthermore, we say the tuple $(B, J)$ is a branch-join pair and define $\text{BranchJoin}(\cfg)$ to be the set of all branch-join pairs of $\cfg$.
    {\revcolor Lastly, let $\textnormal{condexp}(B)$ denote the conditional expression~$E$.}
\end{itemize}

\paragraph{CFG Semantics.}
For a control-flow graph $\cfg$ we define the small-step semantics for a trace~$\dict$ as a relation that models the transition from node to node depending on the current program state:
\[(\progstate, N) \cfgrel (\progstate',N')\]
The semantics for the CFG of program $S$ are set-up precisely such that they are equivalent to the operational semantics of program $S$.
The transition rules depending on node type and program state are given below, {\revcolor where we denote the unique child of non-branch nodes $N$ in $\cfg$ with \text{successor}(N)}:

\begin{displaymath}
    \frac{N = \text{START} \quad N' = \text{successor}(N) }{ (\progstate, N) \cfgrel (\progstate, N')} \quad \frac{N = \text{Assign}(x=E) \quad N' = \text{successor}(N)}{(\progstate, N) \cfgrel (\progstate[x \mapsto \progstate(E)],N')}
\end{displaymath}

\begin{displaymath}
    \frac{N = \text{Join} \quad N' = \text{successor}(N)}{(\progstate, N) \cfgrel (\progstate, N')} \, \frac{N = \text{Branch}(E) \quad \progstate(E) =  \true}{(\progstate, N) \cfgrel (\progstate,\text{consequent}(N))} \, \frac{N = \text{Branch}(E) \quad \progstate(E) = \false}{(\progstate, N) \cfgrel (\progstate,\text{alternative}(N))}
\end{displaymath}
\begin{equation}\label{eq:sample-semantics-cfg}
    \frac{
    \begin{aligned}
    {N = \text{Assign}(x = \kw{sample}(E_0, f(E_1,\dots,E_n))) \quad N' = \text{successor}(N)} \\
    {\forall i \colon \progstate(E_i) = V_i \land V_i \neq \none \quad V_0 \in \strings\quad V = \dict(V_0) \land V \neq \none}
    \end{aligned}}%
    {(\progstate, N) \cfgrel (\progstate[x \mapsto V, \probvar \mapsto \progstate(\probvar) \cdot \pdf_f(V; V_1,\dots,V_n)],N')}
\end{equation}
Finally, the meaning of a CFG is a mapping from trace to density.
It is defined for traces $\dict$ if there exists a path of node transitions from start to end node.
\begin{definition}
{\revcolor
The density of a CFG~$\cfg$ is the function $\bar{p}_\cfg\colon \dicts \times \progstates \to \pdfvalues$ given by
\begin{displaymath}
    \bar{p}_\cfg(\dict,\sigma_0) \coloneqq
    \begin{cases}
        \sigma(\probvar) & \text{if}\;\;  \exists \; \progstate \in \progstates, n \in \Nats \colon \forall i=1,\dots,n\colon \exists  \; \progstate_i, N_i\colon  \\
        & (\progstate_0, \text{START}) \cfgrel \dots \cfgrel(\progstate_i, N_i) \cfgrel\dots \cfgrel(\progstate, \text{END}) \\
        \undefval & \text{otherwise.}
    \end{cases}
\end{displaymath}
For initial program state $\progstate_0$ as in \Cref{def:prog-semantics},  we define $p_G\colon \dicts \to \pdfvalues$ with $p_G(\dict) = \bar{p}_G(\dict,\sigma_0).$
}
\end{definition}

\begin{proposition}\label{theorem:CFG-semantics-equivalence}
For all programs $S$ without while loops and corresponding control-flow graph $\cfg$, it holds that for all traces $\dict$ {\revcolor and all initial states $\sigma_0$}
\[{\revcolor \bar{p}_S(\dict,\sigma_0) = \bar{p}_\cfg(\dict,\sigma_0).}\]
\end{proposition}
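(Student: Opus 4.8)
The plan is to proceed by structural induction on the statement $S$ (equivalently, on the recursive construction of the CFG $\cfg$ from $S$), proving the stronger statement that runs of the operational semantics and runs of the CFG semantics correspond step-for-step. The key technical device is a lemma that relates a derivation $(\progstate, S) \opsemrel \progstate'$ to the existence of a path in $\cfg$ from $\text{START}$ to $\text{END}$ that starts in state $\progstate$ and ends in state $\progstate'$, and — crucially — that this correspondence also holds for each sub-CFG $\cfg_i$ of a sub-statement $S_i$: a derivation $(\progstate, S_i) \opsemrel \progstate'$ holds iff there is a CFG path from the start node of $\cfg_i$ to its end node taking $\progstate$ to $\progstate'$. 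Since $\bar p_S$ and $\bar p_\cfg$ both read off $\sigma(\probvar)$ from the final state (and both are $\undefval$ exactly when no such derivation/path exists), matching up the final states for all initial $\sigma_0$ gives the proposition.

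First I would set up the sub-CFG correspondence lemma precisely: for the CFG $\cfg_i$ obtained from $S_i$, with distinguished start node $s_i$ and end node $e_i$, I claim $(\progstate, S_i) \opsemrel \progstate'$ iff $(\progstate, s_i) \cfgrel^{*} (\progstate', e_i)$ where the intermediate nodes all lie within $\cfg_i$. The base cases — $\kw{skip}$, assignment $x = E$, and the sample statement — are immediate by inspecting the CFG translation (start node, then possibly one assign node, then end node) against the matching operational rules; in the sample case one checks that the side conditions $V_i \neq \none$, $V_0 \in \strings$, $V = \dict(V_0) \neq \none$ are literally identical in \eqref{eq:sample-semantics} and \eqref{eq:sample-semantics-cfg}, and that the state update including the multiplication into $\probvar$ agrees. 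For the inductive cases: for $S = S_1; S_2$ the "stitching" identifies $e_1$ with $s_2$ (up to the glue), so a path through $\cfg$ splits uniquely at that node into a $\cfg_1$-path and a $\cfg_2$-path, matching the premise of the sequencing rule $(\progstate, S_1) \opsemrel \progstate'$ and $(\progstate', S_2) \opsemrel \progstate''$; for the if-statement, the branch node $B = \textnormal{Branch}(E)$ uses exactly the test $\progstate(E) = \true/\false$ to pick $\textnormal{consequent}(B)$ or $\textnormal{alternative}(B)$, matching the two operational rules, after which the path runs through $\cfg_1$ or $\cfg_2$ and then the join node $J$ (a no-op on state) to $\text{END}$. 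Each direction of each case is a short argument: from a derivation build the path by concatenating the paths given by the induction hypothesis, and conversely decompose a path at the structurally-determined cut nodes and invoke the induction hypothesis on the pieces. Because $S$ has no while loops, there is no fixpoint/infinite-unrolling subtlety here, and every CFG path from $\text{START}$ to $\text{END}$ is finite.

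The main obstacle — really the only non-routine part — is making the "decompose a CFG path at the cut node" step rigorous, i.e. arguing that any $\text{START}$-to-$\text{END}$ path in the composite CFG necessarily passes through the identified gluing node exactly once and in the expected way, so that it factors cleanly into sub-paths. This relies on the structural properties asserted of the CFG construction (one start, one end, branch nodes have two successors, only join nodes have multiple predecessors, every node reachable from start and reaching end), and on the fact that the sub-CFGs are glued only at their start/end nodes, so there is no "shortcut" edge crossing between $\cfg_1$ and $\cfg_2$ other than through the intended interface node. I would discharge this by appealing to the formal CFG construction in Appendix~A and a simple observation that in the if-case the branch and join nodes are the unique articulation points separating the two sub-CFGs from the rest, and in the sequence case the shared node is. Once that decomposition is justified, the rest is bookkeeping: the state transformations line up rule-by-rule, the non-existence of a derivation corresponds to the non-existence of a completing path, and hence $\bar p_S(\dict, \sigma_0) = \bar p_\cfg(\dict, \sigma_0)$ in both the "defined" and "$\undefval$" cases, for every $\dict$ and every $\sigma_0$.
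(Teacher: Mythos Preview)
Your proposal is correct and follows essentially the same approach as the paper: structural induction on $S$ establishing the stronger equivalence $(\progstate, S) \opsemrel \progstate' \iff (\progstate, \text{START}) \cfgrel \cdots \cfgrel (\progstate', \text{END})$, with the base cases (skip, assign, sample) checked directly and the inductive cases (sequencing, if) handled by splitting/concatenating paths at the gluing nodes. You are more explicit than the paper about the path-decomposition subtlety (ensuring a composite path factors through the interface nodes), but this is just added care in the same argument rather than a different route.
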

\begin{proof}
By structural induction. See Appendix B.1.
\end{proof}

\subsection{Static Provenance Analysis}\label{sec:staticprov}

The provenance of variable $x$ in CFG node $N$ are all the addresses $A \subseteq \strings$ that contribute to the computation of its value in node $N$.
In this section, we define an algorithm that operates on the CFG and statically determines the provenance for any variable $x$ at any node $N$.
As this is a static algorithm, it is an over-approximation of the true provenance.
In this section, we will prove soundness of the presented algorithm.
We show that indeed the value of variable $x$ at node~$N$ can be computed from the values in the trace $\dict$ at the addresses which make up the provenance.
But first, we introduce some useful concepts which are similar to definitions found in data-flow analysis~\cite{cooper2011engineeringacompiler, hennessy2011computerdatacontrol}.
\begin{definition}
    For a CFG $\cfg$, let $\textnormal{AssignNodes}(\cfg, x)$ be all nodes $N'$ in $\cfg$ that assign variable $x$, $N' = \textnormal{Assign}(x = \dots)$.
    The set of \emph{reaching definitions} for variable $x$ in node $N$ is defined as
    \begin{displaymath}
        \RD(N,x) = \{N' \in \textnormal{AssignNodes}(\cfg, x) : \exists \text{ path } (N', N_1, \dots, N_n, N), \; N_i \notin \textnormal{AssignNodes}(\cfg,x)\}.
    \end{displaymath}
\end{definition}
Intuitively, the set of reaching definitions are all nodes that could have written the value of $x$ in the program state before executing node $N$.

\begin{definition}
    For CFG $\cfg$ and node $N$ the set of \emph{branch parents} is defined as
\begin{displaymath}
    \CP(N) = \{B: (B, J) \in \textnormal{BranchJoin}(\cfg) \land \exists \text{ path } (B,\dots, N,\dots,J)\}.
\end{displaymath}
    A branch node $B$ and join node $J$ are a branch-join pair $(B, J) \in \textnormal{BranchJoin}(\cfg)$ if they belong to the same if-statement (see the translation rule for if-statements to CFGs).
\end{definition}
The set of branch parents are all parent branch nodes of $N$ that determine if the program branch that $N$ belongs to is executed.

Lastly, we define the set of all strings that the address expression of a sample statement may evaluate to.

\begin{definition}
    For a sample CFG node $N = \textnormal{Assign}(x = \kw{sample}(E_0, f(E_1,\dots,E_n)))$, we define the set of all possible addresses as
    \begin{displaymath}
        \textnormal{addresses}(N) = \{\progstate(E_0): \progstate \in \progstates\} \subseteq \strings.
    \end{displaymath}
\end{definition}

\begin{algorithm}[h]
\caption{Computing the provenance set $\staticprov(N,x)$ statically from the CFG.}\label{alg:staticprov}
\begin{algorithmic}[1]
\STATE \textbf{Input} CFG node $N$, variable $x$
\STATE ${\revcolor \textnormal{result}} \gets \emptyset$
\STATE queue $\gets [(N,x)]$
\WHILE{queue is not empty}
    \STATE $(N, x)$ $\gets$ queue.pop()
    \FOR{$N' \in \RD(N,x)$}
        \IF{$N' = \text{Assign}(x = \kw{sample}(E_0, f(E_1,\dots,E_n)))$}
            \STATE ${\revcolor \textnormal{result}} \gets {\revcolor \textnormal{result}}\cup \text{addresses}(N')$\hfill $\rhd$ in practice: ${\revcolor \textnormal{result}} \gets {\revcolor \textnormal{result}} \cup \{N'\}$
            \STATE $E' \gets E_0$ \hfill $\rhd$ $E_0$ comes from pattern-matching $N'$
        \ELSIF{$N' = \text{Assign}(x = E)$}
            \STATE  $E' \gets E$ \hfill $\rhd$ $E$ comes from pattern-matching $N'$
        \ENDIF
        \FOR{$y \in \text{vars}(E')$}
            \IF{is\_unmarked($(N',y)$)}
                \STATE mark($(N',y)$)
                \STATE queue.push($(N',y)$)
            \ENDIF
        \ENDFOR
        \FOR{$N_\text{bp} \in \CP(N')$ \hfill $\rhd$ {$N_\text{bp} = \text{Branch}(E_{N_\text{bp}})$}}
            \FOR{$y \in \text{vars}(E_{N_\text{bp}})$}
                 \IF{is\_unmarked($(N_\text{bp},y)$)}
                    \STATE mark($(N_\text{bp},y)$)
                    \STATE queue.push($(N_\text{bp},y)$)
                \ENDIF
            \ENDFOR
        \ENDFOR
    \ENDFOR
\ENDWHILE
\RETURN {\revcolor \textnormal{result}}
\end{algorithmic}
\end{algorithm}

Now we are able to define the algorithm to statically approximate the provenance of any variable~$x$ at any node $N$ denoted by $\staticprov(N,x)$ in \cref{alg:staticprov}.
This algorithm is inspired by standard methods for dependence analysis.
In the algorithm, we use $\textnormal{vars}(E)$, which is defined as the set of all program variables in expression $E$.
It is often useful to determine the provenance of an entire expression instead of only a single variable.
For expressions $E$, we lift the definition of $\staticprov$
\begin{equation}\label{eq:prov-expr}
    \staticprov(N,E) = \medcup_{y \in \textnormal{vars}(E)} \staticprov(N,y).
\end{equation}

The algorithm works as follows.
If we want to find $\staticprov(N,x)$, we first have to find all reaching definitions for $x$.
If the reaching definition $N'$ is a sample node, $x = \kw{sample}(E_0,\dots)$, then the value of $x$ can only {\revcolor depend} on the addresses generated by $E_0$, $\textnormal{addresses}(N')$, and the provenance of $E_0$.
If $N'$ is an assignment node, $x = E$, we have to recursively find the provenance of the expression~$E$.
Lastly, there can be multiple reaching definitions for $x$ in different branches and the value of $x$ also depends on the branching condition.
Thus, we also find all branch parents and recursively determine their provenance and add it to the final result.

In practice the potentially infinite set $\textnormal{addresses}(N)$ can be represented by the CFG node $N$ itself.
As remarked on line 8, an implementation of \cref{alg:staticprov} would return a set of CFG sample nodes rather than a potentially infinite set of strings. {\revcolor We call this version of the algorithm $\staticprovnode$ and we have $\staticprov(N,x) = \bigcup_{N' \in \staticprovnode(N,x)} \textnormal{addresses}(N')$}.

We get a clearer sense of the introduced definitions and \Cref{alg:staticprov} by considering a simple program in \Cref{fig:rd_bp_prov}.
The right panel shows the CFG with reaching definitions and branch parents for several nodes drawn as arrows.
These arrows illustrate how \Cref{alg:staticprov} computes the provenance of variables \texttt{m} and \texttt{s} at the sample node for {\revcolor address~$\addr{x}$}.
Note that if we were to change the if statement to \texttt{(if (b == 1) then (m = 1) else (m = 1))}, then the {\revcolor address} $\addr{b}$ would still be in the provenance set of \texttt{m}.
This illustrates that the presented algorithm can only over-approximate the provenance and not compute it exactly.

\begin{figure}[h]
    \centering
\begin{minipage}[c]{0.5\linewidth}

\begin{lstlisting}[style=Python, numbers=none]
b = sample("b", Bernoulli(0.5))
s = sample("s", InverseGamma(1.,1.))
if b == 1 then
    m = sample("mu", Normal(0.,1.))
else
    m = 1
x = sample("x", Normal(m, s))
\end{lstlisting}
{\small
\begin{align*}
    &N_7 = \textnormal{Assign}(\mathtt{x} = \dots),\,\, 
    \evalf{N_7}{\mathtt{s}}(\dict) = \dict(\addr{s}),\\
    &\evalf{N_7}{\mathtt{m}}(\dict) = {\revcolor \begin{cases}
                \dict(\addr{mu}) & \text{if } \dict(\addr{b}) = \true, \\
                1 & \text{otherwise.}
            \end{cases}} \\
\end{align*}
}
\end{minipage}
\begin{minipage}[c]{0.45\linewidth}
    \includegraphics[scale=0.85]{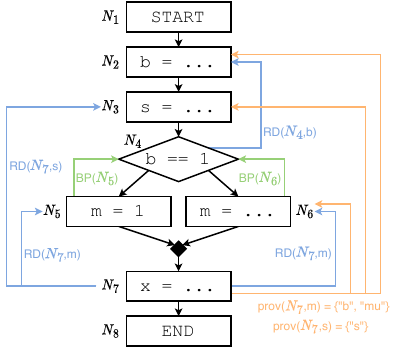}
\end{minipage}
    \caption{Reaching definitions, branch parents, and the provenance set for a simple example. The evaluation functions introduced in \cref{sec:staticprov} are also given for node $N_7$.}
    \label{fig:rd_bp_prov}
\end{figure}

\pagebreak

\paragraph{Soundness.}

Recall that the purpose of finding the provenance of variable $x$ at node $N$ is to determine which addresses contribute to the computation of its value.
To prove that \cref{alg:staticprov} indeed finds a sufficient set of addresses, first, we have to precisely define what we mean that a set of addresses contribute to the computation.
Suppose that there is a function $f$ that maps a trace $\dict$ to a value $v$.
The set of addresses $A$ contribute to the computation of value $v$, if any other trace $\dict'$ that has the same values as $\dict$ for all addresses $\alpha \in A$, is mapped to the same value $f(\dict') = v$.
In other words, if we change the trace $\dict$ at an address $\beta \in \strings \setminus A$, the value of $f(\dict)$ remains unchanged.

\begin{definition}\label{def:evalf}
    For a function from traces to an arbitrary set $B$, $f: \dicts \to B$, we write $f \in \restricedfs{A}{B}$ for a subset of addresses $A \subseteq \strings$, if and only if
    \begin{displaymath}
        \forall \alpha \in A\colon \dict(\alpha) = \dict'(\alpha) \implies f(\dict) = f(\dict').
    \end{displaymath}
    That is, changing the values of $\dict$ at addresses $\beta \in \strings \setminus A$ does not change the value $f(\dict)$.
\end{definition}
The next lemma states that we can always over-approximate the set of addresses that contribute to the computation of the values $f(\dict)$.
\begin{lemma}
    If $A \subseteq A'$ and $f \in \restricedfs{A}{B}$, then $f \in \restricedfs{A'}{B}$. In other words,
\begin{equation}\label{eq:restrictedfs-subset}
    A \subseteq A' \implies  \restricedfs{A}{B} \subseteq \restricedfs{A'}{B}.
\end{equation}
\end{lemma}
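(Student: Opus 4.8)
The plan is to prove the statement directly from \Cref{def:evalf}, since the claim is essentially a logical weakening of a universally quantified hypothesis. Let $A \subseteq A'$ and suppose $f \in \restricedfs{A}{B}$. To show $f \in \restricedfs{A'}{B}$, by \Cref{def:evalf} I must verify that for all traces $\dict, \dict'$, if $\dict(\alpha) = \dict'(\alpha)$ for every $\alpha \in A'$, then $f(\dict) = f(\dict')$.

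\textbf{Key step.} Fix $\dict, \dict'$ and assume $\dict(\alpha) = \dict'(\alpha)$ for all $\alpha \in A'$. Since $A \subseteq A'$, in particular $\dict(\alpha) = \dict'(\alpha)$ holds for all $\alpha \in A$. Now apply the hypothesis $f \in \restricedfs{A}{B}$ to the pair $\dict, \dict'$: this immediately yields $f(\dict) = f(\dict')$. Since $\dict, \dict'$ were arbitrary, this establishes $f \in \restricedfs{A'}{B}$, which is exactly the implication in \eqref{eq:restrictedfs-subset}.

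There is no real obstacle here; the argument is a one-line set-theoretic observation that a condition quantified over a larger set $A'$ is stronger than the same condition quantified over the subset $A$, so the premise for the $A$-version is more easily met and the conclusion transfers. The only thing to be careful about is the direction of the quantifiers: the hypothesis of \Cref{def:evalf} becomes \emph{weaker} (fewer traces $\dict'$ satisfy the antecedent) as the address set grows, which is precisely why enlarging $A$ to $A'$ preserves membership. I would keep the write-up to two or three sentences and note that it is immediate from the definition.
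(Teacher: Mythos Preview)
Your proof is correct. The paper states this lemma without proof, treating it as immediate from \Cref{def:evalf}; your direct argument is exactly the natural unpacking of that definition and is the only reasonable approach here.
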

If we combine two functions with provenance $A_1$ and $A_2$ respectively, then the resulting function has provenance $A_1 \cup A_2$. This is shown in the following lemma.
\begin{lemma}\label{lemma:restricted-composition}
    Let $f_1 \in \restricedfs{A_1}{B_1}$, $f_2 \in \restricedfs{A_2}{B_2}$, $h\colon B_1 \times B_2 \to C$. Then \[\dict \mapsto h(f_1(\dict), f_2(\dict)) \in \restricedfs{A_1 \cup A_2}{C}.\]
\end{lemma}

This brings us to the main soundness result of \cref{alg:staticprov}.
\cref{theorem-evalf} states that we can equip each CFG node $N$ with evaluation functions that depend on the addresses determined by $\staticprov$ in the sense of \cref{def:evalf}.
The evaluation functions directly map the trace $\dict$ to the values of variables at node $N$ such that the values agree with the small-step CFG semantics relation~$\cfgrel$.
Importantly, while the operational semantics are defined for each individual trace, the evaluation functions work for all traces.
In \cref{fig:rd_bp_prov}, concrete examples of these evaluation functions for a simple program can be found.

\begin{proposition}\label{theorem-evalf}
    Let $\cfg$ be the CFG of a program $S$ without while loop statements.
    For each node $N$ and variable~$x$, there exists an evaluation function
    \[\evalf{N}{x} \in \restricedfs{\staticprov(N,x)}{\values},\] such that for all traces $\dict$, {\revcolor initial program state $\progstate_0$ as in \Cref{def:prog-semantics}}, and execution sequence
    \[(\progstate_0, \text{START}) \cfgrel \cdots  \cfgrel (\progstate_i, N_i) \cfgrel \cdots \cfgrel (\progstate_l, \text{END})\]
    we have
    \begin{displaymath}
    \progstate_i(x) = \evalf{N_i}{x}(\dict).
    \end{displaymath}
    \end{proposition}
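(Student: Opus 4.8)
The plan is to prove \Cref{theorem-evalf} by induction on the structure of the program $S$ (equivalently, on the recursive CFG construction), simultaneously constructing the family of evaluation functions $\evalf{N}{x}$ for all nodes $N$ of $\cfg$ and all variables $x$. The key invariant to carry through the induction is the conjunction of two things: (1) membership $\evalf{N}{x}\in\restricedfs{\staticprov(N,x)}{\values}$, and (2) the semantic agreement $\progstate_i(x)=\evalf{N_i}{x}(\dict)$ along every execution sequence from START. For part (2) it is cleanest to strengthen the statement and argue by induction on the \emph{length} of the execution sequence $(\progstate_0,\text{START})\cfgrel\cdots\cfgrel(\progstate_i,N_i)$, showing that whenever such a prefix exists, $\progstate_i(x)=\evalf{N_i}{x}(\dict)$ for every variable $x$. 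Part (1) is then a separate structural argument about how $\staticprov$ is computed by \Cref{alg:staticprov}.

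First I would handle the base/leaf construction. For the START node and for variables never yet assigned, $\evalf{N}{x}$ is the constant function returning $\none$ (matching $\progstate_0$), which trivially lies in $\restricedfs{\emptyset}{\values}$. For an $\text{Assign}(x=E)$ node with successor $N'$, I set $\evalf{N'}{x}(\dict)=\hat E(\dict)$ where $\hat E$ is obtained from $E$ by substituting each variable $y\in\text{vars}(E)$ with $\evalf{N}{y}$ and applying the (total) built-in functions $g$; for $y\neq x$ I propagate $\evalf{N'}{y}=\evalf{N}{y}$. Using \Cref{lemma:restricted-composition} repeatedly over the function-call structure of $E$, $\hat E\in\restricedfs{\bigcup_{y\in\text{vars}(E)}\staticprov(N,y)}{\values}=\restricedfs{\staticprov(N,E)}{\values}$, which by the recursive step of \Cref{alg:staticprov} (the ELSIF branch) is contained in $\staticprov(N',x)$; then apply \cref{eq:restrictedfs-subset}. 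For a sample node $\text{Assign}(x=\kw{sample}(E_0,f(\dots)))$ with successor $N'$, I set $\evalf{N'}{x}(\dict)=\dict\big(\widehat{E_0}(\dict)\big)$ where $\widehat{E_0}$ is the evaluation function for $E_0$ at the sample node; this depends on $\dict$ both through $\widehat{E_0}\in\restricedfs{\staticprov(N,E_0)}{\values}$ and through the lookup at an address lying in $\text{addresses}(N')$, so by \Cref{lemma:restricted-composition}-style reasoning it lies in $\restricedfs{\text{addresses}(N')\cup\staticprov(N,E_0)}{\values}$, which is exactly (a subset of) what line 8–9 of \Cref{alg:staticprov} contributes to $\staticprov(N',x)$.

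Next I would treat the control-flow combinators. Sequencing is immediate from the inductive hypotheses on the two sub-CFGs, since the CFG is literally stitched. The interesting case is the if-statement with branch node $B=\text{Branch}(E)$, join node $J$, and sub-CFGs $\cfg_1,\cfg_2$. Inside each branch, $\CP$ of every node contains $B$, so by the branch-parent loop (lines 20–27) every $\staticprov(N,x)$ for $N$ in $\cfg_i$ already includes $\staticprov(B,E)$; I therefore define the evaluation functions inside branch $i$ relative to the entry state $\progstate(B)$ restricted to that branch, and since entry to the branch happens only when $\progstate(E)$ has the appropriate truth value — itself a function of $\dict$ via $\evalf{B}{\cdot}$ — consistency is maintained. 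At the join node $J$ and its successor, for a variable $x$ the set $\RD(J\text{'s successor},x)$ (or, going through $J$, the reaching definitions flowing from both branches) contains assign nodes from whichever branch(es) assign $x$; I define
\[
\evalf{J}{x}(\dict)=
\begin{cases}
\evalf{N_1}{x}(\dict) & \text{if }\widehat{E}(\dict)=\true,\\
\evalf{N_2}{x}(\dict) & \text{otherwise,}
\end{cases}
\]
where $N_1,N_2$ are the last nodes of the two branches (or $\evalf{B}{x}$ on the side that does not reassign $x$), and $\widehat E=\evalf{B}{E}$. This function's provenance is $\staticprov(B,E)\cup\staticprov(N_1,x)\cup\staticprov(N_2,x)$, and one checks this equals (is contained in) $\staticprov(J,x)$ as produced by the algorithm — the reaching-definition step pulls in both $\evalf{N_i}{x}$'s provenances via line 6 iterating over $N'\in\RD$, and the branch-parent step pulls in $\staticprov(B,E)$. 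Semantic agreement at $J$ follows by case-splitting on $\widehat E(\dict)$ and invoking the branch-internal inductive hypotheses; note the example functions displayed in \Cref{fig:rd_bp_prov} for $\evalf{N_7}{\mathtt m}$ are exactly of this form, which is a useful sanity check.

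The main obstacle I anticipate is the bookkeeping around the interaction of \emph{reaching definitions} with \emph{branch parents} at join points — i.e., making precise that the set $\staticprov(N',x)$ returned by \Cref{alg:staticprov} really does contain every address that the hand-constructed $\evalf{N'}{x}$ depends on. Concretely, one must argue that for any reaching definition $N''\in\RD(N,x)$, the algorithm, by pushing $(N'',y)$ for $y\in\text{vars}(E'')$ and $(N_{\text{bp}},y)$ for $N_{\text{bp}}\in\CP(N'')$ onto the queue, transitively accumulates all of $\staticprov(N'',x)$ and all branch conditions governing the path from $N''$ to $N$ — and that this matches the recursive/conditional structure of $\evalf{}{}$. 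This is essentially a statement that the worklist fixpoint computed by the algorithm is a sound over-approximation of a syntactically-defined dependency closure; I would isolate it as a small lemma (by induction on the queue processing, or equivalently on path length in the CFG) and then feed it into the structural induction above via \cref{eq:restrictedfs-subset}. The termination / well-definedness of the mutual recursion defining $\evalf{}{}$ is not an issue here because there are no while loops, so the CFG is a DAG and we can define $\evalf{N}{\cdot}$ in topological order.
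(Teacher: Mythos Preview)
Your proposal is correct and follows essentially the same approach as the paper: construct the $\evalf{N}{x}$ by induction over the DAG (the paper phrases this as a direct topological induction with a case split on the predecessor's node type rather than structural induction on $S$, but the two are equivalent for loop-free programs and you arrive at the topological-order formulation by the end), then establish semantic agreement by a separate induction on the execution sequence. The supporting lemma you anticipate about the worklist output is exactly what the paper isolates just before the main proof, recording the needed subset properties of $\staticprov$ with respect to reaching definitions and branch parents.
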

\begin{proof}
   The CFG $G$ of a program without while loops is a directed acyclic graph and the statement is proven by induction on $G$. See Appendix B.2.
\end{proof}
Intuitively, $\evalf{N}{x}$ computes the value of $x$ before executing $N$.
These evaluation functions $\evalf{N}{x}$ can be lifted to evaluation functions for expressions $\evalf{N}{E}$:
\begin{displaymath}
    \evalf{N}{c}(\dict) = c, \quad  \evalf{N}{g(E_1,\dots,E_n)}(\dict) = g(\evalf{N}{E_1}(\dict),\dots,\evalf{N}{E_n}(\dict)).
\end{displaymath}
By \cref{lemma:restricted-composition}, if $\evalf{N}{y} \in \restricedfs{\staticprov(N,y)}{\values}$ for all $y \in \textnormal{vars}(E)$, then  $\evalf{N}{E} \in \restricedfs{\staticprov(N,E)}{\values}$.

\subsection{Static Factorisation Theorem for Programs without Loops}\label{sec:theorem-without-loops}
With \cref{theorem-evalf}, we can prove the first factorisation theorem by combining the evaluation functions $\evalf{N}{x}$ to express the program density $p$ in a factorised form.
\begin{theorem}\label{theorem:simple-factor}
    Let $\cfg$ be the CFG for a program $S$ without while loop statements.
    Let $N_1,\dots,N_K$ be all sample nodes in $\cfg$.
    For each sample node $N_k = \textnormal{Assign}(x_k = \kw{sample}(E_0^k, f^k(E_1^k,\dots,E_{n_k}^k)))$, let
    \[A_k = \textnormal{addresses}(N_k) \cup \medcup_{i = 0}^{n_k} \staticprov(N_k, E_i^k) \cup \medcup_{N' \in \CP(N_k)}\staticprov(N',{\revcolor \textnormal{condexp}(N')}).\]
    Then, there exist functions $p_k \in  \restricedfs{A_k}{\pdfvalues}$ such that {\revcolor for all $\dict\in\dicts$}, if $p_S(\dict) \neq \undefval$, then
    \[p_S(\dict) = p_\cfg(\dict) = \prod_{k=1}^K p_k(\dict).\]
\end{theorem}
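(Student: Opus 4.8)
The plan is to trace a single terminating execution of the CFG~$\cfg$ from $\text{START}$ to $\text{END}$ and peel off one density factor at each of the $K$ sample nodes, using \Cref{theorem-evalf} to replace the ``per-trace'' program state by ``global'' evaluation functions. Fix a trace~$\dict$ with $p_S(\dict) \neq \undefval$. By \Cref{theorem:CFG-semantics-equivalence} we have $p_S(\dict) = p_\cfg(\dict) = \progstate_l(\probvar)$ for the unique execution sequence $(\progstate_0,\text{START}) \cfgrel \cdots \cfgrel (\progstate_l,\text{END})$. Since $\probvar$ starts at $1$ and is only ever modified at sample nodes by multiplication (rule~\eqref{eq:sample-semantics-cfg}), the final value $\progstate_l(\probvar)$ is exactly the product, over those indices $i$ at which the execution visits a sample node $N_i = N_k$, of $\pdf_{f^k}\bigl(\progstate_i(E_0^k);\progstate_i(E_1^k),\dots,\progstate_i(E_{n_k}^k)\bigr)$ evaluated at $\dict(\progstate_i(E_0^k))$. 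The subtlety is that each sample node is visited at most once (the CFG of a loop-free program is a DAG), but possibly \emph{not at all} if it lies on an unexecuted branch; so the product over executed sample nodes is a sub-product of the intended $\prod_{k=1}^K p_k$.

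The next step is to define, for each $k$, the factor~$p_k$ so that it evaluates to the correct $\pdf_{f^k}$-value when $N_k$ is executed and to~$1$ when it is not. Using \Cref{theorem-evalf}, set the ``raw'' density contribution to $D_k(\dict) \deq \pdf_{f^k}\bigl(\evalf{N_k}{E_0^k}(\dict);\evalf{N_k}{E_1^k}(\dict),\dots,\evalf{N_k}{E_{n_k}^k}(\dict)\bigr)$ evaluated at $\dict\bigl(\evalf{N_k}{E_0^k}(\dict)\bigr)$; by the lifting of evaluation functions to expressions and \Cref{lemma:restricted-composition}, $D_k \in \restricedfs{\textnormal{addresses}(N_k) \cup \bigcup_{i=0}^{n_k}\staticprov(N_k,E_i^k)}{\pdfvalues}$ (the $\textnormal{addresses}(N_k)$ term accounts for the dependence on $\dict$ through the argument address itself). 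To detect whether $N_k$ is actually reached, I introduce an indicator built from the branch parents: for each $(B,J) \in \CP(N_k)$, the node $N_k$ is reached along that branch iff $\evalf{B}{\textnormal{condexp}(B)}(\dict)$ takes the value selecting $\textnormal{consequent}(B)$ or $\textnormal{alternative}(B)$ according to which side $N_k$ sits on. Conjoining these over all $B \in \CP(N_k)$ gives a boolean function $\provcontrol_k(\dict) \in \restricedfs{\bigcup_{B\in\CP(N_k)}\staticprov(B,\textnormal{condexp}(B))}{\{\true,\false\}}$ that is $\true$ exactly when the execution reaches $N_k$. Then define $p_k(\dict) \deq D_k(\dict)$ if $\provcontrol_k(\dict) = \true$ and $p_k(\dict) \deq 1$ otherwise; by \Cref{lemma:restricted-composition} this lies in $\restricedfs{A_k}{\pdfvalues}$ as required (a small subtlety: when $\provcontrol_k$ is $\false$ we must \emph{not} look at $D_k$, which is handled by defining $p_k$ via a case split on $\provcontrol_k$ first — this is exactly what keeps $p_k$ measurable/well-defined and within the claimed address set). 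For $\dict$ with $p_S(\dict)=\undefval$ the value of $p_k$ is irrelevant to the statement.

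It then remains to verify the identity $p_\cfg(\dict) = \prod_{k=1}^K p_k(\dict)$ for every $\dict$ with $p_S(\dict)\neq\undefval$. The key observation is that $\provcontrol_k(\dict)=\true$ iff node $N_k$ appears in the (unique, terminating) execution sequence for $\dict$ — this needs a short lemma relating path membership in a loop-free CFG to the conjunction of branch-condition values along the enclosing branch-join pairs, and relating $\progstate_i(E)$ at the visited node to $\evalf{N_i}{E}(\dict)$ via \Cref{theorem-evalf}. Granting that: for each executed sample node $N_k$ we get $p_k(\dict) = D_k(\dict) = \pdf_{f^k}(\dots)$, matching the factor that rule~\eqref{eq:sample-semantics-cfg} multiplied into $\probvar$; for each non-executed sample node $p_k(\dict)=1$ and contributes nothing. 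Hence $\prod_{k=1}^K p_k(\dict)$ equals the product accumulated in $\progstate_l(\probvar) = p_\cfg(\dict)$, which is $p_S(\dict)$.

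\textbf{Main obstacle.} I expect the delicate point to be the bookkeeping around \emph{reachability of a sample node as a function of the trace}: showing that the conjunction of branch-parent conditions (evaluated via the global functions $\evalf{B}{\cdot}$) correctly and exactly characterises which sample nodes the trace-$\dict$ execution visits, and that along such an execution $\progstate_i(E_j^k) = \evalf{N_k}{E_j^k}(\dict)$. This is where \Cref{theorem:CFG-semantics-equivalence}, \Cref{theorem-evalf}, and the structural properties of the if-translation (each $N_k$ lying strictly between its branch-join pairs, $\CP$ capturing exactly the guards) must be combined carefully; everything else (the multiplicative structure of $\probvar$, the provenance-set arithmetic via \Cref{lemma:restricted-composition}) is routine. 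A secondary care point is ordering the case split so that $p_k$ never dereferences $D_k$ on traces where the sample node is unreachable, keeping $p_k$ honestly in $\restricedfs{A_k}{\pdfvalues}$.
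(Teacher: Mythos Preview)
Your proposal is correct and follows essentially the same route as the paper: the paper's $b_k$ is your $\provcontrol_k$, and its factor $p_k(\dict) = \dirac{b_k(\dict)}\,\pdf_{f^k}(\dots) + (1-\dirac{b_k(\dict)})$ is the arithmetic encoding of your case split; the paper likewise asserts (without further argument) that $b_k(\dict)=\true$ iff $N_k$ lies on the execution path, and then appeals to \Cref{theorem-evalf} and \Cref{lemma:restricted-composition} exactly as you do. One minor point: your worry about ``not dereferencing $D_k$'' when $\provcontrol_k=\false$ is unnecessary, since by \Cref{theorem-evalf} the evaluation functions $\evalf{N_k}{E_i^k}$ are defined for \emph{every} node and trace, so $D_k$ is always well-defined and already lies in $\restricedfs{A_k}{\pdfvalues}$; the paper's $\delta$-formula simply composes $b_k$ and $D_k$ and invokes \Cref{lemma:restricted-composition} once.
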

\begin{proof}
For each sample node $N_k$, define  $b_k(\dict) \coloneqq \bigwedge_{N' \in \CP(N_k)} t^k_{N'}(\evalf{N'}{{\revcolor \textnormal{condexp}(N')}}(\dict))$ where $t^k_{N'}(v) = v$ if $N_k$ is in the consequent branch of $N'$ and $t^k_{N'}(v) = \neg v$ if $N_k$ is in the alternate branch.
By \Cref{lemma:restricted-composition}, {\revcolor we have
$b_k \in \restricedfs{\medcup_{N' \in \CP(N_k)}\staticprov(N', \textnormal{condexp}(N'))}{\{\true,\false\}}.$

Above, $\evalf{N'}{ \textnormal{condexp}(N')}(\dict)$} are precisely the evaluations of the branch conditions for node $N_k$.
Thus, $b_k(\dict)$ is $\true$ if $N_k$ is in the execution sequence for $\dict$ else $\false$.
Define the factor $p_k$ as
\begin{equation}\label{eq:factor-construction}
    p_k(\dict) \coloneqq \dirac{b_k(\dict)} \pdf_{f^k}\left(\dict(\evalf{N_k}{E_0^k}(\dict));\evalf{N_k}{E_1^k}(\dict), \dots, \evalf{N_k}{E_{n_k}^k}(\dict)\right) + (1-\dirac{b_k(\dict)}).
\end{equation}
By \Cref{theorem-evalf}, $\evalf{N_k}{E_i^k} \in \restricedfs{\staticprov(N_k, E_i^k)}{\values}$ and $\evalf{N_k}{E_0^k} \in \restricedfs{\staticprov(N_k, E_0^k)}{\textnormal{addresses}(N_k)}$.
Thus, it follows from \Cref{lemma:restricted-composition} that $p_k \in \restricedfs{A_k}{\pdfvalues}$ by construction.
Lastly, {\revcolor from \Cref{theorem-evalf}} we know that for the end-state $\sigma_l(\probvar) = \evalf{\text{END}}{\probvar}(\dict) = p_\cfg(\dict)$.
Since $b_k(\dict) = \true$ if and only if $N_k$ is in the execution sequence for $\dict$ and {\revcolor since the $\pdf_{f^k}(\dots)$ term in \eqref{eq:factor-construction} is precisely the value that is multiplied to $\probvar$ at sample statements \eqref{eq:sample-semantics}}, we have $p_\cfg(\dict) = \prod_{k=1}^K p_k(\dict)$.
\end{proof}

Again, consider the example program in \cref{fig:rd_bp_prov}. The density of this model factorises like below:
\begin{align*}    
    p(\dict) = &\,\,\pdf_\text{Bernoulli}(\dict(\addr{b}); 0.5)  & \in \restricedfs{\{\addr{b}\}}{\pdfvalues}\\
                & \cdot\pdf_\text{InverseGamma}(\dict(\addr{s}); 1, 1) & \in \restricedfs{\{\addr{s}\}}{\pdfvalues}\\
                & \cdot\left(\dirac{\dict(\addr{b})}\pdf_\text{Normal}(\dict(\addr{mu}); 0, 1) + (1-\dirac{\dict(\addr{b})})\right) & \in \restricedfs{\{\addr{b},\addr{mu}\}}{\pdfvalues}\\
                &\cdot \pdf_\text{Normal}(\dict(\addr{x}); \evalf{N_7}{\mathtt{m}}(\dict), \evalf{N_7}{\mathtt{s}}(\dict)) & \in \restricedfs{\{\addr{x},\addr{b},\addr{mu}, \addr{s}\}}{\pdfvalues}
\end{align*}
This is exactly the factorisation obtained from \cref{theorem:simple-factor}.
However, if we would replace the if-statement with a nonsensical one, \texttt{(if (b == 1) then (m = 1) else (m = 1))}, then the fourth factor is in $\restricedfs{\{\addr{x},\addr{b}, \addr{s}\}}{\pdfvalues}$ with $\evalf{N_7}{\mathtt{m}}(\dict) = \textnormal{ife}(\dict(\addr{b}), 1, 1) = 1$, {\revcolor where we use $\textnormal{ife}$ as a short-hand for the if-else case distinction.}
In this case, the static analysis over-approximates the true provenance with the spurious dependency on $\addr{b}$.

\subsubsection{Equivalence to Bayesian Networks}\label{sec:equivalence-to-bayesian-networks}
If we restrict the way sample statements are used in our PPL, we can establish a known equivalence to Bayesian networks.
A Bayesian network~\cite{koller2009pga} is a directed acyclic graph $\mathcal{G}$ over a finite set of nodes $X_1, \dots, X_n$ which represent random variables, such that their joint distribution factorises according to the graph $\mathcal{G}$,
\[P(X_1,\dots,X_n) = \prod_{i=1}^n P(X_i\;|\; \textnormal{parents}_\mathcal{G}(X_i)).\]
If we assume that all sample statements of a program have a constant unique address~$\alpha_k$, $N_k = \textnormal{Assign}(x_k = \kw{sample}(\alpha_k, f^k(E_1^k,\dots,E_{n_k}^k)))$, then we can identify each factor $p_k\in \restricedfs{A_k}{\pdfvalues}$ of \cref{theorem:simple-factor} with the unique address $\alpha_k \in \strings$.
We construct the Bayesian network $\mathcal{G}$ by mapping each address to a random variable $X_{\alpha_k}$.
Since $\alpha_k$ is unique, $p_k$ is the only relevant factor for $X_{\alpha_k}$ and can be interpreted as conditional density function of $X_{\alpha_k}$.
To see this, rewrite
\[p_k(\dict) = \dirac{b_k(\dict)} \pdf_{f^k}\left(\dict(\alpha_k);\evalf{N_k}{E_1^k}(\dict), \dots, \evalf{N_k}{E_{n_k}^k}(\dict)\right) + (1-\dirac{b_k(\dict)})\mathbf{1}_{\none}\left(\dict(\alpha_k)\right),\]
where the function $\mathbf{1}_{\none}(v)$ equals 1 if $v = \none$ else 0.
If the node $N_k$ is not executed, then the value of $X_{\alpha_k}$ is  assumed to be $\none$ and we add the $\mathbf{1}_{\none}\left(\dict(\alpha_k)\right)$ term to $p_k$.
As $\pdf_{f_k}$ is a probability density function and $\mathbf{1}_{\none}$ is the density of a Dirac distribution centered at $\none$, the factor $p_k$ is also a density function for each choice of values in $\dict$ at addresses $\alpha \in A_k \setminus \{\alpha_k\}$.
Thus, we introduce the edge $\alpha_j \to \alpha_k$ if $\alpha_j \in A_k \setminus \{\alpha_k\}$, such that $\textnormal{parents}_\mathcal{G}(X_{\alpha_k}) = A_k \setminus \{\alpha_k\}$ and
\[P(X_{\alpha_k}\,|\, \textnormal{parents}_\mathcal{G}(X_{\alpha_k})) = p_k(\{\alpha \mapsto X_\alpha\colon \alpha \in A_k\}).\]
In the above construction, it is important that each sample statement has a \emph{unique} address.
This guarantees no cyclic dependencies and a well-defined conditional probability distributions.

The Bayesian network equivalent to the example program of \cref{fig:rd_bp_prov} is drawn below.
Since the value of the program variable $\mathtt{m}$ depends on which program branch is taken during execution, the random variable $X_\addr{x}$ not only depends on $X_\addr{mu}$, but also on $X_\addr{b}$.\newline
\begin{minipage}[c]{0.2\linewidth}
\includegraphics[scale=1.]{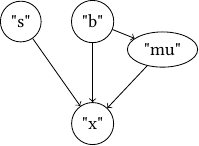}
\end{minipage}
\begin{minipage}[c]{0.75\linewidth}
\begin{align*}
    P(X_\addr{b}) &= \pdf_\textnormal{Bernoulli}(X_\addr{b}; 0.5)\\
    P(X_\addr{s}) &= \pdf_\textnormal{InverseGamma}(X_\addr{s}; 1, 1)\\
    P(X_\addr{mu} | X_\addr{b}) &= \dirac{X_\addr{b}}\pdf_\text{Normal}(X_\addr{mu}; 0, 1) + (1-\dirac{X_\addr{b}}) \mathbf{1}_{\none}(X_\addr{mu})\\
    P(X_\addr{x} | X_\addr{b}, X_\addr{mu}, X_\addr{s}) &= \pdf_\textnormal{Normal}(X_\addr{x}; \textnormal{ife}(X_\addr{b}, X_\addr{mu}, 1), X_\addr{s})\\ 
\end{align*}
\end{minipage}

\subsubsection{Equivalence to Markov Networks}\label{sec:markov-without-loops}

If the sample addresses are not unique or not constant, then the factorisation is in general not a Bayesian network.
However, we can show equivalence to Markov networks.
A Markov network~\cite{koller2009pga} is an \emph{undirected} graph $\mathcal{H}$ over random variables $\vv{X}$ that represents their dependencies.
We consider Markov networks where the joint distribution of $\vv{X}$ factorises over a set of cliques $\mathcal{D} = \{D_1,\dots,D_n\} \subseteq \textnormal{cliques}(\mathcal{H})$, with $\bigcup_{i=1}^n D_i = \vv{X}$:
\[P(\vv{X}) = \prod_{i = 1}^n \phi_i(D_i).\]  
We again identify a random variable $X_\alpha$ for each address $\alpha \in \medcup_{k=1}^K A_k$ and construct a Markov network $\mathcal{H}$ by connecting node $X_\alpha$ to $X_\beta$ if $\alpha \in A_k \land \beta \in A_k$ for any $k$.
Thus, $D_k=\{X_\alpha \colon \alpha \in A_k\}$ forms a clique and $p_S$ factorises over $\mathcal{H}$ with $\phi_{k}(D_k) = p_k(\{\alpha \mapsto X_\alpha \colon \alpha \in A_k\})$.

Consider the program with constant but non-unique sample addresses in \cref{lst:hurricane1}.
With nine sample statements we get a Markov network over $\{X_\addr{F}, X_\addr{P0}, X_\addr{P1}, X_\addr{D0}, X_\addr{D1}\}$. 
The nine factors correspond to the address sets $\{\addr{F}\}$, $\{\addr{F},\addr{P0}\}$, $\{\addr{F},\addr{P1}\}$, $\{\addr{F},\addr{D0},\addr{P1}\}$, $\{\addr{F},\addr{D1},\addr{P0}\}$,\linebreak $2\times\{\addr{F},\addr{P0},\addr{D0}\}$, and $2\times\{\addr{F},\addr{P1},\addr{D1}\}$.
In the consequent branch of the program, the distribution in the sample statement with address $\addr{P1}$ depends on $\addr{D0}$, while in the alternative branch, the distribution of $\addr{P0}$ depends on $\addr{D1}$.
Thus, there is the dependency cycle $\addr{P0} \to \addr{D0} \to \addr{P1} \to \addr{D1} \to \addr{P0}$, which means the obtained factorisation is not a Bayesian network representation for the program.
Note that the program can also be rewritten with dynamic addresses to eliminate the if statement. 
In this case, the factorisation still admits cyclic dependencies. 
\begin{figure}[h]
    \centering
\begin{lstlisting}[style=Python, numbers=none, caption=Probabilisitic program for the hurricane example of \cite{milch2005contingentbayesnet}., label=lst:hurricane1]
first_city_ixs = sample("F", Bernoulli(0.5))
if first_city_ixs == 0 then
    prep_0 = sample("P0", Bernoulli(0.5))
    damage_0 = sample("D0", Bernoulli(prep_0 == 1 ? 0.20 : 0.80))
    prep_1 = sample("P1", Bernoulli(damage_0 == 1 ? 0.75 : 0.50))
    damage_1 = sample("D1", Bernoulli(prep_1 == 1 ? 0.20 : 0.80))
else
    prep_1 = sample("P1", Bernoulli(0.5))
    damage_1 = sample("D1", Bernoulli(prep_1 == 1 ? 0.20 : 0.80))
    prep_0 = sample("P0", Bernoulli(damage_1 == 1 ? 0.75 : 0.50))
    damage_0 = sample("D0", Bernoulli(prep_0 == 1 ? 0.20 : 0.80))
\end{lstlisting}
\end{figure}




Lastly, while the programs considered up until this point were equivalent to finite Bayesian networks or finite Markov networks, note that the small program in \cref{lst:poisson} is equivalent to a Markov network with an 
infinite number of nodes $\{X_\addr{n}\} \cup \{X_{\textnormal{"x}\_i\textnormal{"}}\colon i \in \Nats\}$.


\section{Static Factorisation for Programs with Loops}\label{sec:loops}

In this section, we describe how the argument for proving the factorisation theorem for programs without loops, can be extended to generalise the result to programs with loops.
\subsection{Unrolled Control-Flow Graph}\label{sec:unrolled-cfg}
The translation rules for programs to their CFG can be extended to support while loops as follows.
Let $\cfg_S$ be the sub-CFG of statement $S$. The CFG of $(\kw{while}\;E\;\kw{do}\;S)$ is given {\revcolor below on the left:}\newline
\includegraphics[scale=0.9]{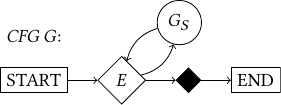}\hspace{3mm}\includegraphics[scale=0.9]{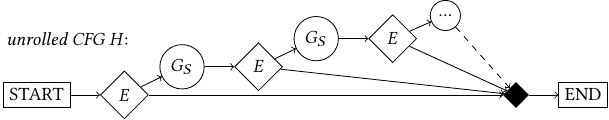}\newline
For this CFG, $\staticprov(N,x)$ is still well-defined by \cref{alg:staticprov}, but the proof of \cref{theorem-evalf} does not work anymore since it requires a directed acyclic graph.

To establish \cref{theorem-evalf} for programs with while loops, we introduce a second type of control-flow graph, the \emph{unrolled CFG}, which again is a directed acyclic graph.
All translation rules are as before except for while loops, which is given below {\revcolor and illustrated above on the right}.\newline
The resulting graph $\cfg$ contains for every $i\in\Nats$ branch nodes $B_i = \textnormal{Branch}(E)$ and \emph{copies} of the sub-CFG $\cfg_i$ of statement $S$.
It contains a single join node $J$ for which we have $(B_i,J) \in \textnormal{BranchJoin}(G)$ for all $i$.
Let $N_\textnormal{first}^i$ be successor of the start node in $G_i$ and $N_\textnormal{last}^i$ the predecessor of the end node in $G_i$.
In the new graph $G$, the first branch node $B_1$ is successor of the start node and the end node is the successor of the join node $J$.
The branch nodes $B_i$ have two successors $N_\textnormal{first}^i$ and $J$.
$N_\textnormal{last}^i$, the last node of sub-graph $G_i$, resumes in the next branch node $B_{i+1}$.

This definition implies that the number of nodes in $\cfg$ is countable.
Furthermore, since the unrolled CFG consists only of start, end, branch, join, and assign nodes, we can equip it with the same semantics as the standard CFG. As before, the semantics are equivalent.

\begin{proposition}\label{theorem:unrolled-CFG-semantics-equivalence}
For all program $S$ with CFG $\cfg$ and unrolled CFG $\unrolledcfg$, it holds that for all traces $\dict$ and {\revcolor initial states $\sigma_0$}
\[{\revcolor \bar{p}_S(\dict,\sigma_0) = \bar{p}_{\cfg}(\dict,\sigma_0) = \bar{p}_{\unrolledcfg}(\dict,\sigma_0).}\]
\end{proposition}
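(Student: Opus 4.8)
The plan is to extend the structural induction that proves \Cref{theorem:CFG-semantics-equivalence} (Appendix B.1) with a case for \kw{while} loops, establishing $\bar{p}_S(\dict,\sigma_0) = \bar{p}_\cfg(\dict,\sigma_0)$ and $\bar{p}_S(\dict,\sigma_0) = \bar{p}_\unrolledcfg(\dict,\sigma_0)$ simultaneously; the remaining equality $\bar{p}_\cfg = \bar{p}_\unrolledcfg$ then follows by transitivity. Every case other than \kw{while} is handled exactly as in Appendix B.1, because the standard CFG and the unrolled CFG are generated by identical rules outside of loop bodies, and the induction hypothesis already yields the claim for those sub-CFGs.

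For the \kw{while} case I would start from the iteration-count characterisation of the loop semantics recorded in \Cref{sec:semantics}: $(\sigma_0, \kw{while}\;E\;\kw{do}\;S) \opsemrel \sigma'$ holds iff there is $n \in \Nats_0$ and program states $\sigma_0, \sigma_1, \dots, \sigma_n = \sigma'$ with $(\sigma_i, S) \opsemrel \sigma_{i+1}$ and $\sigma_i(E) = \true$ for $i < n$, and $\sigma_n(E) = \false$. The core step is a sub-induction on $n$ showing that a START-to-END execution sequence of the loop CFG decomposes uniquely into $n$ ``laps'' --- each a START-to-END execution of the body sub-CFG $\cfg_S$ from $\sigma_i$ to $\sigma_{i+1}$, entered through the branch node $B$ with $\sigma_i(E) = \true$ --- terminated by a final visit to $B$ with $\sigma_n(E) = \false$ followed by $B \cfgrel J \cfgrel \text{END}$. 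Applying the outer induction hypothesis to $\cfg_S$ lap by lap then shows that the loop CFG admits an accepting path from $\sigma_0$ for $\dict$ precisely when the loop terminates operationally, with the same final state; since $\probvar$ is modified only at sample nodes, and exactly the same sample nodes occur in the same order in both views, the $\probvar$-component at END coincides, giving $\bar{p}_S = \bar{p}_\cfg$ on this case.

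For the unrolled CFG the argument has the same shape but exploits acyclicity. An accepting path must begin $\text{START} \cfgrel B_1$, and at each $B_i$ it either branches to the join $J$ (when the guard is $\false$) or enters the $i$-th copy of the recursively unrolled sub-CFG of $S$, whose last node resumes at $B_{i+1}$. Because the $B_i$ are pairwise distinct and the body copies pairwise disjoint, such a path visits only $B_1,\dots,B_{n+1}$, traverses the first $n$ body copies, and finishes $B_{n+1} \cfgrel J \cfgrel \text{END}$ for a unique finite $n$. Applying the induction hypothesis (now for the unrolled CFG of $S$) to each copy matches the path with $n$ operational iterations exactly as before, yielding $\bar{p}_S = \bar{p}_\unrolledcfg$. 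Non-termination is handled uniformly in both cases: if no finite $n$ with $\sigma_n(E) = \false$ exists, there is no finite START-to-END path in $\cfg$ or $\unrolledcfg$, so all three quantities equal $\undefval$.

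The main obstacle I anticipate is the bookkeeping of the nested induction: making precise the unique decomposition of a CFG execution sequence into loop iterations and aligning it with the $\textnormal{repeat}_n$ unrolling of $\opsemrel$, and --- for $\unrolledcfg$ --- justifying that although the graph has countably infinitely many nodes, every accepting path is finite and confined to the first $n+1$ branch nodes and first $n$ body copies. Some care is also needed to verify that $\probvar$-multiplications occur at the corresponding sample statements across the operational, cyclic, and unrolled views, since that --- rather than merely matching termination behaviour --- is what forces the densities themselves to agree.
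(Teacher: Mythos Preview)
Your proposal is correct and follows essentially the same approach as the paper: extend the structural induction of \Cref{theorem:CFG-semantics-equivalence} by a \kw{while} case, reduce the loop to its finite $n$-fold unrolling via the iteration-count characterisation, match this with a decomposition of the CFG (resp.\ unrolled CFG) execution into $n$ traversals of the body sub-graph, and handle non-termination by observing all three quantities are $\undefval$. The paper's proof is terser---it phrases the reduction via the syntactic intermediary $S;\dots;S$ and leaves the path-decomposition step as ``it can be shown''---but your explicit lap decomposition and sub-induction on $n$ are precisely what that step amounts to.
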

\begin{proof}
    By structural induction. See Appendix B.3.
\end{proof}

As mentioned, \cref{alg:staticprov} still works for the unrolled CFG.
However, since the unrolled CFG contains an infinite number of nodes, in practice, we run this algorithm on the standard CFG with a finite number of nodes avoiding loops by marking visited nodes (line 14 and 19).
We will show that the computations on the standard CFG produce an over-approximation of the provenance set in the unrolled CFG.
But first, we define the connection between the two graph types.
\begin{definition}
    Let $\cfg$ be the CFG and $\unrolledcfg$ the unrolled CFG of program $S$. For each node $M \in \unrolledcfg$, we define $\cfgnode(M) \in \cfg$ as the node from which $M$ was copied in the construction of $\unrolledcfg$.
\end{definition}
The definition of $\cfgnode$ is best understood when considering the example in \cref{fig:unrolled-to-standard}, where the CFG and unrolled CFG of a program are depicted side-by-side and connected with the mapping $\cfgnode$.
\begin{figure}[H]
    \centering
    \includegraphics[scale=1.]{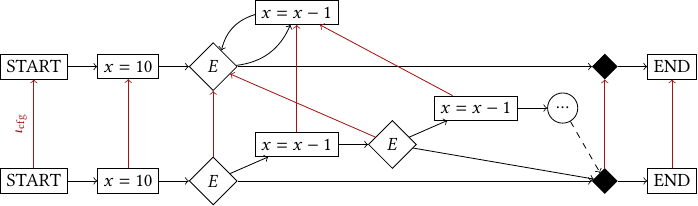}
    \caption{CFG and unrolled CFG for program \texttt{(x = 10; while (x < 10) do (x = x - 1))} and their connection via the map $\cfgnode$ shown with red edges.}
    \label{fig:unrolled-to-standard}
\end{figure}
The following lemma states that we can over-approximate the provenance set for variable $x$ at node~$M$ in the unrolled CFG $\unrolledcfg$ by applying \cref{alg:staticprov} to the corresponding node $\cfgnode(M)$ in $\cfg$.
\begin{lemma}\label{lemma:staticprov-unrolled-cfg}
    Let $\cfg$ be the CFG and $\unrolledcfg$ the unrolled CFG of program $S$. For each node $M \in \unrolledcfg$ following statements hold:
    \begin{equation}\label{eq:rd-subset}
         \{\cfgnode(M')\colon M' \in\RD(M,x)\} \subseteq \RD(\cfgnode(M), x)
    \end{equation}
    \begin{equation}\label{eq:cp-subset}
        \{\cfgnode(M')\colon M' \in\CP(M)\} = \CP(\cfgnode(M))
    \end{equation}
    \begin{equation}\label{eq:staticprov-subset}
        \staticprov(M,x) \subseteq \staticprov(\cfgnode(M),x)
    \end{equation}
\end{lemma}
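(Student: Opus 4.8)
The plan is to prove the three statements essentially by unwinding the definitions of reaching definitions, branch parents, and the provenance algorithm, and tracking how paths and node labels in the unrolled CFG $\unrolledcfg$ correspond to paths and node labels in the standard CFG $\cfg$ under the map $\cfgnode$. The key structural fact I would establish first (as a small preliminary claim) is that $\cfgnode$ is a graph homomorphism on edges in the following sense: if $M \to M'$ is an edge in $\unrolledcfg$, then either $\cfgnode(M) \to \cfgnode(M')$ is an edge in $\cfg$, or $\cfgnode(M) = \cfgnode(M')$ is the join node $J$ of a while loop (this last degenerate case comes from the $N_\textnormal{last}^i \to B_{i+1}$ edges collapsing, and the $B_i \to J$ edges). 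More importantly, $\cfgnode$ preserves node labels: a copy $M$ of a node has the same type and same associated expression/assignment as $\cfgnode(M)$, since the unrolled CFG is built by literally copying sub-CFGs. Hence $M \in \textnormal{AssignNodes}(\unrolledcfg, x)$ iff $\cfgnode(M) \in \textnormal{AssignNodes}(\cfg, x)$, $\textnormal{addresses}(M) = \textnormal{addresses}(\cfgnode(M))$, $\textnormal{vars}$ of the relevant expressions agree, etc. This label-preservation is what makes the whole correspondence go through.

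For \eqref{eq:rd-subset}: take $M' \in \RD(M,x)$, so $M' = \textnormal{Assign}(x = \dots)$ and there is a path $(M', M_1, \dots, M_n, M)$ in $\unrolledcfg$ with no $M_i$ assigning $x$. Applying $\cfgnode$ pointwise and using the edge-homomorphism claim gives a walk in $\cfg$ from $\cfgnode(M')$ to $\cfgnode(M)$; whenever two consecutive images coincide (the degenerate join case) we just drop the repetition, still obtaining a genuine path. By label-preservation none of the intermediate images assigns $x$ (and $\cfgnode(M')$ does assign $x$), so $\cfgnode(M') \in \RD(\cfgnode(M),x)$. For \eqref{eq:cp-subset}, the inclusion $\subseteq$ is analogous using that $(B,J)$ being a branch-join pair in $\unrolledcfg$ forces $(\cfgnode(B), \cfgnode(J))$ to be one in $\cfg$ (both come from the same if- or while-translation rule), and a path $(B,\dots,M,\dots,J)$ maps to one $(\cfgnode(B),\dots,\cfgnode(M),\dots,\cfgnode(J))$. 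The reverse inclusion $\supseteq$ uses that each copy $M$ sits inside copies of the enclosing branches: if $(B^\cfg, J^\cfg) \in \textnormal{BranchJoin}(\cfg)$ with a $\cfg$-path through $\cfgnode(M)$, then the specific copy $\cfg_i$ (or if-branch) containing $M$ in $\unrolledcfg$ contains a corresponding branch-join pair whose branch node maps to $B^\cfg$ and through which $M$ lies. I would phrase this via induction on the nesting depth of control structures, or by direct appeal to the recursive construction.

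Finally \eqref{eq:staticprov-subset} follows by simulating Algorithm \ref{alg:staticprov} on $\unrolledcfg$ starting from $(M,x)$ and on $\cfg$ starting from $(\cfgnode(M),x)$ simultaneously: I claim inductively that whenever $(M'',y)$ is pushed onto the queue in the $\unrolledcfg$-run, $(\cfgnode(M''),y)$ is pushed (or has already been marked) in the $\cfg$-run, and every address added to \textnormal{result} in the $\unrolledcfg$-run is added in the $\cfg$-run. The induction step processes one popped pair: the $\RD$ step is handled by \eqref{eq:rd-subset} together with label-preservation (so $\textnormal{addresses}(N')$ and $\textnormal{vars}(E')$ match), and the branch-parent step by \eqref{eq:cp-subset} together with label-preservation of branch conditions. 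Since the $\cfg$-run terminates (finitely many marked pairs) while the $\unrolledcfg$-run may not, I would state the invariant as: the set of addresses ever added in the $\unrolledcfg$-run is contained in the (finite) set returned by the $\cfg$-run, which is exactly \eqref{eq:staticprov-subset}. The main obstacle is bookkeeping the degenerate edge case where $\cfgnode$ collapses the $N_\textnormal{last}^i \to B_{i+1} \to J$ structure onto a single standard-CFG node/edge: one must check carefully that paths witnessing reaching definitions and branch parents still survive (after deleting repeated vertices) and that no spurious obstruction is introduced — but since collapsing only identifies nodes and never removes the possibility of a path, every witness in $\unrolledcfg$ does project to a witness in $\cfg$, which is all the inclusions require.
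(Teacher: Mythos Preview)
Your proposal is correct and takes essentially the same approach as the paper: the paper proves \eqref{eq:rd-subset} via a structural-induction lemma asserting that every path in $\unrolledcfg$ projects under $\cfgnode$ to a path in $\cfg$ preserving assign-node content (your edge-homomorphism claim is the atomic form of this, and would itself be established by the same structural induction), proves \eqref{eq:cp-subset} by structural induction as you suggest, and derives \eqref{eq:staticprov-subset} in one line from the first two since $\staticprov$ is defined purely in terms of $\RD$ and $\CP$---your queue-simulation argument just makes that derivation explicit. One minor inaccuracy worth correcting: there is in fact no degenerate case in your homomorphism claim---the edge $N_\textnormal{last}^i \to B_{i+1}$ in $\unrolledcfg$ maps to the back-edge $N_\textnormal{last}^S \to B$ in $\cfg$, and $B_i \to J$ maps to $B \to J$, so $\cfgnode$ is always edge-preserving---but this only simplifies your argument.
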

\begin{proof}
    {\revcolor See Appendix B.4.}
\end{proof}

\subsection{Static Factorisation Theorem for Programs with Loops}\label{sec:theorem-with-loops}

The unrolled CFG has the important property that it is a directed acyclic graph with a potentially infinite number of nodes, but with a single root node.
Further, each node $N\neq \textnormal{START}$ still has exactly one predecessor except for join nodes.
Therefore, \cref{theorem-evalf} can be generalised to unrolled CFGs with only a slight modification to the proof.

\begin{proposition}\label{theorem-evalf-loopy}
    Let $\unrolledcfg$ be the \emph{unrolled} CFG of a program $S$. 
    For each node $M$ and variable $x$, there exists an evaluation function
    \[\evalf{M}{x} \in \restricedfs{\staticprov(M,x)}{\values},\] such that for all traces $\dict$, {\revcolor initial program state $\progstate_0$ as in \Cref{def:prog-semantics}}, and execution sequence
    \[(\progstate_0, \text{START}) \cfgrel \cdots  \cfgrel (\progstate_i, M_i) \cfgrel \cdots \cfgrel (\progstate_l, \text{END})\]
    we have
    \begin{displaymath}
    \progstate_i(x) = \evalf{M_i}{x}(\dict).
    \end{displaymath}
\end{proposition}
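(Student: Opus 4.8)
The plan is to mirror the proof of \cref{theorem-evalf}, but replace the induction over a finite DAG with a well-founded induction over the unrolled CFG~$\unrolledcfg$. The key structural facts are exactly those emphasised in the paragraph preceding the statement: $\unrolledcfg$ is acyclic, it has a single root $\textnormal{START}$, and every node $M \neq \textnormal{START}$ has a unique predecessor \emph{except} join nodes, which have (countably many) branch nodes $B_i$ as predecessors. Because $\unrolledcfg$ is acyclic and finitely-branching backwards at non-join nodes, and because each node is reached from $\textnormal{START}$ by a path of finite length, the relation "$M'$ strictly precedes $M$ on some path from $\textnormal{START}$" is well-founded. First I would set up this well-founded order and proceed by induction along it: for each reachable node $M$ and each variable $x$, I define $\evalf{M}{x}$ in terms of the evaluation functions of $M$'s predecessor(s), which are available by the induction hypothesis.

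The case analysis on the type of $M$ is then essentially identical to the no-loops proof. For $M = \textnormal{START}$, set $\evalf{M}{x}(\dict) = \none$ for all program variables and $\evalf{M}{\probvar}(\dict) = 1$, matching $\sigma_0$; this lies in $\restricedfs{\emptyset}{\values} \subseteq \restricedfs{\staticprov(M,x)}{\values}$. For an assign node $M = \textnormal{Assign}(x = E)$ with unique predecessor $M'$, set $\evalf{M}{x}(\dict) = \evalf{M'}{E}(\dict)$ and $\evalf{M}{y} = \evalf{M'}{y}$ for $y \neq x$; the provenance bookkeeping follows from \cref{lemma:restricted-composition} together with the fact that \cref{alg:staticprov} includes $\staticprov(M', E)$ into $\staticprov(M, x)$ via the reaching-definition edge (and into $\staticprov(M, y)$ it only keeps what was already there, since $M$ does not reassign $y$). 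For a sample node $M = \textnormal{Assign}(x = \kw{sample}(E_0, f(E_1,\dots,E_n)))$ with predecessor $M'$, set $\evalf{M}{x}(\dict) = \dict(\evalf{M'}{E_0}(\dict))$ and update $\probvar$ by multiplying $\pdf_f(\evalf{M}{x}(\dict); \evalf{M'}{E_1}(\dict),\dots)$, again controlling provenance via $\textnormal{addresses}(M)$, the $\staticprov(M', E_i)$, and \cref{lemma:restricted-composition}. For a branch node $M = \textnormal{Branch}(E)$ with predecessor $M'$, both successors inherit $\evalf{M'}{\cdot}$ unchanged. For a join node $M = J$ — this is the only genuinely new case — the predecessors are the branch nodes $B_i$, and the single incoming edge actually taken in an execution is determined by which $B_i$ was the last to transfer control; I define $\evalf{J}{x}(\dict)$ by the case split on the (mutually exclusive, over-approximated-by-$\CP$) branch conditions, exactly as the "$b_k$" construction in the proof of \cref{theorem:simple-factor}, and conclude membership in the right $\restricedfs{\cdot}{\cdot}$ class using \cref{lemma:restricted-composition} and the fact that $\CP(J)$ collects all the relevant branch conditions. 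Finally, the claim $\sigma_i(x) = \evalf{M_i}{x}(\dict)$ along a concrete execution sequence follows by a second, ordinary induction on the position $i$ in that (finite) sequence, using the CFG transition rules and matching each transition with the corresponding defining clause above.

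The main obstacle is the join node. Unlike the no-loops case, $J$ in the unrolled CFG has \emph{infinitely many} predecessors $B_1, B_2, \dots$, so one cannot naively write a finite case expression; one must argue that (i) along any terminating execution exactly one $B_i$ reaches $J$, so the value is well-defined pointwise, and (ii) the \emph{provenance} of the resulting function is still bounded by $\staticprov(J,x)$, which is finite as a set of CFG-nodes though possibly infinite as a set of addresses. Here I would lean on \cref{lemma:staticprov-unrolled-cfg}: it is enough to show $\evalf{J}{x} \in \restricedfs{\staticprov(J,x)}{\values}$ where $\staticprov$ is the unrolled-CFG provenance, and the key observation is that the set of branch conditions $\{\textnormal{condexp}(B_i)\}$ and the relevant reaching definitions, pulled back through $\cfgnode$, all live inside the finite standard CFG and hence their provenances are uniformly contained in a single finite set of addresses. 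A secondary, more technical point is confirming that well-founded induction is actually legitimate here — i.e. that "precedes on a path from $\textnormal{START}$" has no infinite descending chains — which holds because the unrolled CFG is acyclic and every node sits at a finite distance from the root; I would state this explicitly as a preliminary lemma (or cite the structural description of $\unrolledcfg$ from \cref{sec:unrolled-cfg}) before launching the induction. I would defer all the routine verifications to the appendix, noting only that the modifications relative to the proof of \cref{theorem-evalf} are confined to (a) the induction principle and (b) the join-node clause.
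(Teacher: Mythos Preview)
Your outline matches the paper's approach: well-founded induction on the parent relation in $\unrolledcfg$ (which the paper also justifies explicitly), reusing all cases from \cref{theorem-evalf} except for the while-loop join $J$ with infinitely many predecessors $B_i$. The paper's definition there is the natural one you gesture at: $\evalf{J}{x}(\dict) = \evalf{B_i}{x}(\dict)$ for the unique $i$ with $\evalf{B_j}{E}(\dict) = \true$ for all $j < i$ and $\evalf{B_i}{E}(\dict) = \false$, and $\none$ otherwise. One small slip worth flagging: your case split is keyed on the type of $M$ rather than of its predecessor, so your $\evalf{M}{x}$ is the value \emph{after} executing $M$, off by one from the required $\sigma_i(x) = \evalf{M_i}{x}(\dict)$; you would catch this in your second induction.

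Where the proposal would actually stall is the provenance argument for the while-join. Neither $\CP(J)$ nor \cref{lemma:staticprov-unrolled-cfg} does the job. The loop branches $B_i$ are not in $\CP(J)$ --- the join is reached on every terminating run, so no $B_i$ guards it --- and in any case \cref{alg:staticprov} never consults $\CP(N)$ directly, only $\CP(N')$ for $N' \in \RD(N,x)$. As for \cref{lemma:staticprov-unrolled-cfg}, it gives $\staticprov(J,x) \subseteq \staticprov(\cfgnode(J),x)$, which enlarges the target set rather than bounding $\bigcup_i \staticprov(B_i,E)$ from above. The paper's argument stays entirely inside $\unrolledcfg$: when $x$ is assigned in the loop body, each copy of that assignment furnishes a reaching definition $M_i' \in \RD(J,x)$ with $B_i \in \CP(M_i')$, and then \cref{eq:staticprov-control} yields $\staticprov(B_i,E) \subseteq \staticprov(J,x)$ for every $i$ simultaneously --- no finiteness and no pullback to $\cfg$ is needed. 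If $x$ is never touched in the body one simply sets $\evalf{J}{x} = \evalf{B_1}{x}$, since then $\RD(J,x) = \RD(B_1,x)$.
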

\begin{proof}
    The unrolled CFG $\unrolledcfg$ is constructed in a way such that we can prove the statement by well-founded induction on the relation $M_1 \prec M_2 \Leftrightarrow M_1 \textit{ is parent of } M_2$ (i.e. there is an edge from $M_1$ to $M_2$ in $\unrolledcfg$) similar to the proof of \Cref{theorem-evalf}.
     The full proof is given in Appendix B.5.
\end{proof}

As before, \cref{theorem-evalf-loopy} allows us to explicitly construct the factorisation of the program density.

\begin{theorem}\label{theorem:loopy-factor}
    Let $\cfg$ be the CFG for a program $S$.
    Let $N_1,\dots,N_K$ be all sample nodes in $\cfg$.
    For each sample node $N_k = \textnormal{Assign}(x_k = \kw{sample}(E_0^k, f^k(E_1^k,\dots,E_{n_k}^k)))$, let
    \begin{equation}\label{eq:factor-def}
        [A_k = \textnormal{addresses}(N_k) \cup  \medcup_{i = 0}^{n_k} \staticprov(N_k, E_i^k) \cup \medcup_{N' \in \CP(N_k)}\staticprov(N',{\revcolor \textnormal{condexp}(N')}).
    \end{equation}
    Then, there exist functions $p_k \in  \restricedfs{A_k}{\pdfvalues}$ such that {\revcolor for all $\dict\in\dicts$}, if $p_S(\dict) \neq \undefval$, then
    \[p_S(\dict) = p_\cfg(\dict) = p_\unrolledcfg(\dict) = \prod_{k=1}^K p_k(\dict).\]
\end{theorem}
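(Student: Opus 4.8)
The plan is to mimic the proof of \Cref{theorem:simple-factor}, but to build the factors from the evaluation functions living on the \emph{unrolled} CFG $\unrolledcfg$ (where \Cref{theorem-evalf-loopy} applies), and then to transport the resulting provenance sets back to the standard CFG $\cfg$ via \Cref{lemma:staticprov-unrolled-cfg}. The first step is to fix a trace $\dict$ with $p_S(\dict)\neq\undefval$; by \Cref{theorem:unrolled-CFG-semantics-equivalence} there is a unique finite execution sequence $(\progstate_0,\text{START})\cfgrel\cdots\cfgrel(\progstate_l,\text{END})$ in $\unrolledcfg$, and $p_\unrolledcfg(\dict)=\progstate_l(\probvar)$. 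The key point is that although $\unrolledcfg$ has infinitely many nodes, only finitely many of them appear in this sequence, and since there are only $K$ \emph{original} sample nodes $N_1,\dots,N_K$ in $\cfg$, at most $K$ \emph{copies} in $\unrolledcfg$ — one per original node — can actually be executed for the fixed $\dict$ (each loop body copy $G_i$ is entered at most once along a run, because once it is left the execution moves to $B_{i+1}$ and never returns).

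Next I would define, for each original sample node $N_k$, the factor $p_k$ by summing the contributions of \emph{all} copies $M$ of $N_k$ in $\unrolledcfg$ (i.e. all $M$ with $\cfgnode(M)=N_k$). Concretely, for each such copy $M=\textnormal{Assign}(x_k=\kw{sample}(E_0^k,f^k(E_1^k,\dots,E_{n_k}^k)))$ define a guard $b_M(\dict)\coloneqq\bigwedge_{M'\in\CP(M)} t^k_{M'}(\evalf{M'}{\textnormal{condexp}(M')}(\dict))$ exactly as in \Cref{theorem:simple-factor}, which is $\true$ iff $M$ lies on the execution sequence for $\dict$. I would then set
\[
p_k(\dict)\coloneqq \sum_{M:\,\cfgnode(M)=N_k}\Bigl[\dirac{b_M(\dict)}\,\pdf_{f^k}\bigl(\dict(\evalf{M}{E_0^k}(\dict));\evalf{M}{E_1^k}(\dict),\dots,\evalf{M}{E_{n_k}^k}(\dict)\bigr)\Bigr]\;+\;\Bigl(1-\sum_{M:\,\cfgnode(M)=N_k}\dirac{b_M(\dict)}\Bigr),
\]
noting that for each $\dict$ at most one guard $b_M$ is $\true$ among copies of a fixed $N_k$ (the execution visits each loop body at most once), so the bracketed correction term is either $0$ or $1$ and $p_k$ is well defined and non-negative. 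When the copy $M$ is executed, this reproduces exactly the $\pdf_{f^k}$ value multiplied into $\probvar$ at that sample statement by rule~\eqref{eq:sample-semantics-cfg}; when no copy of $N_k$ is executed, $p_k(\dict)=1$. Multiplying over $k=1,\dots,K$ and using that the execution sequence visits precisely the executed sample copies, each exactly once, gives $\prod_{k=1}^K p_k(\dict)=\progstate_l(\probvar)=p_\unrolledcfg(\dict)=p_\cfg(\dict)=p_S(\dict)$.

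It remains to check the provenance membership $p_k\in\restricedfs{A_k}{\pdfvalues}$ with $A_k$ as in \eqref{eq:factor-def}, i.e. defined via $\staticprov$ on the \emph{standard} CFG. By \Cref{theorem-evalf-loopy} and \Cref{lemma:restricted-composition}, for each copy $M$ of $N_k$ the term indexed by $M$ lies in $\restricedfs{A_M}{\pdfvalues}$ where $A_M=\textnormal{addresses}(M)\cup\medcup_{i=0}^{n_k}\staticprov(M,E_i^k)\cup\medcup_{M'\in\CP(M)}\staticprov(M',\textnormal{condexp}(M'))$. Here I would invoke \Cref{lemma:staticprov-unrolled-cfg}: $\staticprov(M,E_i^k)\subseteq\staticprov(\cfgnode(M),E_i^k)=\staticprov(N_k,E_i^k)$ (extended from variables to expressions via \eqref{eq:prov-expr}), $\textnormal{addresses}(M)\subseteq\textnormal{addresses}(\cfgnode(M))=\textnormal{addresses}(N_k)$, and by \eqref{eq:cp-subset} the branch parents of $M$ map onto $\CP(N_k)=\CP(\cfgnode(M))$ with matching conditional expressions, so the third union is contained in $\medcup_{N'\in\CP(N_k)}\staticprov(N',\textnormal{condexp}(N'))$. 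Hence $A_M\subseteq A_k$ for every copy $M$, so by \eqref{eq:restrictedfs-subset} every summand lies in $\restricedfs{A_k}{\pdfvalues}$; the correction term does too (it is built from the same guards). Since $\restricedfs{A_k}{\pdfvalues}$ is closed under the pointwise operations used to assemble $p_k$ — here I would note \Cref{lemma:restricted-composition} still applies even though the sum ranges over countably many $M$, because for each fixed $\dict$ only finitely many summands are nonzero and which ones those are is itself determined by the values of $\dict$ on $\bigcup_M A_M\subseteq A_k$ — we conclude $p_k\in\restricedfs{A_k}{\pdfvalues}$.

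I expect the main obstacle to be precisely this last point: making rigorous that the \emph{countable} sum defining $p_k$ still has provenance $A_k$. Unlike the acyclic finite case, one cannot just cite \Cref{lemma:restricted-composition} for a finite composition; one must argue that both the \emph{selection} of the active copy $M$ (equivalently, the truth values of all the guards $b_M$) and the \emph{value} contributed by that copy depend only on $\dict|_{A_k}$. The cleanest way is: the guards $b_M$ for copies of $N_k$ depend only on $\medcup_{M'\in\CP(M)}\staticprov(M',\textnormal{condexp}(M'))\subseteq A_k$ by \Cref{theorem-evalf-loopy} and \Cref{lemma:restricted-composition}, so if $\dict|_{A_k}=\dict'|_{A_k}$ then the same guard (if any) is active for $\dict$ and $\dict'$ and the active summand has the same value; all other summands are $0$ for both, and the correction terms agree, so $p_k(\dict)=p_k(\dict')$. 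A secondary, more routine obstacle is bookkeeping the bijection between executed sample copies in $\unrolledcfg$ and the sample-statement occurrences along the original program's operational run, to justify the final product identity — this is where \Cref{theorem:unrolled-CFG-semantics-equivalence} does the heavy lifting.
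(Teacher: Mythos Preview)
Your proof has a genuine gap: the claim that ``at most one guard $b_M$ is $\true$ among copies of a fixed $N_k$'' is false. You correctly observe that each loop-body copy $G_i$ is entered at most once, but this does \emph{not} imply that at most one of the $G_i$ is entered. If $N_k$ sits inside a while-loop body and the loop runs $n$ times for $\dict$, then the unrolled CFG contains copies $M^{(1)},\dots$ of $N_k$ (one in each $G_i$), and the first $n$ of them all lie on the execution sequence, so $n$ guards are simultaneously $\true$. Your sum of pdf values is then a genuine sum of $n$ numbers, your correction term is $1-n<0$, and $\prod_k p_k(\dict)$ no longer equals $p_\unrolledcfg(\dict)$, which is the \emph{product} of those $n$ pdf contributions.

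The fix --- and this is exactly what the paper does --- is to take a \emph{product} over copies rather than a sum. For each copy $M$ with $\cfgnode(M)=N_k$ define $\tilde p_M(\dict)\coloneqq \dirac{b_M(\dict)}\,\pdf_{f^k}\bigl(\dict(\evalf{M}{E_0^k}(\dict));\dots\bigr)+(1-\dirac{b_M(\dict)})$, so that $\tilde p_M(\dict)=1$ whenever $M$ is not executed, and set $p_k\coloneqq\prod_{M:\cfgnode(M)=N_k}\tilde p_M$. Then $\prod_k p_k$ is the product of the pdf factors of precisely the executed sample copies, which equals $p_\unrolledcfg(\dict)$. Your provenance argument (transporting $\staticprov$, $\textnormal{addresses}$, and $\CP$ from $\unrolledcfg$ to $\cfg$ via \Cref{lemma:staticprov-unrolled-cfg} to conclude $\tilde A_M\subseteq A_k$) is correct and coincides with the paper's; and your handling of the countable operation carries over verbatim to the product, since each $\tilde p_M\in\restricedfs{A_k}{\pdfvalues}$ and agreement on $A_k$ forces agreement of every factor and hence of the product.
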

\begin{proof}
    We give a proof sketch and refer to Appendix B.6 for the full proof.
    For each sample node $M_j$ in the \emph{unrolled} CFG $\unrolledcfg$, we construct $\tilde{p}_j \in \restricedfs{\tilde{A}_j}{\pdfvalues}$ as in the proof of \Cref{theorem:simple-factor}.
    Next, we group the nodes $M_j$ by their corresponding CFG node:
    $p_k = \prod_{j\colon \cfgnode(M_j) = N_k} \tilde{p}_j$.
    From \Cref{lemma:staticprov-unrolled-cfg}, we have that if $\cfgnode(M_j) = N_k$, then $\tilde{A}_j \subseteq A_k$.
    Thus,  $p_k \in \restricedfs{A_k}{\pdfvalues}$.
    Like in the proof of \Cref{theorem:simple-factor} it follows from \Cref{theorem-evalf-loopy} that $p_S(\dict) = p_\cfg(\dict) = \prod_{k=1}^K p_k(\dict)$.
\end{proof}

\paragraph{Equivalence to Markov Networks.}
As in \cref{sec:markov-without-loops}, \cref{theorem:loopy-factor} implies that a program $S$ with $K$ sample statements  is equivalent to a Markov network $\mathcal{H}$ with nodes $\{X_\alpha \colon \alpha \in \medcup_{k=1}^K A_k\}$ even if it contains loops.
Again, the sets $D_k = \{X_\alpha \colon \alpha \in A_k\}$ form cliques in $\mathcal{H}$ and $p_S$ factorises over $\mathcal{H}$ with $\phi_{k}(D_k) = p_k(\{\alpha \mapsto X_{\alpha}\colon \alpha \in A_k\})$:
$P(\vv{X}) = \prod_{k=1}^K \phi_{k}(D_k).$

\section{Optimising Posterior Inference Algorithms}\label{sec:casestudy}
While the statically obtained factorisation is provably correct, it is natural to ask whether the quality of the factorisation is good enough to be of practical use.
To answer this question, we have {\revcolor developed} a program transformation based on the static factorisation that generates sub-programs for each factor (described in \Cref{sec:slicing}).
This program transformation operates on probabilistic programs that are implemented in our research PPL which models the formal language introduced in \Cref{sec:semantics}.
It is implemented in Julia and can be viewed as a subset of PPLs like Gen, Turing, or Pyro.
{\revcolor Many posterior inference algorithms can be improved by exploiting the factorisation of the model density.
In \Cref{sec:speed-up}, we demonstrate that if the change to the current program trace is small as in the light-weight Metropolis Hastings kernel~\cite{wingate2011lightweightmh}, then we can leverage the sub-programs to compute the acceptance probability faster.
In \Cref{sec:var-reduce}, we show that we can reduce the variance of the ELBO gradient estimation in variational inference.
Lastly, in \Cref{sec:smc}, we compile sub-programs to enable an iterative implementation of sequential Monte Carlo.
The realised optimisation tricks are well-known but typically restricted to programs with a finite number of random variables, PPLs that require the user to explicitly construct the graphical structure, or dynamic approaches.
In contrast, our novel transformation operates on the source code and makes the optimisations statically available for programs containing while loops and dynamic addresses.
We evaluate our implementation~\citerepl{} on a benchmark set of 16 probabilistic programs, the majority of which make use of while loops to declare an unbounded number of random variables.}

\subsection{Generating Sub-programs from the Static Factorisation}\label{sec:slicing}

We generate a sub-program for each sample statement (and thus for each factor) by deriving a program slicing method from the static provenance analysis.
We want to generate a sub-program for each sample node $N_k = \textnormal{Assign}(x_k = \kw{sample}(E_0^k, f^k(E_1^k,\dots,E_{n_k}^k)))$ that
{\revcolor
\begin{enumerate}
    \item enables us to continue execution directly from $N_k$ from any state $\sigma$,
    \item stops execution once all dependencies of $N_k$ were reached,
    \item supports multiple inference algorithms.
\end{enumerate}
}

{\revcolor To realise (1) and (2)}, we collect all sample nodes that depend on $N_k$.
{\revvcolor From \Cref{theorem:loopy-factor} we know} that a sample node~$N_j$ depends on $N_k$ if $N_k$ contributes to the provenance $A_j$ of the factor $p_j \in \restricedfs{A_j}{\pdfvalues}$ {\revvcolor as defined in \cref{eq:factor-def}.}
As explained in \Cref{sec:staticprov}, these nodes can be found by modifying \Cref{alg:staticprov} on line 8 to collect sample nodes instead of their addresses.
For this modified algorithm $\staticprovnode$, we have $\staticprov(N,x) = \bigcup_{N' \in \staticprovnode(N,x)} \textnormal{addresses}(N')$ and $N_j$ depends on $N_k$ if any variable $y$ in any expression $E_i^j$ in the sample statement of $N_j$ { or in the conditional expression of any branch parent of $N_j$} depends on $N_k$.
We write $j \in \mathcal{J}_k$ if $N_j$ depends on $N_k$.

{\revcolor
We slice the program to only retain nodes that lie on a path from $N_k$ to $N_j$ for some $j\in \mathcal{J}_k$. 
More formally, let $\cfg_k$ be the CFG that contains node $N$ if there exists $j\in \mathcal{J}_k$ such that there is a path from $N_k$ to $N_j$ that visits $N$ and does not revisit $N_k$.
The start node of $\cfg_k$ has successor $N_k$ and the nodes $N_j$, $j\in \mathcal{J}_k$, are connected to the end node of $\cfg_k$ if they have no successor in $\cfg_k$.
Lastly, sample nodes $N_j$ in $G_k$ that do not depend on $N_k$, $j \notin \mathcal{J}_k$, are transformed to \emph{read} nodes for which
$(\progstate, x =\kw{read}(E_0)) \opsemrel \progstate[x \mapsto \dict(\progstate(E_0))]$.
This construction is made precise in Appendix~D.1, where we also prove following correctness theorems {\revvcolor as consequences of \Cref{theorem:loopy-factor}}.
\begin{theorem}\label{theorem:slicing-correctness}
Let $\dict_1$ and $\dict_2$ be traces that only differ at address $\alpha$, i.e. $\dict_1(\alpha) \neq \dict_2(\alpha)$ and $\forall \beta \neq \alpha\colon \dict_1(\beta) = \dict_2(\beta).$
Let the execution sequences in the unrolled CFG $\unrolledcfg$ be equal up to sample node $M_\alpha = \textnormal{Assign}(x = \kw{sample}(E_0, f(E_1,\dots,E_{n})))$ and state $\sigma$, such that $\alpha = \sigma(E_0)$:
\begin{align*}
    (\progstate_0, \text{START}) \cfgreloverset{\dict_1} \cdots  \cfgreloverset{\dict_1} (\progstate, M_\alpha) \cfgreloverset{\dict_1} \cdots \cfgreloverset{\dict_1} (\progstate_{l_1}, \text{END}),\\
    (\progstate_0, \text{START}) \cfgreloverset{\dict_2} \cdots  \cfgreloverset{\dict_2} (\progstate, M_\alpha) \cfgreloverset{\dict_2} \cdots \cfgreloverset{\dict_2} (\progstate_{l_2}, \text{END}),
\end{align*}
If $M_\alpha$ is the only sample node in the execution sequences whose address expression evaluates to $\alpha$, then for the sliced CFG $\cfg_k$ of sample node $N_k = \cfgnode(M_\alpha)$, it holds that
\[
\log p_G(\dict_1) - \log  p_G(\dict_2) = \log \bar{p}_{\cfg_k}(\dict_1, \sigma) - \log  \bar{p}_{\cfg_k}(\dict_2, \sigma).\footnote{{\revcolor
As most PPLs perform computations in terms of log-densities, we will often write $\log a - \log b$ instead of $a / b$.}}
\]
\end{theorem}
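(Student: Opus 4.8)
The plan is to reduce the statement to the density factorisation of \Cref{theorem:loopy-factor}, applied both to $\cfg$ and, with an arbitrary initial state, to the sliced graph $\cfg_k$ (the latter instance being why \Cref{theorem:slicing-correctness} is stated as a consequence of \Cref{theorem:loopy-factor}). Write $\log p_\cfg(\dict) = \sum_{m=1}^{K} \log p_m(\dict)$ with $p_m \in \restricedfs{A_m}{\pdfvalues}$ as in \cref{eq:factor-def}, and likewise $\log \bar{p}_{\cfg_k}(\dict,\sigma) = \sum_{m \in \{k\} \cup \mathcal{J}_k} \log q_m(\dict)$, where the sum ranges over the sample nodes of $\cfg_k$ that were not rewritten to \kw{read} nodes and $q_m$ is the factor built from $N_m$ inside $\cfg_k$. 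Since $\dict_1$ and $\dict_2$ agree at every address except $\alpha$, any factor (on either side) whose provenance set does not contain $\alpha$ takes the same value on $\dict_1$ and $\dict_2$ and cancels in the respective difference, so it suffices to account for the factors whose provenance contains $\alpha$.

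On the $\cfg$ side I would first show that $\alpha \in A_m$ with $m \notin \{k\} \cup \mathcal{J}_k$ still forces $\log p_m(\dict_1) = \log p_m(\dict_2)$. By the definitions of $\mathcal{J}_k$ and $\staticprovnode$, for such an $m$ the membership $\alpha \in A_m$ can only come from some sample node $N' \neq N_k$ that also lists $\alpha$ among its possible addresses; but the hypothesis that $M_\alpha$ is the \emph{unique} sample node in both execution sequences with address $\alpha$, together with soundness of the provenance analysis on the unrolled CFG (\Cref{theorem-evalf-loopy}, \Cref{lemma:staticprov-unrolled-cfg}), implies that $\dict(\alpha)$ is actually consumed only at $M_\alpha$ along either execution. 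Hence none of $b_m$, the argument evaluations $\evalf{N_m}{E_i^m}$, or the sampled address $\evalf{N_m}{E_0^m}$ genuinely depends on $\dict(\alpha)$ there, so $p_m$ is constant in $\dict(\alpha)$ and its contribution vanishes. This leaves $\log p_\cfg(\dict_1) - \log p_\cfg(\dict_2) = \sum_{m \in \{k\} \cup \mathcal{J}_k} \bigl(\log p_m(\dict_1) - \log p_m(\dict_2)\bigr)$.

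It then remains to match factors: for every $m \in \{k\} \cup \mathcal{J}_k$ I would prove $\log q_m(\dict_i) = \log p_m(\dict_i) - c_m$ for $i = 1,2$ with a constant $c_m$ independent of the chosen trace. Here one uses that the global factor $p_m$ aggregates over all unrolled copies $M_j$ with $\cfgnode(M_j) = N_m$: the copies occurring before $M_\alpha$ in the execution are identical for $\dict_1$ and $\dict_2$ and, by the uniqueness hypothesis, never read $\dict(\alpha)$, and the same holds for the re-executions of $N_k$ past $M_\alpha$ that the slice deliberately omits; these together constitute the trace-independent constant $c_m$. The remaining copies are exactly the dependent occurrences that $\cfg_k$ retains, so by equivalence of CFG and unrolled-CFG semantics and the evaluation-function machinery of \Cref{theorem-evalf-loopy} applied to $\cfg_k$ (which begins from the state $\sigma$ recording the common prefix), $q_m$ evaluates these on $\dict_i$ to the same values as $p_m$ does. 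Summing over $m \in \{k\} \cup \mathcal{J}_k$ and subtracting the $i=2$ identity from the $i=1$ identity cancels every $c_m$ and yields the claim.

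The main obstacle is the middle paragraph — precisely excluding spurious dependence on $\dict(\alpha)$ in the factors $p_m$ with $m \notin \{k\} \cup \mathcal{J}_k$ but $\alpha \in A_m$ — together with the bookkeeping needed to make ``the slice retains exactly the dependent occurrences of each $N_m$'' rigorous: this requires carefully tracking executions across the $\cfgnode$ correspondence between $\cfg_k$, $\cfg$, and their unrolled forms, and invoking the uniqueness hypothesis precisely where the purely static over-approximation would otherwise be too coarse.
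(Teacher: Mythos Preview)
Your overall plan is close to the paper's, but the paper's argument is organised at the level of individual unrolled sample-node contributions rather than the aggregated factors $p_m$, and it leans on two explicit technical lemmas that do the work your middle paragraph hand-waves. Lemma~6 states: if a node $M$ appears in \emph{both} execution sequences and $\evalf{M}{x}(\dict_1)\neq\evalf{M}{x}(\dict_2)$, then $M_\alpha\in\staticprovnode(M,x)$. Lemma~7 states: if $M$ appears in \emph{one but not both} sequences, then $M_\alpha\in\staticprovnode(B,\textnormal{condexp}(B))$ for some branch parent $B\in\CP(M)$. With these in hand, for any sample node $M$ with $\cfgnode(M)=N_j$ and $N_k\notin\mathcal{A}_j$ (hence $M_\alpha\notin\tilde{\mathcal{A}}_j$), Lemma~7 forces $M$ into both sequences and Lemma~6 forces the relevant states to agree, so its density contribution cancels. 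This is precisely the content of your middle paragraph, but note that your invocation of ``soundness of the provenance analysis'' points the wrong way: \Cref{theorem-evalf-loopy} says evaluation functions depend on \emph{at most} the provenance set, not that spurious addresses in it are harmless; you need the contrapositive direction, which is what Lemmas~6--7 isolate and prove by induction along the execution.

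Your third paragraph also takes a detour the paper avoids. The paper never factorises $\cfg_k$ via \Cref{theorem:loopy-factor} (which, as stated, fixes the null initial state, not $\sigma$) and never tries to match $q_m$ against $p_m-c_m$. Instead it observes directly that the execution sequence in the unrolled version of $\cfg_k$, started from $\sigma$, is literally a sub-sequence of the execution in $\unrolledcfg$ from $M_\alpha$ onward, ending at the last sample node whose $\cfgnode$ lies in $\cfg_k$; the nodes omitted from this sub-sequence are exactly the ones just shown to contribute equally (those after the last such node, plus those rewritten to \kw{read}). So the difference $\log\bar{p}_{\cfg_k}(\dict_1,\sigma)-\log\bar{p}_{\cfg_k}(\dict_2,\sigma)$ equals the difference of the full densities with the cancelling terms removed. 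Your factor-matching route can be made rigorous, but it requires re-proving evaluation-function machinery for $\cfg_k$ from an arbitrary state and then tracking the $\cfgnode$ correspondence anyway; the direct execution-sequence comparison is shorter.
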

\begin{proof}
    {\revvcolor Follows from \Cref{theorem:loopy-factor} and two technical lemmas}. See Appendix~D.2.
\end{proof}
\begin{corollary}\label{theorem:slicing-correctness-corollary}
Assume that for each trace $\dict$ there is at most one sample node $M_\alpha$ in its execution sequences such that the address expression of $M_\alpha$ evaluates to $\alpha$.
For initial program state $\progstate_0$ as in \Cref{def:prog-semantics}, define
    \begin{equation*}
        p_{\alpha}(\dict) \coloneqq \begin{cases}
            \bar{p}_{G_k}(\dict, \progstate)& \text{ if } \exists\,\progstate\in\progstates, M_\alpha\colon (\progstate_0, \text{START}) \cfgreloverset{\dict} \cdots  \cfgreloverset{\dict} (\progstate, M_\alpha) \land \cfgnode(M_\alpha) = N_k\\
            1 & \text{ otherwise.}
        \end{cases}
    \end{equation*}
    Then, $\Delta_\alpha(\dict) \coloneqq \log p_G(\dict) - \log p_{\alpha}(\dict)$ is independent of address $\alpha$, $\Delta_\alpha \in \restricedfs{\strings \setminus \{\alpha\}}{\pdfvalues}.$
\end{corollary}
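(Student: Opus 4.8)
The plan is to deduce the statement from \Cref{theorem:slicing-correctness} by a case split on whether address $\alpha$ is sampled. By \Cref{def:evalf} it suffices to fix an address $\alpha$ and two traces $\dict_1,\dict_2\in\dicts$ that agree at every address $\beta\neq\alpha$ and to show $\Delta_\alpha(\dict_1)=\Delta_\alpha(\dict_2)$; as in \Cref{theorem:slicing-correctness} I work with traces on which $p_G$ is defined, using $p_G(\dict)=p_\unrolledcfg(\dict)$ from \Cref{theorem:unrolled-CFG-semantics-equivalence}. The key observation is that in \eqref{eq:sample-semantics-cfg} a trace value $\dict(\gamma)$ is consulted only at a sample node whose address expression evaluates to $\gamma$; since $\dict_1$ and $\dict_2$ differ only at $\alpha$, a short induction on the length of the execution prefix shows that the execution sequences of $\dict_1$ and $\dict_2$ in the unrolled CFG $\unrolledcfg$ agree, intermediate states included, up to but excluding the first sample node $M_\alpha$ whose address expression evaluates to $\alpha$ -- and by the standing assumption at most one such node occurs in each sequence. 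If no such $M_\alpha$ occurs in the execution of $\dict_1$, then the two executions coincide entirely, so $p_G(\dict_1)=p_G(\dict_2)$ and $p_\alpha(\dict_1)=p_\alpha(\dict_2)=1$ by the ``otherwise'' clause of its definition, whence $\Delta_\alpha(\dict_1)=\Delta_\alpha(\dict_2)$.

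In the remaining case the execution of $\dict_1$ reaches a first sample node $M_\alpha$ with preceding state $\progstate$ whose address expression evaluates to $\alpha$. By the observation above, the execution of $\dict_2$ reaches the same node $M_\alpha$ in the same state $\progstate$, and by the standing assumption $M_\alpha$ is the unique sample node with address $\alpha$ in both sequences. Setting $N_k=\cfgnode(M_\alpha)$, the hypotheses of \Cref{theorem:slicing-correctness} are met, so
\[
\log p_G(\dict_1)-\log p_G(\dict_2)=\log\bar{p}_{\cfg_k}(\dict_1,\progstate)-\log\bar{p}_{\cfg_k}(\dict_2,\progstate).
\]
Uniqueness of $M_\alpha$ makes this $\progstate$ and $k$ exactly the witnesses appearing in the defining case of $p_\alpha$, so $p_\alpha(\dict_i)=\bar{p}_{\cfg_k}(\dict_i,\progstate)$ for $i=1,2$. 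Rearranging the displayed identity gives $\log p_G(\dict_1)-\log p_\alpha(\dict_1)=\log p_G(\dict_2)-\log p_\alpha(\dict_2)$, i.e. $\Delta_\alpha(\dict_1)=\Delta_\alpha(\dict_2)$, which closes the case.

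The substance of the argument is already carried by \Cref{theorem:slicing-correctness}, so the main obstacle is bookkeeping: making the ``common prefix'' claim precise enough that the intermediate state $\progstate$ at $M_\alpha$ is literally the one used in the definition of $p_\alpha$ and in the hypotheses of \Cref{theorem:slicing-correctness}, and dealing with the degenerate traces where executions diverge or error (where the common-prefix reasoning shows that definedness of $p_G$ at $\dict_1$ and at $\dict_2$ coincides, outside of the regime covered by \Cref{theorem:slicing-correctness}). I do not expect either point to be difficult.
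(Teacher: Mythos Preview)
Your proposal is correct and follows essentially the same approach as the paper's proof: reduce via \Cref{def:evalf} to two traces differing only at $\alpha$, observe that their executions share a common prefix up to the first (and only) sample node $M_\alpha$ with address $\alpha$, handle the degenerate case where no such node exists trivially, and in the main case apply \Cref{theorem:slicing-correctness} and rearrange. The paper's proof is terser but structurally identical.
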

\begin{proof}
    {\revvcolor Follows directly from \Cref{theorem:slicing-correctness}}. See Appendix~D.3.
\end{proof}

}

{\revcolor
Lastly, to fulfill (3), we further transform the program to take as input an \emph{execution context} \texttt{ctx}.
The context controls the behavior when executing a sample statement, e.g. to resample a variable or to accumulate the density.
To actually handle the effects of different inference algorithms, we replace sample statements in the original program with three abstract methods that delegate the behavior to the context \texttt{ctx}.
In $\cfg_k$, the statement of $N_k$ is transformed to \textcolor{lightblue}{\texttt{visit}}, the statements of the nodes that depend on $N_k$ to \textcolor{lightblue}{\texttt{score}}, and all other sample nodes that are on a path between $N_k$ and some $N_j$, $j\in \mathcal{J}_k$, but do not depend on $N_k$, are transformed to \textcolor{lightblue}{\texttt{read}}.
Concrete instances of such contexts will be presented in the following sections and formalised in Appendix~D.

We have implemented the sub-program generation as a compilation to the Julia programming language making use of multiple dispatch for contextualised execution.
To enable continuation from any program state $\sigma$ we also inject a mutable state struct $\rho$.
As the CFG slicing implementation is a fairly standard graph traversal algorithm, we refer to our implementation~\citerepl{} for further details like the translation from CFGs to source code.}

The sub-program generation process is best illustrated with three examples. 
Consider Program 1 in \Cref{fig:sub-programs}.
To generate a sub-program for the factor of sample statement~"B", we omit the sample statements "A" and "E" as they do not lie on a path from "B" to "D" --- the only sample statement that depends on "B".
However, we still need to read the value of program variable \texttt{A} from the cached program state $\rho$ and we read the value for random variable "C" from the trace.
Continuing execution directly from a sample statement by caching the program state means that we can continue a loop from a specific iteration as exemplified by Program 2 in \Cref{fig:sub-programs}.
This is particularly powerful, if the only dependency of a sample statement belongs to the same loop iteration.
In this case, an entire while loop can be replaced by a single iteration which is shown for Program 3 in \Cref{fig:sub-programs}.
Note that this form of loop-iteration independence is not captured by the factorisation of \Cref{theorem-evalf-loopy}.
Instead, it is product of the program slicing implementation.

\begin{figure}[H]

\begin{minipage}{0.45\linewidth}
\centering
{{\revcolor Original formal program}}
\end{minipage}%
\begin{minipage}{0.55\linewidth}
\centering
{{\revcolor Julia sub-program for factor (input: \texttt{ctx} and $\rho$)}}
\end{minipage}
\vspace{-5mm}

\hrulefill
\vspace{-3mm}

\begin{minipage}[t]{0.45\linewidth}
\begin{lstlisting}[style=Python, numbers=none]
# Program 1
A = sample("A", Norm(0.0,1.0))
B = sample("B", Norm(A,1.0))
C = sample("C", Norm(A,1.0))
D = sample("D", Norm(B+C,1.0))
E = sample("E", Norm(A,1.0))
\end{lstlisting}
\end{minipage}%
\begin{minipage}[t]{0.55\linewidth}
\begin{lstlisting}[style=Python, numbers=none, escapeinside=@@]
# sub-program for factor of variable "B"
@$\rho$@.B = visit(ctx,@$\rho$@,"B",Norm(@$\rho$@.A,1.0))
@$\rho$@.C = read(ctx,@$\rho$@,"C")
@$\rho$@.D = score(ctx,@$\rho$@,"D",Norm(@$\rho$@.B+@$\rho$@.C,1.0))
\end{lstlisting}
\end{minipage}

\hrulefill
\vspace{-3mm}

\begin{minipage}[t]{0.45\linewidth}
\begin{lstlisting}[style=Python, numbers=none]
# Program 2
i = 0; b = 1;
while b do
  b = sample("b"+str(i),Bern(0.5))
  i = i + 1
\end{lstlisting}
\end{minipage}%
\begin{minipage}[t]{0.55\linewidth}
\begin{lstlisting}[style=Python, numbers=none, escapeinside=@@]
# sub-program for factor of variable "b"
@$\rho$@.b = visit(ctx,@$\rho$@,"b"*str(@$\rho$@.i),Bern(0.5))
@$\rho$@.i = @$\rho$@.i + 1
while b
  @$\rho$@.b = score(ctx,@$\rho$@,"b"*str(@$\rho$@.i),Bern(0.5))
  @$\rho$@.i = @$\rho$@.i + 1
end
\end{lstlisting}
\end{minipage}
\vspace{-3mm}

\hrulefill
\vspace{-3mm}

\begin{minipage}[t]{0.45\linewidth}
\begin{lstlisting}[style=Python, numbers=none]
# Program 3
i = 0;
while i < N do
  z = sample("z"+str(i),Bern(0.5))
  m = (z == 1) ? -2.0 : 2.0
  x = sample("x"+str(i),Norm(m,1.0))
  i = i + 1
\end{lstlisting}
\end{minipage}%
\begin{minipage}[t]{0.55\linewidth}
\begin{lstlisting}[style=Python, numbers=none, escapeinside=@@]
# sub-program for factor of variable "z"
@$\rho$@.z = visit(ctx,@$\rho$@,"z"*str(@$\rho$@.i),Bern(0.5))
@$\rho$@.m = (@$\rho$@.z == 1) ? -2.0 : 2.0
@$\rho$@.x = score(ctx,@$\rho$@,"x"*str(@$\rho$@.i),Norm(@$\rho$@.m,1.0))
\end{lstlisting}
\end{minipage}
\hrulefill

\caption{{\revcolor Example Julia sub-programs generated from the static factorisation (with short-hands \texttt{Norm} = \texttt{Normal}, \texttt{Bern} = \texttt{Bernoulli}, \texttt{str} = \texttt{string}). Program variables are read from and stored to the program state struct~$\rho$.}}
\label{fig:sub-programs}

\end{figure}

\subsection{Speeding-up Acceptance Probability Computation for Single-Site Trace Updates}\label{sec:speed-up}


{\revcolor In this section we introduce execution contexts to speed-up the single-site or light-weight Metropolis Hastings inference (LMH) algorithm~\cite{wingate2011lightweightmh}.}
The LMH algorithm works by selecting a single address~$\alpha$ in each iteration at random and proposing a new value $v$ {\revcolor for the current trace $\dict(\alpha)$}.
Even though we select a single address for the update, the proposal may have to produce additional values for sample statements that were not executed for trace $\dict$, but due to stochastic branching or dynamic addressing are now executed for the new trace~$\dict'$.
For instance in \Cref{fig:sub-programs} Program 2, if we change the value of the last random variable $\textnormal{"b}\_i\textnormal{"}$ from 0 to 1, then we may have to sample values for multiple  $\textnormal{"b}\_j\textnormal{"}$ where $j > i$ until we get $\dict'(\textnormal{"b}\_j\textnormal{"}) = 0$.

The proposed trace $\dict'$ is accepted {\revcolor(made current)} with probability
$A = \min\left(1, \frac{p(\dict') \cdot Q_\alpha(\dict|\dict')}{p(\dict) \cdot Q_\alpha(\dict'|\dict)}\right),$ 
where $Q_\alpha$ captures the probability of the proposal process (sampling a value for $\alpha$ and all additional values as explained above).
{\revcolor Then, the set of all generated traces is an approximation to the posterior distribution of the model}.

{\revcolor In our approach, if the address comes from sample node $N_k$, then we can optimise the runtime of this algorithm by noting that $\log p(\dict') - \log p(\dict) =  \sum_{j =1}^K \log p_j(\dict') - \log p_j(\dict) = \sum_{j \in \mathcal{J}_k \cup \{k\}}\log p_j(\dict') - \log p_j(\dict)$.
Thus, by being able to compute the factors $p_j(\dict')$ and $p_j(\dict)$ separately, we can compute the ratio of densities $p(\dict') / p(\dict)$ by summing fewer terms.}

{\revcolor By designing execution contexts for the sub-programs to handle the effect of the LMH kernel, we statically enable this optimisation for programs with an unbounded number of random variables and dynamic dependencies.
In the following we informally describe the implementation of two execution contexts for resampling address~$\alpha$ at sample node $N_k$.
A formal treatment can be found in Appendix D.1.
}

When executed with \texttt{ForwardCtx}, the sub-program is evaluated with respect to the proposed trace $\dict'$.
At \lstinline[columns=fixed,style=Python,mathescape]{visit(ForwardCtx,$\rho$,$E_0$,$f(\dots$))} we sample a new value $v$ for address $\alpha=\rho(E_0)$ from some proposal distribution $q$ and return $v$. {\revcolor As before, $\rho(E_0)$ denotes $E_0$ evaluated with respect to~$\rho$.}
In the background, we accumulate the log-probability $\Delta P \gets \Delta P + \log \pdf_f(v;\dots)$ and proposal log-probability $\Delta Q \gets \Delta Q - \log \pdf_q(v;\dots)$.
At the start, $\Delta P$ and $\Delta Q$ are set to $0$.
At \lstinline[columns=fixed,style=Python,mathescape]{score(ForwardCtx,$\rho$,$E_0$,$f(\dots$))} we propose a new value $v$ \emph{only} if the address $\rho(E_0)$ is not sampled for the current trace.
We always update $\Delta P$ as before and only update $\Delta Q$ if we have proposed a new value.
At \lstinline[columns=fixed,style=Python,mathescape]{read(ForwardCtx,$\rho$,$E_0$)}, we simply return $\dict(\rho(E_0))$.
At each of the above statements we cache the program state in $\rho$.
This introduces book-keeping cost, but as we will see in \Cref{sec:eval-results}, the cost is outweighed by the performance gained from only running sub-programs.

When executed with \texttt{BackwardCtx}, the sub-program is evaluated with respected to the current trace $\dict$.
At \lstinline[columns=fixed,style=Python,mathescape]{visit(BackwardCtx,$\rho$,$E_0$,$f(\dots$))} we update $\Delta P \gets \Delta P - \log \pdf_f(v;\dots)$ and proposal log-probability $\Delta Q \gets \Delta Q + \log \pdf_q(v;\dots)$ with the current value $v = \dict(\rho(E_0))$.
At \lstinline[columns=fixed,style=Python,mathescape]{score(BackwardCtx,$\rho$,$E_0$,$f(\dots$))} the same $\Delta P$-updates are always performed while the $\Delta Q$-updates are only performed if the address $\rho(E_0)$ is not sampled for the \emph{proposed} trace $\dict'$.
At statement \lstinline[columns=fixed,style=Python,mathescape]{read(BackwardCtx,$\rho$,$E_0$)} we again return $\dict(\rho(E_0))$.
The program state does not have to be cached when executing with \texttt{BackwardCtx}.

Lastly, after executing the sub-program in both contexts, we have $A = \min(1, \exp(\Delta P + \Delta Q))$.
We accept the proposed trace with probability $A$.
In this case, we perform some book-keeping to update the inference state to the proposed trace with the cached program states. {\revcolor Building on the soundness of our provenance analysis and \Cref{theorem:slicing-correctness}, we proof correctness of this approach in Appendix D.4:
\begin{corollary}\label{theorem:lmh-correctness}
    Assume that the address expression of at most one sample node in the execution sequence evaluates to $\alpha$ for all traces.
    Let $\dict'$ be the LMH proposal at address $\alpha$ for $\dict$.
    Then, the execution sequences of $\dict$ and $\dict'$ are equal up to $(\sigma, M_\alpha)$, where $M_\alpha$ is the first and only sample node whose address expression evaluates to~$\alpha$.
    For the sliced CFG $\cfg_k$, where $\cfgnode(M_\alpha) = N_k$, we have \[\log p(\dict') - \log p(\dict) = \log \bar{p}_{G_k}(\dict', \sigma) - \log \bar{p}_{G_k}(\dict, \sigma).\]
    The quantities $\log \bar{p}_{G_k}(\dict', \sigma)$, and $\log Q_\alpha(\dict'|\dict)$ are computed by \texttt{ForwardCtx} and the quantities  $\log \bar{p}_{G_k}(\dict, \sigma)$ and $\log Q_\alpha(\dict|\dict')$ are computed by \texttt{BackwardCtx}. The new trace $\dict'$ can be constructed from $\dict$ and the output of both contexts.
\end{corollary}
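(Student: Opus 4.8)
The plan is to split the statement into four components --- (i) the execution sequences of $\dict$ and $\dict'$ agree up to $(\sigma, M_\alpha)$; (ii) the log-density difference equals the log-difference of the sliced-program densities; (iii) the two execution contexts compute the indicated quantities and $A = \min(1,\exp(\Delta P + \Delta Q))$ is the correct acceptance probability; (iv) $\dict'$ is recoverable --- and to discharge them using \Cref{theorem:slicing-correctness}, \Cref{theorem:loopy-factor}, and the soundness of the provenance analysis. For (i) I would argue by induction on the prefix of the execution in the unrolled CFG $\unrolledcfg$ that ends just before $M_\alpha$ is first reached. Both executions start in $(\sigma_0,\text{START})$. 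A transition out of a non-sample node depends only on the shared program state, hence is identical for $\dict$ and $\dict'$. A transition out of a sample node $M$ occurring before $M_\alpha$ reads the trace at address $\sigma_M(E_0)$, which by hypothesis is some $\beta \neq \alpha$ (as $M_\alpha$ is the unique sample node whose address expression evaluates to $\alpha$); since the prefixes have coincided so far, $\beta$ is an address that the current trace $\dict$ visits, so the LMH re-execution reuses its old value, $\dict'(\beta)=\dict(\beta)$, and the transition is again identical. Thus the executions coincide node-by-node and reach $M_\alpha$ in the same state $\sigma$.

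Component (ii) is precisely the conclusion of \Cref{theorem:slicing-correctness}, whose hypotheses are now secured by (i) --- \emph{except} that \Cref{theorem:slicing-correctness} assumes $\dict$ and $\dict'$ differ only at $\alpha$, whereas an LMH proposal may also draw fresh values at addresses that $\dict'$'s execution visits but $\dict$'s does not. To bridge this, I would introduce the trace $\hat\dict$ that agrees with $\dict$ on every address visited by $\dict$'s execution and with $\dict'$ everywhere else. Because the operational semantics reads a trace only at the addresses it visits, $p(\hat\dict)=p(\dict)$ and $\bar p_{\cfg_k}(\hat\dict,\sigma)=\bar p_{\cfg_k}(\dict,\sigma)$ (an insensitivity property of the semantics, plausibly one of the technical lemmas already used for \Cref{theorem:slicing-correctness}), and by construction $\hat\dict$ differs from $\dict'$ only at $\alpha$ and induces the same execution as $\dict$. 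Applying \Cref{theorem:slicing-correctness} to the pair $(\dict',\hat\dict)$ and substituting back yields $\log p(\dict')-\log p(\dict)=\log\bar p_{\cfg_k}(\dict',\sigma)-\log\bar p_{\cfg_k}(\dict,\sigma)$.

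For (iii) and (iv) I would unfold the definitions of \texttt{ForwardCtx} and \texttt{BackwardCtx} against the definition of $\bar p_{\cfg_k}$ and of the proposal kernel $Q_\alpha$. Running $\cfg_k$ from $\sigma$ under \texttt{ForwardCtx} follows $\dict'$'s path by (i) and the slicing construction, and the density is multiplied by $\pdf_f$ exactly at the \textcolor{lightblue}{\texttt{visit}} and \textcolor{lightblue}{\texttt{score}} nodes (the \textcolor{lightblue}{\texttt{read}} nodes leaving it unchanged), so the accumulator $\Delta P$ collects exactly $\log\bar p_{\cfg_k}(\dict',\sigma)$, while the $\Delta Q$-updates occur exactly at $\alpha$ and at the freshly sampled addresses, so $-\Delta Q=\log Q_\alpha(\dict'|\dict)$; symmetrically, under \texttt{BackwardCtx} (evaluated with respect to $\dict$) one reads off $-\log\bar p_{\cfg_k}(\dict,\sigma)$ and $+\log Q_\alpha(\dict|\dict')$. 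Adding the two runs and invoking (ii) gives $\exp(\Delta P+\Delta Q)=\frac{p(\dict')\,Q_\alpha(\dict|\dict')}{p(\dict)\,Q_\alpha(\dict'|\dict)}$, the Metropolis--Hastings ratio, so $A=\min(1,\exp(\Delta P+\Delta Q))$ is correct; and $\dict'$ is obtained from $\dict$ by overwriting, at each address the \texttt{ForwardCtx} run visits, with the value that run assigned (the new value at $\alpha$ together with all freshly sampled values), leaving any now-stale addresses untouched since they do not affect the density.

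The main obstacle is the mismatch flagged in (ii): because an LMH move generally perturbs the trace at more than one address, \Cref{theorem:slicing-correctness} cannot be applied verbatim, and one must first establish --- or reuse from Appendix~D --- that both $p$ and $\bar p_{\cfg_k}(\cdot,\sigma)$ are insensitive to trace values at non-visited addresses before collapsing the discrepancy between $\dict$ and $\dict'$ to the single address $\{\alpha\}$. The remaining work --- the induction for (i) and the line-by-line audit of the two context accumulators for (iii)--(iv) --- is routine but must be carried out against the precise context definitions in Appendix~D.1.
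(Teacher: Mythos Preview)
Your decomposition into (i)--(iv) and the overall strategy match the paper's proof, and your identification of the key obstacle --- that \Cref{theorem:slicing-correctness} requires the two traces to differ only at $\alpha$ whereas an LMH move can also create and destroy addresses --- is exactly right. However, your proposed fix in (ii) contains a gap. You construct a single auxiliary trace $\hat\dict$ that agrees with $\dict$ on addresses visited by $\dict$ and with $\dict'$ elsewhere, then claim that $\hat\dict$ differs from $\dict'$ only at $\alpha$. This is false whenever $\textnormal{keys}(\dict)\setminus\textnormal{keys}(\dict')\neq\emptyset$: for any stale address $\beta$ (visited by $\dict$ but not by $\dict'$), your $\hat\dict(\beta)=\dict(\beta)\neq\none=\dict'(\beta)$. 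The paper avoids this by constructing \emph{two} auxiliary traces $\overline{\dict_1},\overline{\dict_2}$, both defined on the \emph{union} $\textnormal{keys}(\dict)\cup\textnormal{keys}(\dict')$ and differing from each other only at $\alpha$; the insensitivity lemma (\Cref{lemma:minimal-same-exec-seq}: a minimal trace and any extension of it induce the same execution and density) is then applied symmetrically to identify $\overline{\dict_1}$ with $\dict$ and $\overline{\dict_2}$ with $\dict'$. Your own final paragraph anticipates this kind of reasoning, but as written your single-trace construction does not discharge the hypothesis of \Cref{theorem:slicing-correctness}.

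A smaller point on (iii)--(iv): to conclude that $\Delta Q$ captures the full proposal density you need that every freshly sampled address (and every address that becomes stale) occurs at a sample node inside the slice $\cfg_k$. The paper obtains this from the same two-trace setup together with \Cref{lemma:alpha-sample-node-in-both} and \Cref{lemma:alpha-sample-node-in-one}: any sample node outside the slice lies in both execution sequences with identical address $\beta\in\textnormal{keys}(\dict)\cap\textnormal{keys}(\dict')\setminus\{\alpha\}$, hence contributes nothing to $\boldsymbol{q}$. Your sketch gestures at this but does not name the mechanism. Relatedly, in (iv) you say stale addresses can be left ``untouched since they do not affect the density''; strictly, the LMH semantics defines $\dict'=\sigma(\boldsymbol{tr})$ as minimal, so to recover \emph{that} $\dict'$ you must remove the addresses visited in \texttt{BackwardCtx} and add those visited in \texttt{ForwardCtx}, exactly as the paper does.
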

}

\subsubsection{Experiment Results}\label{sec:eval-results}
\begin{table}[t]
    \caption{Average runtime of computing a single LMH iteration in microseconds (see \Cref{sec:speed-up}). The column "Baseline" reports the runtime when computing $p(\dict')$ from scratch. The column "Static Analysis" reports the runtime when updating the old density by computing $\log p(\dict') - \log p(\dict)$ with the generated sub-programs (see \Cref{sec:slicing}). The column "Fixed-Finite" reports the runtime of updating the density with a hand-written factorisation that is representative of PPLs that explicitly construct a graphical model. The speed-up in relation to computing from scratch is given in the "speed-up" and "sp.up" columns. The number of random variables defined in each model can be found in column "\#RVs". The number of samples produced and the acceptance rate are reported in the $N$ and $AR$ column, respectively.}
    \label{tab:results}
    \centering
    \begin{tabular}{l r r r r r r c r r}
        \hline
         \textbf{} & \textbf{} & \textbf{} & \textbf{} & \textbf{Baseline} &  \multicolumn{3}{c}{\textbf{Static Analysis}} &  \multicolumn{2}{c}{\textbf{Fixed-Finite}} \\
         \textbf{Model} & \textbf{\#RVs} & $N$ &$AR$ & $\microseconds$ & $\microseconds$ &  \multicolumn{2}{c}{speed-up} & $\microseconds$ & sp.up \\
         \hline
Aircraft & $\infty$ & $1\mathrm{e}{5}$ & 0.47 & 5.861 & 4.605 & \textcolor{ForestGreen}{1.29} & \textcolor{ForestGreen}{$\fastarrow$} & - & - \\
Bayesian Network & 37 & $1\mathrm{e}{5}$ & 0.93 & 25.580 & 4.680 & \textcolor{ForestGreen}{5.47} & \textcolor{ForestGreen}{$\megafastarrow$} & 3.344 & 7.65 \\
Captcha & $\infty$ & $1\mathrm{e}{3}$ & 0.81 & 774.287 & 355.382 & \textcolor{ForestGreen}{2.18} & \textcolor{ForestGreen}{$\megafastarrow$} & - & - \\
Dirichlet process & $\infty$ & $1\mathrm{e}{4}$ & 0.53 & 89.939 & 42.664 & \textcolor{ForestGreen}{2.12} & \textcolor{ForestGreen}{$\megafastarrow$} & - & - \\
Geometric &  $\infty$ & $5\mathrm{e}{5}$ & 0.62 & 2.419 & 2.475 & \textcolor{Gray}{0.97} & \textcolor{Gray}{$\samearrow$} & - & - \\
GMM (fixed \#clusters) & 209 & $5\mathrm{e}{4}$ & 0.60 & 71.136 & 7.027 & \textcolor{ForestGreen}{10.01} & \textcolor{ForestGreen}{$\megafastarrow$} & 4.564 & 15.58 \\
GMM (variable \#clusters)\hspace{-3mm} & $\infty$ & $5\mathrm{e}{4}$ & 0.57 & 74.709 & 9.549 & \textcolor{ForestGreen}{7.81} & \textcolor{ForestGreen}{$\megafastarrow$} & - & - \\
HMM & 101 & $1\mathrm{e}{5}$ & 0.74 & 33.729 & 24.016 & \textcolor{ForestGreen}{1.39} & \textcolor{ForestGreen}{$\fastarrow$} & 2.551 & 13.22 \\
Hurricane & 5 & $1\mathrm{e}{6}$ & 0.86 & 2.121 & 1.918 & \textcolor{Gray}{1.13} & \textcolor{Gray}{$\samearrow$} & - & - \\
LDA (fixed \#topics) & 551 & $1\mathrm{e}{4}$ & 0.81 & 209.046 & 24.526 & \textcolor{ForestGreen}{8.52} & \textcolor{ForestGreen}{$\megafastarrow$} & 18.341 & 11.40 \\
LDA (variable \#topics) & $\infty$ & $1\mathrm{e}{4}$ & 0.83 & 218.614 & 25.732 & \textcolor{ForestGreen}{8.50} & \textcolor{ForestGreen}{$\megafastarrow$} & - & - \\
Linear regression & 102 & $1\mathrm{e}{5}$ & 0.53 & 4.477 & 8.006 & \textcolor{OrangeRed}{0.56} & \textcolor{OrangeRed}{$\megaslowarrow$} & 7.504 & 0.60 \\
Marsaglia & $\infty$ & $5\mathrm{e}{5}$ & 0.93 & 1.699 & 2.159 & \textcolor{OrangeRed}{0.79} & \textcolor{OrangeRed}{$\slowarrow$} & - & - \\
PCFG & $\infty$ & $1\mathrm{e}{5}$ & 0.75 & 10.220 & 8.437 & \textcolor{Gray}{1.21} & \textcolor{Gray}{$\samearrow$} & - & - \\
Pedestrian & $\infty$ & $1\mathrm{e}{5}$ & 0.37 & 3.709 & 3.645 & \textcolor{Gray}{1.02} & \textcolor{Gray}{$\samearrow$} & - & - \\
Urn & $\infty$ & $1\mathrm{e}{5}$ & 0.52 & 11.979 & 6.354 & \textcolor{ForestGreen}{1.88} & \textcolor{ForestGreen}{$\fastarrow$} & - & - \\
         \hline
    \end{tabular}
\end{table}

To evaluate the speed-up made possible by our static factorisation, we gathered a benchmark set of 16 probabilistic programs.
We specifically collected probabilistic programs {\revcolor from prior research} that make use of while loops to define an unbounded number of random variables.
{\revcolor References for the programs can be found in our replication package~\citerepl{}}.
Only 5 out of the 16 models define just a finite number of random variables with fixed dependency structure.
It takes less than five seconds to generate the sub-programs for all models.

For each model, we run two implementations of the LMH inference algorithm.
As a baseline, the standard implementation of LMH computes the acceptance probability $A$ by computing the density $p(\dict')$ from scratch.
Our optimised version of LMH leverages the sub-programs to compute $\log p(\dict') - \log p(\dict)$ when calculating $A$.
For the five finite models with fixed structure, we additionally implemented a factorisation that is representative of PPLs that explicitly construct a probabilistic graphical model.
This factorisation enables LMH to exploit the full dependency structure of the model and serves as comparison.
We have tested correctness of our LMH implementations by asserting that \emph{exactly} the same acceptance probabilities and samples are computed.

In \Cref{tab:results}, we report the results of the experiment run a M2 Pro CPU.
For each model and implementation, we ran LMH with 10 random starting traces for a total of $N$ samples {\revcolor and average the measurements over 10 experiment repetitions}.

The first observation that we want to highlight is that computing $\log p(\dict') - \log p(\dict)$ with our approach can lead to longer runtime if there is no structure to exploit in the model.
Consider the linear regression model in which the 100 observations depend on the slope and intercept variables.
{\revcolor When changing the value of one of these variables, computing $\log p(\dict') - \log p (\dict) $ requires 100 additions and 100 subtractions, while $\log p(\dict')$ only takes 100 additions.}
The same is true for the Marsaglia model which implements a rejection sampling method for the Normal distribution such that all variables depend on each other.

The factorisation approach shines when there is a lot of structure in the models like for the Aircraft, Captcha, Dirichlet Process, and Urn models, where the runtime is decreased up to a factor of 2.
It is particularly powerful, when there are latent variables that only affect independent data variables.
This is the case for the Gaussian mixture model (GMM) where there is one cluster allocation variable per data point and for the Latent Dirichlet Allocation model (LDA) where there is similar dependency structure.
For these models we observe 8-10x speed-ups.
Note that the handwritten "fixed-finite" factorisations of the models are faster than the "static analysis" factorisations, because LMH has to do less book-keeping for them.
They do not leverage additional dependency structure.


The only model where our approach fails to effectively leverage the true dependency structure is the Hidden Markov Model (HMM).
Even though the current state only affects the next state, our static analysis over-approximates this structure and outputs that the current state affects all future states.
Now, if we modify the program by unrolling the while loop, i.e. repeating its body fifty times, our factorisation approach results in an improved runtime of 2.605~$\microseconds$ (13x speed-up), because in this case the dependency structure is not over-approximated anymore due to loops.
Note that the PCFG model essentially has the same structure as HMM, but can transition to a terminal state at every loop iteration.
In this case, indeed the current state affects all future states and our static analysis is exact, but no significant improvements in runtime are measurable.

Lastly, we note that the Hurricane model is the only model in our benchmark with a finite number of random variables but a dynamic dependency structure (see \Cref{sec:markov-without-loops}).
This model can also not be represented in a conventional Bayesian network structure, but our method is applicable and can slightly reduce the runtime.

\subsubsection{Comparison to Related Methods}\label{sec:comparsion-to-related}
The approach presented in \Cref{sec:casestudy} is the first to generate optimised MCMC kernels of models with unbounded number of random variables via a static program transformation.
Importantly, its correctness follows from the theoretical foundation established in \Cref{sec:loops}.
Nevertheless, there exist methods that aim to accelerate MCMC kernels for a similar class of models with other approaches.
In this section, our goal is to make a fair comparison to those methods by focusing on the relative speed-ups achieved.
A direct runtime comparison is not meaningful due to differences in programming languages and the implementation of inference algorithms.
More related approaches are described in \Cref{sec:related-work}.


\begin{table}[h]
    \centering
    \caption{{\revcolor Relative speed-up for the LMH algorithm achieved with our static approach, combinators in Gen, and the C3 algorithm in WebPPL.}}
    \label{tab:gen-wppl}
    \begin{tabular}{|l r r r | l r r r|}
\hline

\textbf{Model} & ours & Gen & C3 & \textbf{Model} & ours & Gen & C3 \\
\hline
Aircraft & 1.29 & \textbf{1.41} & 1.02 & Hurricane & \textbf{1.13} & 0.51 & 0.93 \\
Bayesian Network & \textbf{5.47} & 4.14 & 0.93 & LDA (fixed \#topics) & \textbf{8.52} & 6.50 & 4.97 \\
Captcha & 2.18 & \textbf{2.93} & 1.05 & LDA (variable \#topics) & \textbf{8.50} & 5.92 & 7.26 \\
Dirichlet process & \textbf{2.12} & - & 1.24 & Linear regression & \textbf{0.56} & 0.43 & 0.44 \\
Geometric & \textbf{0.97} & - & 0.74 & Marsaglia & \textbf{0.79} & - & 0.53 \\
GMM (fixed \#clusters) & \textbf{10.01} & 6.26 & 1.90 & PCFG & \textbf{1.21} & - & 0.79 \\
GMM (variable \#clusters) & \textbf{7.81} & 5.45 & 1.60 & Pedestrian & \textbf{1.02} & - & 0.52 \\
HMM & 1.39 & 1.31 & \textbf{3.82} & Urn & 1.88 & \textbf{2.05} & 0.69 \\
\hline
    \end{tabular}
\end{table}

First, \citet{ritchie2016c3} presented the C3 approach to enable incrementalised re-execution of Metropolis Hastings kernels using continuations and function call caching.
To be effective this method requires programs to be implemented with recursion instead of while loops.
We re-implemented the models in recursive form in WebPPL~\cite{webppl} to report the relative speed-up achieved with C3 in \Cref{tab:gen-wppl}.
Our approach achieves better speed-ups for all models except for the HMM.
As explained before, our static analysis greatly over-approximates the dependency structure of this model.
Note that \citet{ritchie2016c3} report even more significant speed-ups for the GMM, LDA, and HMM models for larger data-set sizes.
Our approach scales the same for GMM and LDA.

Further, modern probabilistic programming systems like Gen~\cite{cusumano2019gen} include language constructs like \texttt{Map}, \texttt{Unfold}, and \texttt{Recurse} to model independence structure.
Thus, we re-implemented LMH and the 16 models only in Gen with the so-called combinator constructs and report the results in \Cref{tab:gen-wppl}.
Gen's inference backend can only leverage the combinator constructs if the root model function is written in the \emph{static modelling language} -- a non-universal subset of the language with no control statements.
This effectively abstracts away loops as \texttt{Map}, \texttt{Unfold}, or \texttt{Recurse} nodes in an internal graphical representation.
Note that at the time of writing Gen does not implement incremental computation for the \texttt{Recurse} combinator, a construct required for models where the condition of the while loop depends on the body like the Geometric model.
We cannot apply our general LMH implementation to those models and mark them with "-".
Overall, the order of magnitude of speed-up achieved with combinators is similar to our static approach.
There seems to be more overhead in Gen as it supports general programmable inference in contrast to our research PPL.
We emphasise that our approach achieves these speed-ups fully automatically from source code without the need to explicitly model independence structure with language constructs.


\subsection{Reducing the Variance of an ELBO Gradient Estimator in Variational Inference}\label{sec:var-reduce}
{\revcolor
Variational inference takes a different approach to MCMC methods by fitting a so-called variational distribution $Q_\phi$ to the posterior, where $\phi$ is a set of parameters. 
The objective is to maximise the evidence lower-bound $\textnormal{ELBO}(\phi) = \expectation{z \sim Q_\phi}{\log  P(z) - \log Q_\phi(z)}$ via gradient ascent, where $ P$ denotes a potentially unnormalised density.
In this section, we consider Black-Box Variational Inference (BBVI) \cite{ranganath2014blackboxvi}, which estimates the gradient of the ELBO with following identity $\nabla_\phi \textnormal{ELBO}(\phi) = \expectation{z \sim Q_\phi}{\nabla_\phi \log Q_\phi(z) \cdot (\log  P(z) - \log Q_\phi(z))}$.
Unfortunately, this estimator often exhibits high variance.
To reduce the variance \citet{ranganath2014blackboxvi} use a mean-field variational distribution  $Q_\phi(z) = \prod_{i=1}^n q_i(z_i|\phi_i)$ to derive the improved estimator
\begin{equation}\label{eq:bbvi-rao}
\nabla_{\phi_i} \textnormal{ELBO}(\phi) =\expectation{z_{(i)} \sim Q^i _\phi}{\nabla_{\phi_i} \log q_i(z_i|\phi_i) \cdot (\log  P^i(z_{(i)}) - \log q_i(z_i|\phi_i))}.
\end{equation}
Note that the gradient is only with respect to $\phi_i$ and the expectation is with respect to the \emph{Markov blanket} $z_{(i)}$ of $z_i$.
Including $z_i$, this is the set of variables that directly influence $z_i$ according to $ P$, variables that are directly influenced by $z_i$ -- \emph{children} of $z_i$ -- and variables that directly influence a child of $z_i$.
Then, $ P^i$ is the density factoring out all other variables and $Q^i_\phi$ the variational distribution restricted to $z_{(i)}$.
This reduces the variance and we will replicate this trick with our factorisation.

In our sub-program setting, we achieve the same optimisation by defining the execution context \texttt{BBVICtx}, which we formalise in Appendix D.5.
First, assume that a trace $\dict$ is generated from a variational distribution, where each address $\alpha$ is sampled from $\dict(\alpha) \sim q_\alpha(\dict(\alpha) | \phi_\alpha)$.
The \texttt{BBVICtx} executes the program with respect to $\dict$.
Similar to the \texttt{ForwardCtx}, it accumulates $ \Delta \gets \Delta + \log \pdf_f(v;\dots)$ at 
\lstinline[columns=fixed,style=Python,mathescape]{score(BBVICtx,$\rho$,$E_0$,$f(\dots$))}
and returns $v = \dict(\rho(E_0))$ at \lstinline[columns=fixed,style=Python,mathescape]{read(BBVICtx,$\rho$,$E_0$))}. 
At visit statement
\lstinline[columns=fixed,style=Python,mathescape]{visit(BBVICtx,$\rho$,$E_0$,$f(\dots$))} we additional substract the proposal density $ \Delta \gets \Delta + \log \pdf_f(v;\dots) - \log  q_\alpha(v | \phi_\alpha)$, where we have $\alpha = \rho(E_0)$ (compare \cref{eq:bbvi-rao}).
In Appendix~D.5. we prove following correctness result {\revvcolor as a consequence of \Cref{theorem:slicing-correctness-corollary}}:
\begin{corollary}\label{theorem:bbvi-correctness}
    Let $Q_\phi$ be a variational distribution over traces as formalised in Appendix D.5.
    Assume that the address expression of at most one sample node in the execution sequence evaluates to~$\alpha$ for all traces.
    Under measurability assumptions given in Appendix D.5 and with $p_\alpha$ as in \Cref{theorem:slicing-correctness-corollary},
    \begin{align*}
    &\nabla_{\phi_\alpha}\expectation{\dict \sim  Q_\phi}{\log p(\dict) - \log  Q_\phi(\dict)} = \\
    &\quad\quad\quad\expectation{\dict \sim  Q_\phi}{\delta_{\dict(\alpha)\neq\none}\cdot\nabla_{\phi_\alpha} \log q_\alpha(\dict(\alpha)|\phi_\alpha) \cdot \big(\log p_\alpha(\dict) -\log  q_\alpha(\dict(\alpha)|\phi_\alpha)\big)}.
    \end{align*}
    The quantity $\log p_\alpha(\dict) -\log  q_\alpha(\dict(\alpha)|\phi_\alpha)$ is computed by \texttt{BBVICtx}.
\end{corollary}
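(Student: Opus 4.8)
The plan is to derive the identity from the standard Rao--Blackwellisation (local-expectation) argument for the score-function estimator, using \Cref{theorem:slicing-correctness-corollary} to pin down exactly which part of $\log p(\dict) - \log Q_\phi(\dict)$ survives the variance-reducing simplification. First I would justify interchanging $\nabla_{\phi_\alpha}$ with the integral over traces -- this is what the measurability/domination hypotheses of Appendix~D.5 provide -- and apply the usual REINFORCE identity: writing $\textnormal{ELBO}(\phi) = \int \big(\log p(\dict) - \log Q_\phi(\dict)\big)Q_\phi(\dict)\,d\nu(\dict)$ and differentiating under the integral, the contribution of $\nabla_{\phi_\alpha}\log p(\dict) = 0$ disappears and the contribution $-\int Q_\phi(\dict)\,\nabla_{\phi_\alpha}\log Q_\phi(\dict)\,d\nu(\dict) = -\nabla_{\phi_\alpha}\!\int Q_\phi(\dict)\,d\nu(\dict) = 0$ drops out, leaving $\nabla_{\phi_\alpha}\textnormal{ELBO}(\phi) = \expectation{\dict\sim Q_\phi}{\nabla_{\phi_\alpha}\log Q_\phi(\dict)\,\big(\log p(\dict) - \log Q_\phi(\dict)\big)}$.

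Next I would read off from the definition of $Q_\phi$ in Appendix~D.5 -- a product over a family of per-address variational factors $q_\beta(\cdot\mid\phi_\beta)$ -- that $\nabla_{\phi_\alpha}\log Q_\phi(\dict) = \delta_{\dict(\alpha)\neq\none}\,\nabla_{\phi_\alpha}\log q_\alpha(\dict(\alpha)\mid\phi_\alpha)$, and that the remainder $R_\alpha(\dict) \coloneqq \log Q_\phi(\dict) - \delta_{\dict(\alpha)\neq\none}\log q_\alpha(\dict(\alpha)\mid\phi_\alpha)$, being the sum of the $\log q_\beta$ terms for $\beta \neq \alpha$, lies in $\restricedfs{\strings\setminus\{\alpha\}}{\pdfvalues}$. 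On the model side, \Cref{theorem:slicing-correctness-corollary} gives $\log p(\dict) = \log p_\alpha(\dict) + \Delta_\alpha(\dict)$ with $\Delta_\alpha \in \restricedfs{\strings\setminus\{\alpha\}}{\pdfvalues}$. Substituting both decompositions, the REINFORCE integrand splits as $\delta_{\dict(\alpha)\neq\none}\,\nabla_{\phi_\alpha}\log q_\alpha(\dict(\alpha)\mid\phi_\alpha)\,\big[\big(\log p_\alpha(\dict) - \log q_\alpha(\dict(\alpha)\mid\phi_\alpha)\big) + \big(\Delta_\alpha(\dict) - R_\alpha(\dict)\big)\big]$; the first bracket is exactly the quantity accumulated by \texttt{BBVICtx}, so it only remains to show that the $\big(\Delta_\alpha - R_\alpha\big)$-contribution integrates to zero.

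For that step I would condition on $\dict$ restricted to $\strings\setminus\{\alpha\}$. The factor $\Delta_\alpha(\dict) - R_\alpha(\dict)$ is measurable with respect to this sub-$\sigma$-algebra, while, by the product structure of $Q_\phi$ together with the assumption that at most one sample node evaluates to $\alpha$, on the event $\dict(\alpha)\neq\none$ the conditional law of $\dict(\alpha)$ is $q_\alpha(\cdot\mid\phi_\alpha)$ independently of the remaining coordinates (and $\delta_{\dict(\alpha)\neq\none}$ annihilates the term otherwise). Hence the inner conditional expectation collapses to $\expectation{v\sim q_\alpha(\cdot\mid\phi_\alpha)}{\nabla_{\phi_\alpha}\log q_\alpha(v\mid\phi_\alpha)} = \nabla_{\phi_\alpha}\!\int q_\alpha(v\mid\phi_\alpha)\,dv = \nabla_{\phi_\alpha}1 = 0$, so the whole contribution vanishes and the two sides coincide. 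Reassembling, the surviving expectation is precisely the claimed estimator, and identifying $\log p_\alpha(\dict) - \log q_\alpha(\dict(\alpha)\mid\phi_\alpha)$ with the accumulator $\Delta$ produced by the \texttt{visit}/\texttt{score}/\texttt{read} handlers of \texttt{BBVICtx} finishes the proof.

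I expect the main obstacle to be the measure-theoretic bookkeeping around $Q_\phi$: making precise, in the notation of Appendix~D.5, both that $R_\alpha$ is genuinely independent of $\dict(\alpha)$ and that $\dict(\alpha)$ is conditionally $q_\alpha$-distributed given the other coordinates -- this is exactly where the single-sample-node hypothesis and the way $Q_\phi$ assigns a fixed family $\{q_\beta\}$ to addresses are used -- and justifying the interchanges of differentiation, and of conditioning, with integration under the stated domination assumptions. Everything after that is the routine Rao--Blackwell computation.
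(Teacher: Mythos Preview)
Your proposal is correct and follows essentially the same route as the paper: start from the REINFORCE identity, observe that $\nabla_{\phi_\alpha}\log Q_\phi(\dict)$ reduces to the single $q_\alpha$ score, split the weight $\log p - \log Q_\phi$ into the $\alpha$-local part $\log p_\alpha - \log q_\alpha$ plus a remainder that is constant in $\dict(\alpha)$ (using \Cref{theorem:slicing-correctness-corollary} for the $p$-side and the product form of $Q_\phi$ for the $q$-side), and then kill the remainder via $\int \nabla_{\phi_\alpha}\log q_\alpha(v\mid\phi_\alpha)\,q_\alpha(v\mid\phi_\alpha)\,dv = 0$. The only cosmetic difference is that the paper carries out your conditioning step by explicitly decomposing the $Q_\phi$-integral as a sum over finite address sets $A \ni \alpha$ (which is how $Q_\phi$ is defined) and integrating out $z_\alpha$ in each summand, whereas you phrase the same computation as a conditional-expectation argument; the two are equivalent and your anticipated obstacle---making the conditional law of $\dict(\alpha)$ precise on $\dicts_{\textnormal{minimal}}$---is exactly what that address-set decomposition handles.
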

{\revvcolor Note that in contrast to \Cref{eq:bbvi-rao}, where the expectation is taken explicitly with respect to the Markov blanket $z_{(i)}$, in \Cref{theorem:bbvi-correctness} the dependencies are accounted for via $p_\alpha(\dict)$ (computed with a sub-program).
Replacing $p$ in the integrand with $p_\alpha$  reduces the variance, because $p_\alpha$ is constant with respect to addresses that are independent from a change to the trace at address~$\alpha$.}

\subsubsection{Experiment Results}

To evaluate the approach we have implemented our BBVI optimisation and compare the improved estimator against the standard estimator with respect to the average gradient variance across all parameters.
In 10 experiment repetitions, we performed 1000 optimisation steps and used 100 samples to estimate the gradient in each step.
In \Cref{tab:bbvi-results}, we report our results, where we omit programs with random variables that have stochastic support dimension and are incompatible with BBVI.

We compare against Pyro's \texttt{TraceGraph\_ELBO} variance reduction technique, which follows \citet{schulman2015stochasticcomputationgraphs} and makes use of dynamic data provenance analysis.
This comes with the benefit of having no over-approximation issues, but also comes with the drawback of not being able to support programs with control flow dependencies.
For Pyro we decreased the number of samples to 100 and number of repetitions to 1 due to long runtime.
In \Cref{tab:bbvi-results}, we see that for programs with only data dependencies we achieve comparable variance reduction with lower runtime penalty.
Furthermore, with our approach we were able to reduce the variance for two models with control flow dependencies Aircraft and Marsaglia.
Note that for the same reasons as in LMH, the static optimisation is only effective for the unrolled version of the HMM.
\begin{table}[h]
    \caption{Average variance of the standard and improved ELBO gradient estimator in BBVI. The variance is averaged over all parameters, 1000 iterations, and 100 samples per iteration.
    The "cost" column reports the factor by which the runtime increased for the improved estimator.
    For reference, we include the variance reduction factor achieved by Pyro's \texttt{TraceGraph\_ELBO}, where 
     unsupported programs are denoted with "-".}
    \label{tab:bbvi-results}
    \centering
\begin{tabular}{l r r r r c r r r}
        \hline
         \textbf{}& \textbf{Baseline} &  \multicolumn{4}{c}{\textbf{Static Analysis}} &  \multicolumn{2}{c}{\textbf{Pyro}}\\
         \textbf{Model} & avg. var. & avg. var. &  \multicolumn{2}{c}{red. factor} & cost & red. factor & cost \\
         \hline
Aircraft & 7.922e+03 & 7.969e+01 & \textcolor{ForestGreen}{99.29} & \textcolor{ForestGreen}{$\megaslowarrow$} & 1.89 & - & - \\
Bayesian Network & 4.160e+02 & 4.371e+00 & \textcolor{ForestGreen}{95.06} & \textcolor{ForestGreen}{$\megaslowarrow$} & 1.94 & 95.09 &  7.52 \\
Geometric & 1.634e-01 & 1.726e-01 & \textcolor{Gray}{0.94} & \textcolor{Gray}{$\samearrow$} & 0.53 & - & - \\
GMM (fixed \#clusters) & 8.043e+09 & 1.014e+06 & \textcolor{ForestGreen}{8092.80} & \textcolor{ForestGreen}{$\megaslowarrow$} & 2.26 & 9848.79 &  8.69 \\
HMM & 2.116e+03 & 1.279e+03 & \textcolor{Gray}{1.65} & \textcolor{Gray}{$\samearrow$} & 7.49 & 1254.24 &  8.67 \\
HMM (no loop) & 2.116e+03 & 2.349e+00 & \textcolor{ForestGreen}{901.01} & \textcolor{ForestGreen}{$\megaslowarrow$} & 1.50 & 1254.24 &  8.64 \\
Hurricane & 3.362e-01 & 3.609e-01 & \textcolor{Gray}{0.93} & \textcolor{Gray}{$\samearrow$} & 0.81 & - & - \\
LDA (fixed \#topics) & 7.229e+04 & 1.172e+00 & \textcolor{ForestGreen}{61646.02} & \textcolor{ForestGreen}{$\megaslowarrow$} & 5.92 & 59986.61 &  11.00 \\
Linear regression & 3.350e+05 & 3.341e+05 & \textcolor{Gray}{1.00} & \textcolor{Gray}{$\samearrow$} & 1.14 & 1.00 &  14.39 \\
Marsaglia & 7.274e-02 & 1.283e-02 & \textcolor{ForestGreen}{5.38} & \textcolor{ForestGreen}{$\slowarrow$} & 0.51 & - & - \\
PCFG & 8.838e-03 & 1.989e-02 & \textcolor{Gray}{0.56} & \textcolor{Gray}{$\samearrow$} & 0.94 & - & - \\
Pedestrian & 3.108e+06 & 3.111e+06 & \textcolor{Gray}{1.00} & \textcolor{Gray}{$\samearrow$} & 2.01 & - & - \\
\hline
\end{tabular}
\end{table}
}
\subsection{Iteratively Running Sequential Monte Carlo}\label{sec:smc}
{\revcolor
Lastly, we briefly describe how our slicing technique can be adapted to the setting of sequential Monte Carlo (SMC) ~\cite{chopin2020introductiontosmc}, which yet again takes a different approach to inference.
In SMC a number of traces {\revvcolor $\dict_{n}$}, so-called \emph{particles}, are evaluated in parallel.
Data points are added iteratively to the inference process and the particles are resampled according to their data likelihood.
More precisely at time $t$ we have data set $\mathcal{D}_{1:t}$ and define the corresponding density $p_t$.
At inference step $t$, the particle $\dict$ is logarithmically weighted according to $\log p_t({\revvcolor \dict_{n}}) - \log p_{t-1}({\revvcolor \dict_{n}})$ and the particles are resampled {\revvcolor (duplicated or discarded)} with respect to these weights.

A "naive" implementation of SMC would compute $\log p_t({\revvcolor \dict_{n}})$ from scratch, while an iterative implementation is able to compute $\log p_t({\revvcolor \dict_{n}}) - \log p_{t-1}({\revvcolor \dict_{n}})$ directly.
In probabilistic programming this is often realised by some mechanism that enables the interruption and continuation of program execution. 
We enable this optimisation by slightly adapting the slicing method of \Cref{sec:slicing} to generate sub-programs which compute the difference directly.

Namely, for sample node $N_k$ let $\cfg_k^\texttt{SMC}$ be the CFG that contains node $N$ if there exists a sample node $N_j$ such that $N$ lies on a path from $N_k$ to $N_j$ which contains no other sample nodes.
Again, $N_k$ will be the successor of the start node of $G_k^\texttt{SMC}$, but \emph{all} other samples nodes will have the end node of $G_k^\texttt{SMC}$ as sole successor.
As we only care about continuing execution from $N_k$ to the next sample node, this construction is independent of the provenance analysis {\revvcolor and \Cref{theorem:loopy-factor}}.

In our implementation, we again make use of an evaluation context which simply accumulates the density like the standard semantics. 
In Appendix~D.6, we prove correctness:
\begin{proposition}\label{theorem:smc-correctness}
    Let $\cfg$ be a CFG such that for initial program state $\progstate_0$ and trace~$\dict$, we have execution sequence
    $(\progstate_0, \text{START}) \cfgrel \cdots  \cfgrel (\progstate_i, M_i) \cfgrel \cdots \cfgrel (\progstate_l, \text{END})$ in its unrolled version $\unrolledcfg$.
    Let $(\progstate'_1, M'_1) \cfgrel \cdots  \cfgrel (\progstate'_2, M'_2)$ be a sub-sequence such that $\cfgnode(M'_1) = N_k$ and $\cfgnode(M'_2) = N_j$ are the only sample nodes.
    Then, {\revvcolor the execution sequence in the unrolled version of the sliced CFG $\cfg_k^\texttt{SMC}$  matches the sub-sequence}:
    $(\progstate'_1, \text{START}) \cfgrel (\progstate'_1, M'_1) \cfgrel \cdots  \cfgrel (\progstate'_2, M'_2)\cfgrel (\progstate'_2, \text{END}).$
\end{proposition}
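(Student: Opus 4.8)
Unlike \Cref{theorem:loopy-factor} and the earlier soundness results, this statement concerns only control flow, so the plan is to argue entirely in terms of the shapes of the two graphs and the \emph{locality} of the transition relation $\cfgrel$: by inspection of the inference rules (including \eqref{eq:sample-semantics-cfg}), whether $(\progstate, M) \cfgrel (\progstate', M')$ holds and what $\progstate'$, $M'$ are is determined solely by the node type and the expressions attached to $M$, the state $\progstate$, the fixed trace $\dict$, and the successor structure at $M$. Hence two executions that agree on the state at structurally identical nodes proceed in lockstep.

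First I would match up the two execution sequences node by node. The sliced graph $\cfg_k^\texttt{SMC}$ is, by construction, a sub-CFG: it keeps exactly those CFG nodes lying on some sample-free path from $N_k$ to a sample node, keeps their node types, expressions and successor edges, adds a fresh START with $N_k$ as successor, and redirects every sample node other than $N_k$ to END. I would check that, because $\cfg_k^\texttt{SMC}$ agrees locally with $\cfg$ on this fragment, its unrolled version agrees locally with $\unrolledcfg$ on the corresponding copies: every node of the unrolled $\cfg_k^\texttt{SMC}$ other than START and the terminal (redirected) sample-node copies has the same type, expressions and successor relation as the copy of the same CFG node in $\unrolledcfg$.

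The proof then is a short induction along the given sub-sequence $(\progstate'_1, M'_1) \cfgrel \cdots \cfgrel (\progstate'_2, M'_2)$. The START rule gives $(\progstate'_1, \text{START}) \cfgrel (\progstate'_1, M'_1)$ in the unrolled $\cfg_k^\texttt{SMC}$ (a non-branch node whose successor is the copy of $N_k$, with the state unchanged), so the two executions agree at $M'_1$. For the inductive step, every intermediate node of the sub-sequence is a copy of a non-sample node on a sample-free path out of $N_k$, hence lies in the identified fragment with unchanged local structure; by locality the next transition in the unrolled $\cfg_k^\texttt{SMC}$ is the one taken in $\unrolledcfg$, so states and nodes stay in step. (Finiteness of the sub-sequence — it is a segment of a run that reaches END — ensures only finitely many unrolled copies are involved, and the hypothesis guarantees no sample node other than $M'_1$, $M'_2$ occurs strictly between them.) Reaching $(\progstate'_2, M'_2)$, the node $M'_2$ is a copy of the sample node $N_j \neq N_k$, which in $\cfg_k^\texttt{SMC}$ has END as its sole successor; this yields the last step $(\progstate'_2, M'_2) \cfgrel (\progstate'_2, \text{END})$ and the claimed match.

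The main obstacle I expect is the boundary bookkeeping in the first two steps above: verifying that unrolling interacts correctly with the SMC slicing, and in particular that the redirection of a terminal sample node $N_j$ to END is given a semantics under which the final transition leaves the state at $\progstate'_2$ (rather than executing the sample and multiplying its density), since that contribution belongs to the sub-program rooted at $N_j$. Once the node-to-node correspondence and this boundary convention are fixed, the rest is routine, as $\cfgrel$ carries no hidden global state.
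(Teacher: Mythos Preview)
Your proposal is correct and is a careful unpacking of what the paper compresses into a single line: the paper's entire proof is ``Follows directly from \Cref{def:SMC-sliced-CFG}'', with no further argument. Your locality-of-$\cfgrel$ plus induction along the sub-sequence is exactly the content behind that claim.

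One remark on the boundary issue you flag: under the stated sample-node rule \eqref{eq:sample-semantics-cfg}, the final step $(\progstate'_2, M'_2)\cfgrel(\progstate'_2,\text{END})$ as written in the proposition would in fact update the state (both $x_j$ and $\probvar$), so the unchanged $\progstate'_2$ on the right is a minor imprecision in the paper's own statement. The paper does not address this either; for the intended use it is harmless, since the SMC weight is read off as $\log\progstate'_2(\probvar)-\log\progstate'_1(\probvar)$ \emph{before} that last transition, and the contribution of $M'_2$ is accounted for when continuing from $\cfg_j^{\texttt{SMC}}$. So your instinct that the terminal convention needs pinning down is correct, but it is a wrinkle in the proposition's phrasing rather than in your argument.
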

{\revvcolor \Cref{theorem:smc-correctness} allows us to evolve the execution sequence of a particle $\dict_{n}$ from one sample node $N_t^{n} = N_k$ to the next $N_{t+1}^{n}=N_j$ by executing the corresponding sliced CFG $\cfg_k^\texttt{SMC}$.
The weight of the particle is updated with $\log p_t(\dict_{n}) - \log p_{t-1}(\dict_{n}) = \log \sigma_2'(\probvar) - \log \sigma_1'(\probvar)$ and particles may be resampled even if the execution sequences were stopped at different sample nodes $N_t^{n}$~\cite{lunden2021correctnesssmc}.
}
 
\begin{table}[h]
    \caption{{\revcolor Average runtime of naive versus iterative implementations of SMC with 100 particles. For reference, we report the speed-up achieved by WebPPL's CPS transformation.}}
    \label{tab:smc-results}
    \centering
\begin{tabular}{l r r r r c r}
        \hline
         \textbf{} & \textbf{} & \textbf{Baseline} &  \multicolumn{3}{c}{\textbf{Static Analysis}} & \textbf{WebPPL} \\
         \textbf{Model} & \textbf{\#Data} & $\milliseconds$ & $\milliseconds$ &  \multicolumn{2}{c}{speed-up} & speed-up \\
         \hline
Dirichlet process & 50 & 238.323 & 22.374 & \textcolor{ForestGreen}{10.57} & \textcolor{ForestGreen}{$\megafastarrow$} & 8.96 \\
GMM (fixed \#clusters) & 100 & 166.480 & 31.260 & \textcolor{ForestGreen}{5.26} & \textcolor{ForestGreen}{$\megafastarrow$} & 9.23 \\
GMM (variable \#clusters) & 100 & 189.529 & 31.263 & \textcolor{ForestGreen}{5.95} & \textcolor{ForestGreen}{$\megafastarrow$} & 9.18 \\
HMM & 50 & 31.951 & 7.394 & \textcolor{ForestGreen}{4.13} & \textcolor{ForestGreen}{$\megafastarrow$} & 7.66 \\
HMM (no loop) & 50 & 49.649 & 7.420 & \textcolor{ForestGreen}{6.63} & \textcolor{ForestGreen}{$\megafastarrow$} & 6.79 \\
LDA (fixed \#topics) & 262 & 1319.553 & 397.074 & \textcolor{ForestGreen}{3.33} & \textcolor{ForestGreen}{$\megafastarrow$} & 7.25 \\
LDA (variable \#topics) & 262 & 1412.182 & 410.781 & \textcolor{ForestGreen}{3.44} & \textcolor{ForestGreen}{$\megafastarrow$} & 8.18 \\
\hline
\end{tabular}
\end{table}
}

\subsubsection{Experiment Results}

To measure the performance gains achievable with our static approach, we compare the runtime of the iterative implementation of SMC based on the generated sub-programs against a naive implementation which computes $\log p_t(\dict)$ from scratch.
In \Cref{tab:smc-results} we report the speed-ups on the subset of models which can make use of iteratively adding data points in inference.
For all models we set the number of particles to 100 and repeated the experiment 10 times.
The measured speed-ups are significant and range from 3x - 10x.
For reference, we benchmarked the runtimes achieved with WebPPL's continuation passing style (CPS) approach against a naive SMC implementation in JavaScript.
For most models this resulted in greater speed-ups compared to our static approach.
This is due to the fact that, like in \Cref{sec:speed-up}, in our approach computational overhead is introduced by caching and copying programs states when resampling particles.

\section{Related Work}\label{sec:related-work}
\subsection{Semantics of Probabilistic Programming Languages}
One of the first works to formally define the semantics of a probabilistic programming language was~\citet{kozen1979semantics}.
Derivatives of this semantics can be found throughout literature, for example see \citet{hur2015provablycorrectsampler}, \citet{dahlqvist2019semanticshigherorder}, or \citet{barthe2020foundations}.
We give a brief (and incomplete) overview of existing semantic models for PPLs.

In recent years, a lambda calculus for probabilistic programming was presented by~\citet{borgstrom2016lambda}.
This work and a domain theoretic approach for probabilistic programming~\cite{vakar2019domain} had an influence on the development of the statistical PCF~\cite{mak2021densities} (programming computable functions), which was used to prove differentiability of terminating probabilistic programs.
Probabilistic programming was also examined from a categorical perspective~\cite{heunen2017convenientcategory} and the semantics of higher order probabilistic programs is also researched~\cite{dahlqvist2019semanticshigherorder,staton2016semanticshigherorder}.


As mentioned, the semantics presented in this work extend those of \citet{gorinova2019slicstan} which formalises the core language constructs of Stan by interpreting a program as a function over a fixed set of finite variables.
In our semantics, we interpret programs as functions over traces which are mappings from addresses to values similar to dictionaries.
Another dictionary based approach to probabilistic programming can be found in \citet{cusumano2020automatingimcmc} and denotational semantics based on trace types are introduced in \citet{lew2019tracetypes}.
To the best of our knowledge there are only two existing semantic models that support user-defined addresses~\cite{lew2023stochasticprobs,lee2019towards}.
Note that denotational semantics of a simplified version of the address-based PPL Gen are given in \citet{cusumano2020genthesis}.

Furthermore, the connection of simple probabilistic programs to Bayesian networks is known~\cite{van2018introppl}.
\citet{borgstrom2011measure-transformer} consider a small imperative language and give semantics in terms of factor graphs -- a concept related to Bayesian networks.
\citet{sampson2014assertions} compile probabilistic programs to Bayesian networks by treating loops as black-box functions in order to verify probabilistic assertion statements.
Previous work extended Bayesian networks to so-called contingent Bayesian networks in order to support models with an infinite number of random variables or cyclic dependencies~\cite{milch2005contingentbayesnet}, but they cannot be straightforwardly applied to PPLs with  user-labelled sample statements and while loops.
\citet{paquet2021bayesianstrategies} presented a interpretation of probabilistic programs in game semantics, which can be seen as a refinement of Bayesian networks.
Most recently, \citet{faggian2024higher} {\revcolor introduced a higher-order functional language for discrete Bayesian networks.}

\subsection{Graphical Representations, Optimisations, and Dependency Analysis}

The advantages of representing a probabilistic model graphically are well-understood~\cite{koller2009pga}.
For this reason, some PPLs like PyMC~\cite{salvatier2016pymc}, Factorie~\cite{mccallum2009factorie}, or Infer.NET~\cite{minka2012infer} require the user to explicitly define their model as a graph.
This is not possible if we want to support while loops in our programs.
There are also other graphical representations that allow for optimised inference like binary decision diagrams~\cite{holtzen2020dice} or sum-product networks~\cite{saad2021sppl}.
Venture~\cite{mansinghka2018venture} represents traces graphically to facilitate efficient recomputation of subparts of the program.
Similar to Gen~\cite{cusumano2019gen}, Pyro~\cite{bingham2019pyro} implements the \texttt{plate} and \texttt{markov} constructs to model independence and reduce variance in {\revcolor variational inference}.

Modifying the Metropolis Hastings algorithm to exploit the structure of the model is also common practice.
\citet{wu2016swift} optimised BLOG to dynamically keep track of the Markov blanket of a variable among other improvements.
For their Metropolis Hastings implementation, they reported significant speed-ups for the Hurricane model, instances of the Urn model, and GMMs with unknown number of clusters.
The Shred system~\cite{yang2014generatingefficient} traces Church~\cite{goodman2012church} programs  {\revcolor and} slices them to identify the minimal computation needed to compute the MH acceptance probability, and translates them to C++.
However, their approach comes with high compilation cost, {\revcolor is not proven correct nor} open-source.
\citet{hur2014slicing} {\revcolor developed a different theory for slicing probabilistic programs.}
In their unpublished implementation, they observe significant runtime improvements if  statements irrelevant to the return expression {\revcolor are removed from the program}.
In contrast, we slice programs with respect to the dependencies between sample statements in the program and there are no redundancies in the programs of our benchmark set.
\citet{castellan2019intensional} present a theoretical foundation for incremental computation in the Metropolis Hastings algorithm by graphically modelling the data-dependencies between sample statements.
However, they {\revcolor only consider a first-order probabilistic programs and did not implement their approach}.

{\revcolor Furthermore, there are many more optimisation opportunities to reduce the variance in variational inference like control variates~\cite{paisley2012vicontrolvariates} or path-derivatives~\cite{roeder2017stickingthelanding}.
For a review on advances in variational inference see \citet{zhang2018advancesinvi}.
Also, there are many methods to facilitate iterative implementations of SMC: WebPPL~\cite{webppl} and Anglican~\cite{tolpin2015anglican} use a continuation-passing-style transform; Turing~\cite{ge2018turing} uses a tape of instructions; some PPLs like Birch~\cite{murray2018birch} require the user to explicitly write the model in an iterative fashion.}



Lastly, the probabilistic dependency structure of programs has also been exploited for applications different from optimising inference.
\citet{baah2008probabilisticdependencygraph} leverage the structure for fault diagnosis and \citet{bernstein2020transforming} extract the factor graph of programs to automate model checking for models implemented in Stan.
In general, static analysis for probabilistic programming is an active research field.
For instance, \citet{lee2019towards} developed a static analysis to verify so-called guide programs for stochastic variational inference in Pyro.
\citet{gorinova2021conditionalindependence} developed an information flow type system and program transformation to automatically marginalise out discrete variables in Stan.
Another example is the extension of the Stan compiler with a pedantic mode that statically catches common pitfalls in probabilistic programming~\cite{bernstein2023abstractions}.
Recently, a framework for statically analysing probabilistic programs in a language-agnostic manner was presented~\cite{boeck2024lasapp}.
For a 2019 survey on static analysis for probabilistic programming see \citet{bernstein2019static}.

\section{Conclusion}\label{sec:conclusion}
In this work, we addressed the open question to what extent a probabilistic program with user-labelled sample statements and while loops can be represented graphically.
We built on existing formal semantics to support these language features. We developed a sound static analysis that approximates the dependency structure of random variables in the program and used this structure to factorise the implicitly defined program density.
The factorisation is {\revcolor equivalent to the known Bayesian network factorisation in a restricted case, but novel for programs with dynamic addresses and loops.
Lastly, we demonstrated that by generating a sub-program for each factor based on the static dependency structure, we can 1) speed-up a single-site Metropolis Hastings algorithm; 2) reduce gradient variance in variational inference; 3) facilitate an iterative implementation of sequential Monte Carlo.
This work is the first to make these optimisations statically available for the considered program class.
We proved correctness and empirically showed that our approach performs comparably or better than existing methods.
}

\section*{Data-Availability Statement}
The source code to reproduce the experiments of \Cref{sec:casestudy} is made available on Zenodo~\citerepl{} and for reuse on \url{https://github.com/ipa-lab/PPLStaticFactor}.
The shared repository includes the implementation of the static provenance analysis (\Cref{sec:staticprov}) for a research PPL, the generation of sub-programs according to the dependency structure (\Cref{sec:casestudy}), as well as a standard and optimised implementations of the LMH, {\revcolor BBVI, and SMC} algorithms.
It also includes the 16 benchmark programs (\cref{sec:eval-results}) written in our research PPL, with and without combinators in Gen, in recursive fashion in WebPPL, {\revcolor and in Pyro}.

\begin{acks}
    The authors gratefully acknowledge the financial support under the scope of the COMET program within the K2 Center “Integrated Computational Material, Process and Product Engineering (IC-MPPE)” (Project No \grantnum{FFG}{886385}). This program is supported by the Austrian Federal Ministries for Economy, Energy and Tourism (BMWET) and for Innovation, Mobility and Infrastructure (BMIMI), represented by the \grantsponsor{FFG}{Austrian Research Promotion Agency (FFG)}{https://www.ffg.at/}, and the federal states of Styria, Upper Austria and Tyrol.
    Additionally, this research has been partially funded by the \grantsponsor{FWF}{Austrian Science Fund (FWF)}{https://www.fwf.ac.at} under grant \grantnum[https://doi.org/10.55776/PIN3275223]{FWF}{10.55776/PIN3275223}.
\end{acks}

\bibliographystyle{ACM-Reference-Format}
\bibliography{references}

\newpage
\appendix
\section{Control-Flow Graph Construction}

\begin{definition}
A CFG is a tuple $(N_\textnormal{start}, N_\textnormal{end}, \mathcal{N}, \mathcal{E})$ comprised of start and end node, a set of branch, join, and assign nodes, $\mathcal{N}$, and a set of edges, $\mathcal{E} \subseteq \mathcal{N}^*\times \mathcal{N}^*$, where $\mathcal{N}^*=\mathcal{N} \cup \{N_\textnormal{start}, N_\textnormal{end}\}$.
\end{definition}

\begin{itemize}
    \item The CFG of a skip-statement, $S = \kw{skip}$, consists of only start and end node:\newline
    \includegraphics[scale=1.]{tikz/tikz-skip.pdf}
    
    $G = \big(N_\textnormal{start}, N_\textnormal{end}, \emptyset, \{(N_\textnormal{start},N_\textnormal{end})\}\big)$
    
    \medskip
    
    \item  The CFG of an assignment, $x = E$, or sample statement $x = \kw{sample}(E_0, f(E_1,\dots,E_n))$, is a sequence of start, assign, and end node:\newline
    
    \includegraphics[scale=1.]{tikz/tikz-assign.pdf}\newline
    \includegraphics[scale=1.]{tikz/tikz-sample.pdf}

    \[G = \big(N_\textnormal{start}, N_\textnormal{end}, \{A\}, \{(N_\textnormal{start},A), (A,N_\textnormal{end})\}\big),\]
    where $A = \textnormal{Assign}(x = \dots)$ is the corresponding assign node.

    \medskip
    \item The CFG for a sequence of statements, $S = S_1; S_2$, is recursively defined by "stitching together" the CFGs $\cfg_1=(N_\textnormal{start}^1, N_\textnormal{end}^1, \mathcal{N}^1, \mathcal{E}^1)$ and $\cfg_2=(N_\textnormal{start}^2, N_\textnormal{end}^2, \mathcal{N}^2, \mathcal{E}^2)$ of $S_1$ and $S_2$:\newline

    \includegraphics[scale=1.]{tikz/tikz-seq.pdf}
    \begin{align*}
        G = \big(&N_\textnormal{start}^1,\, N_\textnormal{end}^2, \, \mathcal{N}^1 \cup \mathcal{N}^2,\, \mathcal{E}^1 \setminus \{(N^1_e,N_\textnormal{end}^1)\} \cup  \mathcal{E}^2 \setminus \{(N_\textnormal{start}^2,N^2_s)\} \cup \{(N^1_e,N^2_s)\} \big),
    \end{align*}
    where $N^1_e$ is the predecessor node of $N_\textnormal{end}^1$ in $\cfg_1$, $(N^1_e,N_\textnormal{end}^1) \in \mathcal{E}^1$, and $N^2_s$ is the successor node of $N_\textnormal{start}^2$ in $\cfg_2$, $(N_\textnormal{start}^2,N^2_s) \in \mathcal{E}_2$.

    \medskip
    \item The CFG of an if statement, $S=(\kw{if}\;E\;\kw{then}\;S_1\,\kw{else}\;S_2)$, is also defined in terms of the CFGs of $S_1$ and $S_2$ together with a branch node $B = \textnormal{Branch}(E)$ and a join node $J$:  \newline
    \includegraphics[scale=1.]{tikz/tikz-if.pdf}
    \begin{align*}
        G = \big(&N_\textnormal{start},\, N_\textnormal{end}, \, \mathcal{N}^1 \cup \mathcal{N}^2 \cup \{B, J\},\\
        &\mathcal{E}^1 \setminus \{(N_\textnormal{start}^1,N^1_s), (N^1_e,N_\textnormal{end}^1)\} \cup  \mathcal{E}^2 \setminus \{(N_\textnormal{start}^2,N^2_s), (N^2_e,N_\textnormal{end}^2)\} \, \cup\\
        &\quad\{(N_\textnormal{start},B), (B,N^1_s), (B,N^2_s), (N^1_e,J), (N^2_e,J), (J,N_\textnormal{end})\}\big),
    \end{align*}
    The nodes $N^1_s$, $N^1_e$, $N^2_s$, $N^2_e$ are defined as in the last case.
    This construction assumes that $\mathcal{E}^1 \neq \emptyset$ and $\mathcal{E}^2 \neq \emptyset$.
    Otherwise, the branch would be a $\kw{skip}$ statement and we would connect $B$ with $J$ directly.
    The two predecessors of $J$ are denoted with $\textnormal{predcon}(J)$ and $\textnormal{predalt}(J)$ depending on the path.

    \medskip
    \item The CFG of $(\kw{while}\;E\;\kw{do}\;S)$ is given in terms of the sub-cfg $\cfg_S$ of statement $S$. \newline

    \includegraphics[scale=1.]{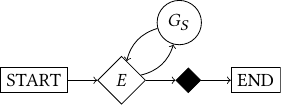}\newline
 
    \begin{align*}
        G = \big(&N_\textnormal{start},\, N_\textnormal{end}, \, \mathcal{N}^S \cup \{B, J\},\\
        &\mathcal{E}^S \setminus \{(N_\textnormal{start}^S,N^S_s), (N^S_e,N_\textnormal{end}^S)\} \cup \{(N_\textnormal{start},B), (B,N^S_s), (B,J), (N^S_e,B),(J,N_\textnormal{end})\}\big),
    \end{align*}
    where $\cfg_S=(N_\textnormal{start}^S, N_\textnormal{end}^S, \mathcal{N}^S, \mathcal{E}^S)$.

    Again, if $S = \kw{skip}$, then we would have the edge $(B,B)$ instead of $(B,N^S_s)$ and $(N^S_e,B)$.
    \medskip

    \item The unrolled CFG of $(\kw{while}\;E\;\kw{do}\;S)$ is defined as follows:\newline
    
    \includegraphics[scale=1.]{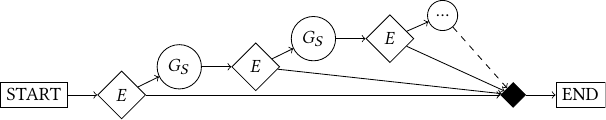}

    \begin{align*}
        G = \big(&N_\textnormal{start},\, N_\textnormal{end}, \\&
        \medcup_{i\in\Nats}\mathcal{N}^{S_i} \cup \{B_i\colon i\in\Nats\}\cup \{J\},\\
        &\medcup_{i\in\Nats}\mathcal{E}^{S_i} \setminus \{(N_\textnormal{start}^{S_i},N^{S_i}_s), (N^{S_i}_e,N_\textnormal{end}^{S_i})\} \cup \{(N_\textnormal{start},B_1), (B,J),(J,N_\textnormal{end})\} \, \cup \\ &\quad \{(B_i, N^{S_i}_s)\colon i\in\Nats\} \cup \{(N^{S_i}_e, B_{i+1})\colon i\in\Nats\}\big),
    \end{align*}
    where $\cfg_{S_i}=(N_\textnormal{start}^{S_i}, N_\textnormal{end}^{S_i}, \mathcal{N}^{S_i}, \mathcal{E}^{S_i})$ is the $i$-th copy of sub-graph $\cfg_{S}$.
    $N^{S_i}_s$ and $N^{S_i}_e$ are defined as in the cases before.
    
    If $S = \kw{skip}$, then we would have the edges $(B_i,B_{i+1})$ instead of the edges $(B_i, N^{S_i}_s)$ and $(N^{S_i}_e, B_{i+1})$.

\end{itemize}

\newpage

\section{Proofs of Sections 2-4}
\subsection{Proof of \cref{theorem:CFG-semantics-equivalence}}\label{proof:CFG-semantics-equivalence}
\begin{statement}
For all programs $S$ without while loops and corresponding control-flow graph $\cfg$, it holds that for all traces $\dict$ {\revcolor and all initial states $\sigma_0$}
\[{\revcolor \bar{p}_S(\dict,\sigma_0) = \bar{p}_\cfg(\dict,\sigma_0).}\]
\end{statement}
\begin{proof}
With structural induction we prove the more general equivalence:
\[(\progstate, S) \opsemrel \progstate' \iff  (\progstate, \text{START}) \underset{S}{\cfgrel} \dots \underset{S}{\cfgrel} (\progstate', \text{END}).\]

We write $\underset{S}{\cfgrel}$ to indicate that the predecessor-successor relationship follow the CFG of program~$S$.

For programs consisting of a single skip, assignment, or sample statement, the equality can be seen by directly comparing the operational semantic rules to the graph semantics.

For sequence $S = (S_1;S_2)$, by induction assumption we have
\begin{align*}
(\progstate, S_1) \opsemrel \progstate' &\iff  (\progstate, N_\text{start}^1) \underset{S_1}{\cfgrel} \dots \underset{S_1}{\cfgrel} (\progstate'_1, N^1) \underset{S_1}{\cfgrel}  (\progstate', N_\text{end}^1)\\
(\progstate', S_2) \opsemrel \progstate'' &\iff  (\progstate', N_\text{start}^2) \underset{S_2}{\cfgrel} (\progstate', N^2) \underset{S_2}{\cfgrel} \dots \underset{S_2}{\cfgrel} (\progstate'', N_\text{end}^2)
\end{align*}
By definition of the CFG of $S$, we see that 
\begin{displaymath}
(\progstate, S) \opsemrel \progstate'' \iff  (\progstate, N_\text{start}^1) \underset{S}{\cfgrel} \dots \underset{S}{\cfgrel} (\progstate'_1, N^1) \underset{S}{\cfgrel} (\progstate', N^2) \underset{S}{\cfgrel} \dots \underset{S}{\cfgrel}  (\progstate'', N_\text{end}^2)
\end{displaymath}

In a similar way, we proof the result for if statements $S = (\kw{if}\;E\;\kw{then}\;S_1\,\kw{else}\;S_2)$ by considering the two cases $\progstate(E) = \true$ and $\progstate(E) = \false$.
\begin{displaymath}
(\progstate, S) \opsemrel \progstate' \iff  (\progstate, N_\text{start}) \underset{S}{\cfgrel} (\progstate, \text{Branch}(E)) \underset{S}{\cfgrel} \dots \underset{S}{\cfgrel} (\progstate', N_\text{join}) \underset{S}{\cfgrel}  (\progstate', N_\text{end})
\end{displaymath}
Depending on the case, we replace the dots with the path of $S_i$ without the start and end nodes. Note that in our notation in both cases $(\progstate, S_i) \opsemrel \progstate'$.
\end{proof}

\subsection{Proof of \Cref{theorem-evalf}}

To prove soundness of \Cref{alg:staticprov}, we list properties of the provenance set that will become useful in subsequent proofs.
\begin{lemma}\label{lemma-staticprov}
Let $N$ be a CFG node and $x$ a variable. $\staticprov$ has following properties:
\begin{itemize}
    \item For all nodes $N'$ it holds that
    \begin{align}\label{eq:staticprov-same-rd}
        \RD(N,x) = \RD(N',x) \implies \staticprov(N,x) = \staticprov(N',x).
    \end{align}
    \item If $N' \in \RD(N,x)$ is an assignment node, $N' = \textnormal{Assign}(x = E)$, then
    \begin{align}\label{eq:staticprov-assign}
        \staticprov(N',E) = \medcup_{y \in \textnormal{vars}(E)} \staticprov(N',y) \subseteq \staticprov(N,x).
    \end{align}
    \item if $N' \in \RD(N,x)$ is a sample node, $N' = \textnormal{Assign}(x = \kw{sample}(E_0,\dots))$, then
    \begin{align}\label{eq:staticprov-sample}
        \staticprov(N',E_0) \subseteq \staticprov(N,x)\quad \text{and}\quad \textnormal{addresses}(N') \subseteq \staticprov(N,x).
    \end{align}
    \item For all $N' \in \RD(N,x)$ it holds that for all branch parents $ N_\text{bp} = \textnormal{Branch}(E_{N_\text{bp}}) \in \CP(N')$  we have $\staticprov(N_\text{bp},E_{N_\text{bp}}) \subseteq \staticprov(N,x)$ and 
    \begin{gather}\label{eq:staticprov-control}
    \begin{split}
        \medcup_{N_\text{bp} \in \CP(N')} \staticprov(N_\text{bp},E_{N_\text{bp}}) = \medcup_{N_\text{bp} \in \CP(N')} \medcup_{y \in \textnormal{vars}(E_{N_\text{bp}})} \staticprov(N_\text{bp},y) \subseteq \staticprov(N,x).
    \end{split}
    \end{gather}
\end{itemize}
\end{lemma}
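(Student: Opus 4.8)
The plan is to recast the output of \cref{alg:staticprov} as a reachability computation over an auxiliary graph on node--variable pairs; once that is done, all four properties drop out as elementary facts about reachability, with no induction over the CFG and no appeal to the operational semantics.

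First I would fix what \cref{alg:staticprov} computes. Let $\mathcal D$ be the directed graph whose vertices are pairs $(N,x)$ of a CFG node and a variable, with an edge $(N,x)\to(M,y)$ precisely when some $N'\in\RD(N,x)$ witnesses it, i.e.\ when either (i) $M=N'$ with $N'=\textnormal{Assign}(x=E)$ and $y\in\textnormal{vars}(E)$, or $M=N'$ with $N'=\textnormal{Assign}(x=\kw{sample}(E_0,\dots))$ and $y\in\textnormal{vars}(E_0)$, or (ii) $M=N_\textnormal{bp}=\textnormal{Branch}(E_{N_\textnormal{bp}})\in\CP(N')$ and $y\in\textnormal{vars}(E_{N_\textnormal{bp}})$. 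Set $\textnormal{contrib}(N,x)=\medcup\{\textnormal{addresses}(N'):N'\in\RD(N,x),\ N'\ \text{a sample node}\}$. Inspecting the body of the outer \textbf{while}-loop shows that popping a pair $(N,x)$ adds exactly $\textnormal{contrib}(N,x)$ to $\textnormal{result}$ and enqueues exactly the still-unmarked out-neighbours of $(N,x)$ in $\mathcal D$; since marking only suppresses duplicate enqueuings, a run started from $(N,x)$ processes exactly the vertices reachable from $(N,x)$ (repetitions, if any, do not change the accumulated result, and for a loop-free program the CFG is finite so the run terminates). Hence
\[
\staticprov(N,x)=\medcup_{(M,y)\in\textnormal{reach}_{\mathcal D}(N,x)}\textnormal{contrib}(M,y)=\textnormal{contrib}(N,x)\cup\medcup\{\staticprov(M,y):(N,x)\to(M,y)\},
\]
where $\textnormal{reach}_{\mathcal D}(N,x)$ includes $(N,x)$ itself.

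From this characterisation, \eqref{eq:staticprov-same-rd} is immediate: both $\textnormal{contrib}$ and the out-neighbour set of a vertex depend on it only through its $\RD$-set, so $\RD(N,x)=\RD(N',x)$ forces $\staticprov(N,x)=\staticprov(N',x)$. For \eqref{eq:staticprov-assign}: if $N'\in\RD(N,x)$ is $\textnormal{Assign}(x=E)$, clause (i) makes every $(N',y)$ with $y\in\textnormal{vars}(E)$ an out-neighbour of $(N,x)$, hence reachable from it, hence $\staticprov(N',y)\subseteq\staticprov(N,x)$; unioning over $y\in\textnormal{vars}(E)$ and applying the lifting \eqref{eq:prov-expr} gives $\staticprov(N',E)=\medcup_{y\in\textnormal{vars}(E)}\staticprov(N',y)\subseteq\staticprov(N,x)$. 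For \eqref{eq:staticprov-sample}: the same argument with $E_0$ in place of $E$ gives $\staticprov(N',E_0)\subseteq\staticprov(N,x)$, and $\textnormal{addresses}(N')\subseteq\textnormal{contrib}(N,x)\subseteq\staticprov(N,x)$ since the contribution of the root vertex is already part of $\staticprov(N,x)$.

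Finally, \eqref{eq:staticprov-control} is the same step one level further out: for $N'\in\RD(N,x)$ and $N_\textnormal{bp}=\textnormal{Branch}(E_{N_\textnormal{bp}})\in\CP(N')$, clause (ii) makes each $(N_\textnormal{bp},y)$ with $y\in\textnormal{vars}(E_{N_\textnormal{bp}})$ an out-neighbour of $(N,x)$, so $\staticprov(N_\textnormal{bp},y)\subseteq\staticprov(N,x)$ for each such $y$; unioning over $y$ gives $\staticprov(N_\textnormal{bp},E_{N_\textnormal{bp}})\subseteq\staticprov(N,x)$, and unioning further over $N_\textnormal{bp}\in\CP(N')$ gives the displayed inclusion, the middle equality once more being \eqref{eq:prov-expr}. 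The only genuine work, then, is the first step---pinning down the reachability characterisation of \cref{alg:staticprov}, in particular checking that the marking discipline loses no pairs and that the pairs enqueued when processing $(N,x)$ are exactly its $\mathcal D$-successors; everything after that is routine reachability bookkeeping.
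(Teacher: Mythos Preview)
Your proposal is correct and is essentially the same argument as the paper's, just made explicit: the paper's proof is the one-line observation that whenever a pair $(N',y)$ is pushed onto the queue during the computation of $\staticprov(N,x)$, the run of the algorithm started from $(N',y)$ alone produces a subset of $\staticprov(N,x)$, and then notes the properties follow directly. Your auxiliary graph $\mathcal D$ is precisely the ``push'' relation and your reachability characterisation is precisely what that one line asserts; the only difference is that you spell out the bookkeeping for each of the four items, while the paper leaves this implicit.
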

\begin{proof}
If $(N',y)$ is pushed into the queue, then $\staticprov(N',y) \subseteq \staticprov(N,x)$ as
the algorithm for computing $\staticprov(N',y)$ starts with only $(N',y)$ in the queue.
With this fact the properties follow directly from the definition of the algorithm.
\end{proof}

\begin{statement}
    Let $\cfg$ be the CFG of a program $S$ without while loop statements.
    For each node $N$ and variable~$x$, there exists an evaluation function
    \[\evalf{N}{x} \in \restricedfs{\staticprov(N,x)}{\values},\] such that for all traces $\dict$, {\revcolor initial program state $\progstate_0$ as in \Cref{def:prog-semantics}}, and execution sequence
    \[(\progstate_0, \text{START}) \cfgrel \cdots  \cfgrel (\progstate_i, N_i) \cfgrel \cdots \cfgrel (\progstate_l, \text{END})\]
    we have
    \begin{displaymath}
    \progstate_i(x) = \evalf{N_i}{x}(\dict).
    \end{displaymath}
\end{statement}
In the proof we also use the evaluation function of expressions $\evalf{N}{E} \in \restricedfs{\staticprov(N,E)}{\values}$, where  $\evalf{N}{c}(\dict) = c$ and   $\evalf{N}{g(E_1,\dots,E_n)}(\dict) = g(\evalf{N}{E_1}(\dict),\dots,\evalf{N}{E_n}(\dict)).$
\begin{proof}\hfill\newline
    \textbf{Step 1. Defining $\evalf{N}{x}$ and proving $\evalf{N}{x} \in \restricedfs{\staticprov(N,x)}{\values}$}.
    
    We begin by defining $\evalf{N}{x}$ inductively by noting that every node $N \neq \text{START}$ has exactly one predecessor except for join nodes which have two predecessors. Further, the CFG $\cfg$ is a finite directed acyclic graph with a single root node which makes mathematical induction possible.
    \medskip

    \textbf{Base case.} For $((x_i \mapsto \none, \probvar \mapsto 1), \text{START})$, let $\evalf{\text{START}}{\probvar}(\dict) = 1$ and  $\evalf{\text{START}}{x}(\dict) = \none$ for all variables~$x$. We have $\evalf{\text{START}}{x} \in \restricedfs{\emptyset}{\values}$.
    \medskip
    
    \textbf{Induction step.}
    
    \textbf{Case 1.} Let $N$ be a node with single predecessor $N'$ ($N$ is not a join node).    
    We define $\evalf{N}{x}$ based on the node type of $N'$, for which we make the induction assumption that the evaluation function $\evalf{N'}{x} \in \restricedfs{\staticprov(N',x)}{\values}$ exists for all variables.
    \begin{itemize}
        \item \textbf{Case 1.1.} $N'$ is an assignment node, $N' = \textnormal{Assign}(z = E)$.
        Define
        \begin{displaymath}
            \evalf{N}{y}(\dict) =
            \begin{cases}
                \evalf{N'}{E}(\dict) & \text{if}\, y = z\\
                \evalf{N'}{y}(\dict) & \text{otherwise}.
            \end{cases}
        \end{displaymath}
        By assumption $\evalf{N'}{y} \in \restricedfs{\staticprov(N',y)}{\values}$ for all $y$ and thus, $\evalf{N'}{E} \in \restricedfs{\staticprov(N',E)}{\values}$.
        \medskip
        
        For $x \neq z$, $\RD(N,x) = \RD(N',x)$ and by \cref{eq:staticprov-same-rd} we have $\staticprov(N,x) = \staticprov(N',x)$, implying that $\evalf{N}{x} \in \restricedfs{\staticprov(N,x)}{\values}$.

        For $x = z$, $\RD(N,x) = \{N'\}$ since $N'$ is the single predecessor of $N$ and assigns $x$. 
        By \cref{eq:staticprov-assign} $\staticprov(N',E) \subseteq \staticprov(N,x)$, and by \cref{eq:restrictedfs-subset}
        \[\evalf{N}{x} = \evalf{N'}{E} \in \restricedfs{\staticprov(N',E)}{\values} \subseteq \restricedfs{\staticprov(N,x)}{\values}.\]

        \item \textbf{Case 1.2.} $N'$ is a sample node, $N' = \textnormal{Assign}(z = \kw{sample}(E_0, f(E_1,\dots,E_n)))$.
        Define
        \begin{displaymath}
            \evalf{N}{y}(\dict) =
            \begin{cases}
                \dict(\evalf{N'}{E_0}(\dict)) & \text{if}\, y = z\\
                \evalf{N'}{y}(\dict) & \text{otherwise.}
            \end{cases}
        \end{displaymath}

        As before, for $x \neq z$, $\evalf{N}{x} \in \restricedfs{\staticprov(N,x)}{\values}$.
        
        For $x = z$, by definition $\evalf{N'}{E_0}(\dict) \in \textnormal{addresses}(N')$.
        Since $N' \in \RD(N,x) =\{N'\}$ it follows from \cref{eq:staticprov-sample} that $\textnormal{addresses}(N') \subseteq \staticprov(N,x)$ and $\staticprov(N',E_0) \subseteq \staticprov(N,x)$. By \cref{eq:restrictedfs-subset}
        \[\evalf{N}{x} \in \restricedfs{\textnormal{addresses}(N')\cup\staticprov(N',E_0)}{\values} \subseteq \restricedfs{\staticprov(N,x)}{\values}.\]

        \item \textbf{Case 1.3.} $N'$ is a branch or join node. Define $\evalf{N}{y}=\evalf{N'}{y}$ for all variables $y$.
    \end{itemize}

    \textbf{Case 2.} Let $J$ be a join node with two predecessor nodes, $N'_1 = \text{predcons}(J)$, $N'_2 = \text{predalt}(J)$.
    Let $B = \text{Branch}(E)$ be the corresponding branching node, $(B,J) \in \textnormal{BranchJoin}(\cfg)$.
    
    For variable $x$, there are two cases:
    \begin{enumerate}
        \item There exists a reaching definition $N' \in \RD(J,x)$ on a path $(B,\dots,N',\dots,J)$.
        Define $\evalf{1}{x}$ and $\evalf{2}{x}$ depending on the node type of $N'_1$ and $N'_2$ as in step 1.
        The functions $\evalf{i}{x}$ compute the value of $x$ after executing $N'_i$.
        As before one can see that $\evalf{i}{x} \in \restricedfs{\staticprov(J,x)}{\values}$.
        Let
        
        \begin{displaymath}
            \evalf{J}{x}(\dict) \coloneqq \textnormal{ife}(\evalf{B}{E}(\dict), \evalf{1}{x}(\dict), \evalf{2}{x}(\dict)) = \begin{cases}
                \evalf{1}{x}(\dict) & \text{if } \evalf{B}{E}(\dict) = \true, \\
                \evalf{2}{x}(\dict) & \text{otherwise.}
            \end{cases}
        \end{displaymath}

        Since $B \in \CP(N')$, by \cref{eq:staticprov-control} $\staticprov(B, E) \subseteq \staticprov(J,x)$ and thus $\evalf{J}{x} \in \restricedfs{\staticprov(J,x)}{\values}$.
        
        \item All reaching definitions of $x$ (if there are any) are predecessors of $B$.
        Then, $\RD(J,x) = \RD(B,x)$ and by \cref{eq:staticprov-same-rd} $\staticprov(J,x) = \staticprov(B,x)$. Define $\evalf{J}{x} \coloneqq \evalf{B}{x}.$ 
        
    \end{enumerate}
\bigskip
\textbf{Step 2: Proving that $\progstate_i(x) = \evalf{N_i}{x}(\dict)$.}

Having defined the evaluation functions, we now have to prove that they indeed compute the correct values.
For trace $\dict$ let the corresponding execution sequence be
\[(\progstate_0, \text{START}) \cfgrel \cdots  \cfgrel (\progstate_i, N_i) \cfgrel \cdots \cfgrel (\progstate_l, \text{END}).\]
We will prove that $\progstate_i(x) = \evalf{N_i}{x}(\dict)$ by induction.
\medskip

\textbf{Base case}. The base case immediately follows from the definition $\progstate_0(x) = \evalf{\text{START}}{x}(\dict)$.

\textbf{Induction step}. For transition $(\progstate_i, N_i) \cfgrel (\progstate_{i+1}, N_{i+1})$ the assumption is that $\progstate_j(x) = \evalf{N_j}{x}(\dict)$ holds for all variables $x$ and $j\le i$.

If $N_{i+1}$ is not a join node, then $\progstate_{i+1}(x) = \evalf{N_{i+1}}{x}(\dict)$ follows directly from the CFG semantics and definition of $\evalf{N_{i+1}}{x}$ in case 1 of step 1.

Lastly, we consider the case when $N_{i+1}$ is a join node with branching node $N_j = \text{Branch}(E)$, $(N_j,N_{i+1}) \in \text{BranchJoin}(\cfg)$, for some $j \le i$. Let $N'_1 = \text{predcons}(N_{i+1})$, $N'_2 = \text{predalt}(N_{i+1})$.

By the semantics of if statements, $\evalf{B}{E}(\dict) = \true$ iff $N_i = N'_1$ and  $\evalf{B}{E}(\dict) = \false$ iff $N_i = N'_2$. 
If there is a node $N'$ between $N_j$ and $N_{i+1}$ in the execution sequence that assigns $x$, $N' \in \RD(N_{i+1},x)$, then by definitions of $\evalf{1}{x}$, $\evalf{2}{x}$, and $\evalf{N_{i+1}}{x}$ we have
\[\progstate_{i+1}(x) = \textnormal{ife}(\evalf{B}{E}(\dict), \evalf{1}{x}(\dict), \evalf{2}{x}(\dict)) = \evalf{N_{i+1}}{x}(\dict).\]
If there is no such $N'$, then
\[\progstate_{i+1}(x) = \progstate_{j}(x) = \evalf{N_j}{x}(\dict) = \evalf{N_{i+1}}{x}(\dict).\]

\end{proof}

\subsection{Proof of \cref{theorem:unrolled-CFG-semantics-equivalence}}\label{proof:unrolled-CFG-semantics-equivalence}
\begin{statement}
For all program $S$ with CFG $\cfg$ and unrolled CFG $\unrolledcfg$, it holds that for all traces $\dict$ and {\revcolor initial states $\sigma_0$}
\[{\revcolor \bar{p}_S(\dict,\sigma_0) = \bar{p}_{\cfg}(\dict,\sigma_0) = \bar{p}_{\unrolledcfg}(\dict,\sigma_0).}\]
\end{statement}
\begin{proof}
    We continue the proof of \cref{theorem:CFG-semantics-equivalence} for a while statement.   
    Note the well-known equivalence for while statements in terms of the operational semantics:
    \[(\progstate, \kw{while}\;E\;\kw{do}\;S) \opsemrel \progstate' \iff (\progstate, \kw{if}\; E\;\kw{then}\; (S;\; \kw{while}\;E\;\kw{do}\;S)\;\kw{else}\;\kw{skip}) \opsemrel \progstate'\]
    Thus, if the program terminates for $\dict$, it is equivalent to $S;\dots;S$ ($S$ repeated $n$ times), where $\progstate_0 = \progstate$, $(\progstate_i, S) \opsemrel \progstate_{i+1}$, $\progstate_n = \progstate'$, such that $\progstate_i(E) = \true$ for $i < n$, $\progstate_n(E) = \false$.

    In this case, it can be shown that the (unrolled) CFG of $S;\dots;S$ is also equivalent to the (unrolled) CFG of the while statement.

    If the program does not terminate, then ${\revcolor \bar{p}_S(\dict,\sigma_0) = \bar{p}_{\cfg}(\dict,\sigma_0) = \bar{p}_{\unrolledcfg}(\dict,\sigma_0)} = \textnormal{undefined}$.
\end{proof}

\subsection{Proof of \cref{lemma:staticprov-unrolled-cfg}}\label{proof:staticprov-unrolled-cfg}
\begin{statement}
 Let $\cfg$ be the CFG and $\unrolledcfg$ the unrolled CFG of program $S$. For each node $M \in \unrolledcfg$ following equation holds.
    \begin{equation}
         \{\cfgnode(M')\colon M' \in\RD(M,x)\} \subseteq \RD(\cfgnode(M), x)
    \end{equation}
\end{statement}
\begin{proof}
    We will prove following Lemma by structural induction:
    
    \textsc{Lemma. }{\it For every path $\vv{M} = (M',\dots,M)$ in the unrolled CFG $\unrolledcfg$, there is a path in the CFG $\cfg$, $\vv{N} = (N',\dots,N)$, such that
    \[\cfgnode(M') = N', \quad \cfgnode(M) = N,\]
    \[\textnormal{AssignNodes}(\vv{N},x) = \left\{\cfgnode(M_i)\colon M_i \in \textnormal{AssignNodes}(\vv{M},x)\right\} = \textnormal{AssignNodes}_\textnormal{cfg}(\vv{M},x),\]
    for all variables $x$, where $\textnormal{AssignNodes}(\vv{N},x)$ is the set of all CFG nodes that assign $x$ in path $\vv{N}$ and $\textnormal{AssignNodes}_\textnormal{cfg}(\vv{M},x)$ is the set of all assign nodes in path $\vv{M}$ mapped to $\cfg$ with $\cfgnode$.
    }
    \medskip
    
    With this lemma we can prove the statement:

    Let $M' \in \RD(M,x)$. Thus, there exists a path $\vv{M} = (M',\dots,M)$, such that $M'$ is the only assign node for $x$ before $M$.
    By the lemma, there is a path in the CFG $\vv{N} = (N',\dots,N)$, such that $\cfgnode(M') = N'$ and $\cfgnode(M) = N$.
    
    Further, since $\textnormal{AssignNodes}(\vv{N},x) = \textnormal{AssignNodes}_\textnormal{cfg}(\vv{M},x) \subseteq \{\cfgnode(M'), \cfgnode(M)\} = \{N',N\}$, the node~$N'$ is also the only node in the path $\vv{N}$ before $N$ that assigns $x$.
    
    Thus, $\cfgnode(M') = N' \in \RD(N,x) = \RD(\cfgnode(M),x).$
    \medskip
    
    \textsc{Proof of Lemma.} By structural induction.

    For CFGs consisting only of a single assign node, the unrolled CFGs are identical and the statement follows.

    For a sequence of two statements $(S_1; S_2)$, let $\cfg$ be the CFG with sub-graphs $\cfg_1$ and $\cfg_2$ for statements $S_1$ and $S_2$ respectively.
    Let $\unrolledcfg$ the unrolled CFG with sub-graphs $\unrolledcfg_1$ and $\unrolledcfg_2$.
    We consider the only interesting case where $M' \in \unrolledcfg_1$ and $M \in \unrolledcfg_2$:
    \begin{displaymath}
        \vv{M} = (\underbrace{M,\dots,M_i}_{\in \unrolledcfg_1}, \underbrace{M_{i+1},\dots,M}_{\in \unrolledcfg_2}).
    \end{displaymath}
    For sub-paths $\vv{M_1} \coloneqq (M,\dots,M_i, \textnormal{END})$ and $\vv{M_2} \coloneqq (\textnormal{START},M_{i+1},\dots,M)$ we apply the induction assumption to get two paths $\vv{N_1} = (N',\dots,N_1, \text{END})$, $\vv{N_2} = (\text{START}, N_2,\dots,N)$, with $\cfgnode(M') = N'$ and  $\cfgnode(M) = N$.
    The corresponding CFG path in $\cfg$ is $\vv{N} \coloneqq (N',\dots,N_1, N_2, \dots, N)$ as
    \begin{align*}
        \textnormal{AssignNodes}_\textnormal{cfg}(\vv{M},x) &= \textnormal{AssignNodes}_\textnormal{cfg}(\vv{M_1},x) \cup \textnormal{AssignNodes}_\textnormal{cfg}(\vv{M_2},x)\\
        &= \textnormal{AssignNodes}(\vv{N_1},x) \cup \textnormal{AssignNodes}(\vv{N_2},x) = \textnormal{AssignNodes}(\vv{N},x).
    \end{align*}
    
    For an if statement, let $\cfg$ be the CFG with branch node $\tilde B$, join node $\tilde J$, and sub-graphs $\cfg_1$, $\cfg_2$.
    Let $\unrolledcfg$ be the unrolled CFG with corresponding $B$, $J$, $\unrolledcfg_1$, and $\unrolledcfg_2$.
    We only show the statement for the case
    \[\vv{M} = (B,\underbrace{M',\dots,M}_{\in \unrolledcfg_i}, J).\]
    Again, we apply the induction assumption to $\vv{M_i} \coloneqq (\textnormal{START}, M',\dots,M, \textnormal{END})$ and get a path in~$\cfg_i$: $\vv{N_i} = (\textnormal{START}, N',\dots, N, \textnormal{END})$. Since $\cfgnode(B) = \tilde B$, $\cfgnode(J) = \tilde J$, which are not assign nodes, and $\textnormal{AssignNodes}(\vv{N_i}) = \textnormal{AssignNodes}_\textnormal{cfg}(\vv{M_i})$, the statement follows for the path $(\tilde B, N',\dots, N, \tilde J)$.
    \medskip
    
    Lastly, for a while loop, let $\cfg$ be the CFG with branch node $\tilde B$, join node $\tilde J$, and sub-graph $\cfg_S$. Let $\unrolledcfg$ be the unrolled CFG with $B_i$ branch nodes, $J$ join node, and $\unrolledcfg_i$ sub-graphs. For the sake of brevity we consider three cases:

    \begin{itemize}
        \item For $M' \in \unrolledcfg_i$ and $M \in \unrolledcfg_j$
        \[\vv{M} = (\underbrace{M',\dots,M_i}_{\in \unrolledcfg_i},B_{i+1},\dots,B_j,\underbrace{M_j,\dots,M}_{\in \unrolledcfg_j})\]
        we apply the induction assumption to the sub-paths and construct the path
        \[\vv{N} = (\underbrace{N',\dots,N_i}_{\in \cfg_S},\tilde B,\dots,\tilde B,\underbrace{N_j,\dots,N}_{\in \cfg_S}))\]
        where $\tilde B$ and $B_i$ are do not contribute to $\textnormal{AssignNodes}(\vv{N})$ or $\textnormal{AssignNodes}(\vv{M})$.
        \item For  $M' \in \unrolledcfg_i$ and $J$
        \[\vv{M} = (\underbrace{M',\dots,M_i}_{\in \unrolledcfg_i},B_{i+1},\dots,B_l,J)\]
        we construct the path
        \[\vv{N} = (\underbrace{N',\dots,N_i}_{\in \cfg_S},\tilde B,\dots, \tilde B, \tilde J)\]
        \item For the path $\vv{M} = (B_l,J)$  the path in the CFG is $\vv{N} = (\tilde B, \tilde J)$.
    \end{itemize}    
\end{proof}

\begin{statement}
 Let $\cfg$ be the CFG and $\unrolledcfg$ the unrolled CFG of program $S$. For each node $M \in \unrolledcfg$ the following equation holds.
    \begin{equation}
        \CP_\textnormal{cfg}(M) \coloneqq \{\cfgnode(M')\colon M' \in\CP(M)\} = \CP(\cfgnode(M))
    \end{equation}
\end{statement}
\begin{proof}
    We prove this statement by structural induction.
    \medskip
    
    For CFGs only consisting of a single assign node, the unrolled CFGs are identical, there are no branch parents, and the statement follows.
    \medskip
    
    The sequencing of two statements $(S_1;S_2)$ does not change the branch parents of any node.
    \medskip

    For an if statement, let $\cfg$ be the CFG with branch node $\tilde B$, join node $\tilde J$, and sub-graphs $\cfg_1$, $\cfg_2$.
    Let $\unrolledcfg$ be the unrolled CFG with corresponding $B$, $J$, $\unrolledcfg_1$, and $\unrolledcfg_2$.
    Every node $M$ in $\unrolledcfg_i$ has branch parents $\left(\CP(M) \cap \unrolledcfg_i\right) \cup \{B\}$.
    
    By induction assumption $\CP_\textnormal{cfg}(M) \cap \cfg_i = \CP(\cfgnode(M)) \cap \cfg_i$ and thus
    \begin{align*}
        \CP_\textnormal{cfg}(M) &= \left\{ \cfgnode(M')\colon M' \in \CP(M) \cap \unrolledcfg_i \right\} \cup \{\cfgnode(B)\}\\
        &= \left(\CP(\cfgnode(M)) \cap \cfg_i\right) \cup \{\tilde B\} = \CP(\cfgnode(M)).
    \end{align*}
    Nodes $B$, $\tilde B$, $J$, and $\tilde J$ do not have branch parents. 
    \medskip
    
    Lastly, for a while loop, let $\cfg$ be the CFG with branch node $\tilde B$, join node $\tilde J$, and sub-graph $\cfg_S$. Let~$\unrolledcfg$ be the unrolled CFG with $B_i$ branch nodes, $J$ join node, and $\unrolledcfg_i$ sub-graphs.

    Every  node $M$ in $\unrolledcfg_i$  has branch parents $\left( \CP(M) \cap \unrolledcfg_i\right) \cup \{B_1,\dots,B_i\}$.

    The CFG node $\cfgnode(M) \in \cfg_S$ has branch parents $\left( \CP(\cfgnode(M)) \cap \cfg_S\right) \cup \{\tilde B\}$.

    Since $\cfgnode(B_j) = \tilde B$ and by induction assumption $\CP_\textnormal{cfg}(M) \cap \cfg_S = \CP(\cfgnode(M)) \cap \cfg_S$, we get
    \begin{align*}
    \CP_\textnormal{cfg}(M) &= \left\{\cfgnode(M')\colon M' \in \CP(M) \cap \unrolledcfg_i \right\} \cup \{\cfgnode(B_1),\dots,\cfgnode(B_i)\}\\
    &= \left(\CP(\cfgnode(M)) \cap \cfg_S\right) \cup \{\tilde B\} = \CP(\cfgnode(M)).
    \end{align*}
\end{proof}

\begin{statement}
Let $\cfg$ be the CFG and $\unrolledcfg$ the unrolled CFG of program $S$. For each node $M \in \unrolledcfg$ the following equation holds.
    \begin{equation}
        \staticprov(M,x) \subseteq \staticprov(\cfgnode(M),x)
    \end{equation}
\end{statement}
\begin{proof}
    Follows directly from the two equations $\{\cfgnode(M')\colon M' \in\RD(M,x)\} \subseteq \RD(\cfgnode(M), x)$ and $\{\cfgnode(M')\colon M' \in\CP(M)\} = \CP(\cfgnode(M))$, since $\staticprov(M,x)$ is defined in terms of $\RD$ and $\CP$.
\end{proof}

\subsection{Proof of \Cref{theorem-evalf-loopy}}

\begin{statement}
    Let $\unrolledcfg$ be the  \emph{unrolled} CFG of a program $S$. 
    For each node $M$  and variable $x$, there exists an evaluation function
    \[\evalf{M}{x} \in \restricedfs{\staticprov(M,x)}{\values},\] such that for all traces $\dict$, {\revcolor initial program state $\progstate_0$ as in \Cref{def:prog-semantics}}, and execution sequence
    \[(\progstate_0, \text{START}) \cfgrel \cdots  \cfgrel (\progstate_i, M_i) \cfgrel \cdots \cfgrel (\progstate_l, \text{END})\]
    we have
    \begin{displaymath}
    \progstate_i(x) = \evalf{M_i}{x}(\dict).
    \end{displaymath}
\end{statement}

\begin{proof}
    
    First, we have to justify the use of mathematical induction to prove properties on a potentially infinite graph.
    For $M_1,M_2 \in \unrolledcfg$ define the relation $M_1 \prec M_2 \Leftrightarrow M_1 \text{ is parent of } M_2$ (i.e. there is an edge from $M_1$ to $M_2$ in $\unrolledcfg$).
    By construction of $\unrolledcfg$, there are no infinite descending chains, $M_i \in \unrolledcfg$ for $i\in\Nats$ such that $M_{i+1} \prec M_i$.
    This makes $\prec$ a \emph{well-founded} relation, i.e. every non-empty subset has a minimal element.
    As a result, we may apply induction with respect to this relation (as has been done in the proof of \Cref{theorem-evalf}) for graphs with infinite nodes.
    This proof principle is called well-founded or Noetherian induction.

    To complete the proof, we only need to consider one additional case:
    
    \noindent
    \textbf{Case 3.} Let $J$ be a join node of a while loop with the parents $B_i = \textnormal{Branch}(E)$, $i \in \Nats$.
    The function $\evalf{B_i}{x} \in \restricedfs{\staticprov(B_i,x)}{\values}$ exists by induction assumption and computes the value of $x$ \emph{before} executing the while loop body for the $i$-th time.
    If there is a reaching definition $M' \in \RD(J,x)$ on a path $(B_i,\dots,M',\dots,J)$, define
    \begin{displaymath}
        \evalf{J}{x}(\dict) \coloneqq \begin{cases}
            \evalf{B_i}{x}(\dict) & \text{if } \exists i \in \Nats\colon\forall j < i\colon\evalf{B_j}{E}(\dict) = \true \land \evalf{B_{i}}{E}(\dict) = \false, \\
            \none & \text{otherwise.}
        \end{cases}
    \end{displaymath}
    Further, as $M'$ belongs to the sub-CFG of the while loop body, we have that for every $B_i$ there is a $M_i' \in \RD(J,x)$ on the path from $B_i$ to $J$.
    Since $B_i \in \CP(M_i')$, by \Cref{eq:staticprov-control} $\staticprov(B_i,E) \subseteq \staticprov(J,x)$ {\revcolor for all $i$} and thus $\evalf{J}{x} \in \restricedfs{\staticprov(J,x)}{\values}$.
    If there is no reaching definition in the while loop body sub-CFG, then $\staticprov(J,x) = \staticprov(B_1,x)$ and define $\evalf{J}{x} = \evalf{B_1}{x}$.
    
\end{proof}

\subsection{Proof of \Cref{theorem:loopy-factor}}

\begin{statement}
    Let $\cfg$ be the CFG for a program $S$.
    Let $N_1,\dots,N_K$ be all sample nodes in $\cfg$.
    For each sample node $N_k = \textnormal{Assign}(x_k = \kw{sample}(E_0^k, f^k(E_1^k,\dots,E_{n_k}^k)))$, let
    \[A_k = \textnormal{addresses}(N_k) \cup  \medcup_{i = 0}^{n_k} \staticprov(N_k, E_i^k) \cup \medcup_{N' \in \CP(N_k)}\staticprov(N',{\revcolor \textnormal{condexp}(N')}).\]
    \newline
    Then, there exist functions $p_k \in  \restricedfs{A_k}{\pdfvalues}$ such that {\revcolor for all $\dict\in\dicts$}, if $p_S(\dict) \neq \undefval$, then
    \[p_S(\dict) = p_\cfg(\dict) = p_{\unrolledcfg}(\dict) = \prod_{k=1}^K p_k(\dict).\]
\end{statement}

\begin{proof}
    Let $\unrolledcfg$ be the \emph{unrolled} CFG of program $S$.
    Denote the (countably many) sample nodes in $\unrolledcfg$ with
    $M_j = \textnormal{Assign}(x^j = \kw{sample}(E_0^j, f^j(E_1^j,\dots,E_{n_j}^j)))$.
    Like in the proof of Theorem~\ref{theorem:simple-factor}, define for each $M_j$, $b_j(\dict) \coloneqq \bigwedge_{M' \in \CP(M_j)} t^j_{M'}(\evalf{M'}{{\revcolor \textnormal{condexp}(M')}}(\dict))$ where
    \[b_j \in \restricedfs{\medcup_{M' \in \CP(M_j)}\staticprov(M',{\revcolor \textnormal{condexp}(M')})}{\{\true,\false\}}.\]
    Again, $b_j(\dict) = \true$ if $M_j$ is in the execution sequence for $\dict$ else $\false$. Define
    \[\tilde{p}_j(\dict) \coloneqq \dirac{b_j(\dict)} \pdf_{f^j}\left(\dict(\evalf{M_j}{E_0^j}(\dict)); \evalf{M_j}{E_1^j}(\dict), \dots, \evalf{M_j}{E_{n_j}^j}(\dict)\right) + (1-\dirac{b_j(\dict)}).\]
    We have $\tilde{p}_j \in \restricedfs{\tilde{A}_j}{\pdfvalues}$, for
    \[\tilde{A}_j = \textnormal{addresses}(M_j)\cup \medcup_{i = 0}^{n_j} \staticprov(M_j, E_i^j) \cup \medcup_{M' \in \CP(M_j)}\staticprov(M',{\revcolor \textnormal{condexp}(M')}).\]
    Finally, we group the nodes $M_j$ by their corresponding CFG node.
    \[p_k = \prod_{j\colon \cfgnode(M_j) = N_k} \tilde{p}_j\]
    If $\cfgnode(M_j) = N_k$, then $\tilde{A}_j \subseteq A_k$, since for $i=0,\dots,n$ we have $E_i^k = E_i^j$ and by \cref{eq:staticprov-subset}
    \[\staticprov(M_j,E_i^j) \subseteq \staticprov(\cfgnode(M_j),E_i^j) = \staticprov(N_k,E_i^k)\]
    and by \cref{eq:cp-subset} it holds that
    \[\{\cfgnode(M')\colon M' \in \CP(M_j)\} = \CP(\cfgnode(M_j))  = \CP(N_k)\]
    such that
    \begin{align*}
        \medcup_{M' \in \CP(M_j)}\staticprov(M',{\revcolor \textnormal{condexp}(M')}) &\subseteq \medcup_{M' \in \CP(M_j)}\staticprov(\cfgnode(M'),{\revcolor \textnormal{condexp}(\cfgnode(M'))})\\
        &= \medcup_{N' \in \CP(N_k)}\staticprov(N',{\revcolor \textnormal{condexp}(N')}).
    \end{align*}
    Thus, for all $j$, $\cfgnode(M_j) = N_k$, we have $\tilde{p}_j \in \restricedfs{A_k}{\pdfvalues}$ which implies $p_k \in \restricedfs{A_k}{\pdfvalues}.$
    Again, we know that for the end-state $\sigma_l(\probvar) = \evalf{\text{END}}{\probvar}(\dict) = p_\unrolledcfg(\dict)$ and since $b_j(\dict) = \true$ if and only if $M_j$ is in the execution sequence for $\dict$, we have $p_\unrolledcfg(\dict) = \prod_{k=1}^K p_k(\dict)$.
\end{proof}

\newpage

\section{Towards a Measure-theoretic Interpretation of the Operational Semantics}\label{appendix:measure-theory}

In this section, we will briefly discuss how one may construct a measure space on traces such that the function $\dict \mapsto p_S(\dict)$ is indeed a Radon-Nikodym derivative of a measure on this space.
This construction is related to denotational semantics based on trace types~\cite{lew2019tracetypes,lew2023stochasticprobs}. 
We first define a measure space for each value type:
\begin{align*}
    \tau &:\coloneqq \{\none\}, \Ints, \Reals, \Reals^2, \dots \\
    \measurespace_\tau = (M_\tau, \sigmaalgebra_\tau, \refmeasure_\tau) &:\coloneqq (\{\none\},\powerset(\{\none\}), \delta_\none), \; (\Ints,\powerset(\Ints),\#), \; (\Reals,\mathcal{B},\lambda), \; (\Reals^2,\mathcal{B}^2,\lambda^2),\; \dots
\end{align*}
These measure spaces model the support of common distributions, but could also be extended by building sum and product spaces.
They are comprised of the typical Dirac measure, power sets, counting measure, Borel sets, and Lebesgue measures.

Each trace $\dict$ can be assigned an unique type function $\tau_\dict\colon\strings\to\types$ such that for all $\alpha\in\strings$ we have $\dict(\alpha) \in \tau_\dict(\alpha)$.
The types are assumed to be disjoint (e.g. $\Ints \cap \Reals = \emptyset$).
Let
\[\mathcal{S}_{ < \infty} = \{s\colon\strings\to\types \colon |\{\alpha\in\strings\colon s(\alpha) \neq \{\none\}\}| < \infty\}\]
be the set of all type functions that correspond to traces with a finite number of non-$\none$ values.
We enumerate the addresses, $\strings = \{\alpha_i\colon i \in \Nats\}$, and define a measureable space for each type function, $s\colon\strings\to\types$,
$\mathscr{M}_s = (M_s, \Sigma_s) = \bigotimes_{i \in \Nats\colon s(\alpha_i) \neq \{\none\} }\mathcal{M}_{s(\alpha_i)}$.
If $s \in \mathcal{S}_{ < \infty}$, then $\mathscr{M}_s$ is a finite dimensional product space with product measure $\nu_s$.
For the set of traces $\dicts$, we define the $\sigma$-algebra $\Sigma_\dicts$ with
$A \in \Sigma_\dicts \Leftrightarrow \forall s\colon\strings\to\types \colon \pi_s(\{\dict \in A\colon \tau_\dict = s\}) \in \Sigma_s,$ where $\pi_s\colon\dicts\to M_s$ is the canonical projection.
The reference measure on $(\dicts, \Sigma_\dicts)$ is \[\nu(A) = \sum_{s \in \mathcal{S}_{ < \infty}} \nu_s(\pi_s(\{\dict \in A\colon \tau_\dict = s\}))\] which effectively is a measure of traces with a finite number of non-$\none$ values.


If a primitive distribution $f$ with $\text{support}(f) \subseteq \tau_f$ and arguments $v_1\dots,v_n$ corresponds to the measure $\mu_{f, v_1,\dots, v_n}$ with Radon-Nikodym derivative ${\mathtt{d}\mu_{f, v_1,\dots, v_n}}/{\mathtt{d}\nu_{\tau_f}}$, we define
\begin{displaymath}
    \pdf_f(v;v_1,\dots, v_n) = 
    \begin{cases}
        (\mathtt{d}\mu_{f, v_1,\dots, v_n}/\mathtt{d}\nu_{\tau_f})(v) & {\text{if } v \in \tau_f}\\
        0 & \text{otherwise}.
    \end{cases}
\end{displaymath}
The measure-theoretic interpretation of a program is the measure $\mu_S(A) = \int_{\dicts_\textnormal{valid} \cap A} p_S \, \mathtt{d}\nu$, where
\[\dicts_\textnormal{valid} \coloneqq \{\dict\colon p_S(\dict) \neq \undefval \land \forall \alpha\colon \dict(\alpha) \neq \none \Rightarrow p_S(\dict[\alpha \mapsto \none]) = \undefval\}\]
is the set of all traces with well-defined density such that changing a value at any address to $\none$ leads to an undefined density.
That is, for $\dict \in \dicts_\textnormal{valid}$ all addresses that are not used during the execution of the program are set to $\none$ and all address that are used are set to non-$\none$ values.
Such a trace necessarily has a finite number of non-$\none$ values as the density for non-termination is $\undefval$ which is the only possibility for executing an infinite number of sample statements.
This condition is important, because the countably infinite sum in definition of $\nu$ has better convergence properties on $\dicts_\textnormal{valid}$.
Finally, $p_S$ the Radon-Nikodym derivative of $\mu_S$ with respect to $\nu$ on $\dicts_\textnormal{valid}$. 

We emphasize that this interpretation demands a careful analysis of the measurability of $\dicts_\textnormal{valid}$ and $p_S$ which is beyond the scope of this work.
This requires additional assumptions about the program, at the very least that the primitives $g$ are measurable.
Such measurability analysis may be done along the lines of \citet{borgstrom2016lambda} and \citet{lee2019towards}.

\pagebreak

Note that this construction can handle programs where random variables are of mixed-type like the variable $\mathtt{y}$ in the example below:
\begin{lstlisting}[style=Python, numbers=none]
x = sample("x", Bernoulli(p))
if x == 1 then
    y = sample("y", Bernoulli(0.5))
else
    y = sample("y", Normal(0.,1.))
\end{lstlisting}
We list the density according to the semantics for different example trace of different type:
\begin{align*}
    p(\{x\mapsto 1\colon\mathbb{Z}, y\mapsto 1\colon\mathbb{Z}\}) &= \pdf_\textnormal{Bernoulli}(1; 0.5) \cdot \pdf_\textnormal{Bernoulli}(1; 0.5)  = 0.5\cdot 0.5\\
    p(\{x\mapsto 1\colon\mathbb{Z}, y\mapsto 1.0\colon\mathbb{R}\}) &= \pdf_\textnormal{Bernoulli}(1; 0.5) \cdot 0 = 0 \\
    p(\{x\mapsto 0\colon\mathbb{Z}, y\mapsto 1.0\colon\mathbb{R}\}) &= \pdf_\textnormal{Bernoulli}(0; 0.5) \cdot \pdf_\textnormal{Normal}(1.0; 0.0, 1.0) \approx 0.5 \cdot 0.24197\\
    p(\{x\mapsto 0\colon\mathbb{Z}, y\mapsto 1\colon\mathbb{Z}\}) &= \pdf_\textnormal{Bernoulli}(0; 0.5) \cdot 0 = 0
\end{align*}

\subsection{Remark on Conditioning}
We have defined density semantics such that $p_S$ denotes the joint probability distribution over all variables in the model in the sense described above.
In our construction, we do not differentiate between latent and observed variables.
To incorporate observed data, we can define the trace $\obsdict$, where addresses are mapped to the observed data points.
All addresses that do not correspond to observed data map to $\none$.
Loosely, the function $\dict \mapsto p_S(\dict \oplus \obsdict)$ then corresponds to the \emph{unnormalised} posterior density, 
where we merge the trace $\dict$, encapsulating latent variables, with the observed trace $\obsdict$ like below
\begin{displaymath}
    (A \oplus B)(\alpha) = \begin{cases}
        B(\alpha) & \text{if} \, B(\alpha)  \neq \none,\\
        A(\alpha) & \text{otherwise.}
    \end{cases}
\end{displaymath}
This is only a loose correspondence, because it is not clear whether this function can be normalized by integrating over all observed variables.
It is beyond the scope of this work to precisely define conditioning semantics.
For work discussing similar issues, see \citet{lee2019towards}.

\newpage

\section{Proofs of Section 5}
{\revcolor

\subsection{Sliced CFG Construction}

First, we formally define the CFG slicing with respect to the factorisation of \Cref{theorem:loopy-factor}.
Here, we use $\staticprovnode$ instead of $\staticprov$ to make the dependencies between CFG nodes explicit (see \Cref{sec:staticprov}).

\begin{definition}
    Let $\cfg=(N_\textnormal{start},N_\textnormal{end},\mathcal{N},\mathcal{E})$ be a CFG with $K$ sample nodes, which we enumerate with $N_j = \textnormal{Assign}(x_j = \kw{sample}(E_0^j, f^j(E_1^j,\dots,E_{n_j}^j)))$ for $j = 1,\dots,K$.\newline    
    For each $k \in \{1,\dots,K\}$, we define the set of indices of sample nodes that the factor $p_k$ depends on:
    \[\mathcal{J}_k \coloneqq \{j \in \{1,\dots,K\}\colon  N_k \in \medcup_{i=0}^{n_j}\staticprovnode(N_j,E^j_i) \cup \medcup_{N'\in \CP(N_j)} \staticprovnode(N',\textnormal{condexp}(N'))\}.\]
    Next, we define the set of nodes that lie on path from $N_k$ to some of its dependent node $N_j$:
    \[\mathcal{N}'_k \coloneqq \{N \in \mathcal{N}\colon \exists j\in\mathcal{J}_k\colon\exists\text{ path } \nu=(N_k,\dots,N_j)\colon N \in \nu \text{ and } N_k \text{ only appears at the start of }\nu\}.\]
    We transform all sample nodes that are in $\mathcal{N}_k'$, but do not depend on $N_k$ to a \emph{read} statement, which in contrast to sample statements do not contribute to the density:
    $\mathcal{N}_k \coloneqq \{\psi(N)\colon N \in \mathcal{N}_k'\}$, where \[\psi(N)=
    \begin{cases}
    \textnormal{Assign}(x_j = \kw{read}(E_0^j,f(E_1^j,\dots,E_{n_j}^j)) & \text{if}\,\, \nexists j\in\mathcal{J}_k\cup\{k\}\colon N_j = N\\
    N & \text{otherwise}
    \end{cases}\]
    The semantic of read statements is given below. Note that we have to keep all $E_i$ in the statement to check their values against $\none$ like in \Cref{eq:sample-semantics}. We omit this technicality in the main text.
    \begin{equation}\label{eq:read-semantics}
    \frac{\forall i\colon V_i = \progstate(E_i)\land V_i\neq\none \quad V_0 \in \strings \quad V = \dict(V_0) \neq \none}{(\sigma, x = \kw{read}(E_0,f(E_1,\dots,E_{n})))\opsemrel \progstate[x \mapsto V]}
    \end{equation}
    We restrict the set of edges to the subset of nodes $\mathcal{N}_k$:
    \[\mathcal{E}_k' \coloneqq \{(\psi(e_1),\psi(e_2))\colon (e_1,e_2) \in \mathcal{E}\land e_1 \in \mathcal{N}_k' \land e_2 \in \mathcal{N}_k'\}.\]
    We connect the new start node $N^k_\textnormal{start}$ to $N_k$ and the new end node
    $N^k_\textnormal{end}$ to all $N_j$, $j\in\mathcal{J}_k$, that do not have a successor in $\mathcal{E}_k'$:
    \[\mathcal{E}_k = \mathcal{E}_k'\cup \{(N^k_\textnormal{start}, N_k)\} \cup \{(N_j,N^k_\textnormal{end})\colon j\in\mathcal{J}_k \land \nexists N \in \mathcal{N}_k\colon (N_j,N)\in \mathcal{E}_k'\}.\]
    Finally, define the \emph{CFG sliced with respect to $N_k$} as
    \[\cfg_k \coloneqq (N^k_\textnormal{start}, N^k_\textnormal{end},\mathcal{N}_k, \mathcal{E}_k).\]
\end{definition}
Note that in this construction, we allow that the end node $N^k_\textnormal{end}$ has multiple predecessors.
This does not change the CFG semantics.
Next, we rewrite \Cref{lemma:staticprov-unrolled-cfg} and \Cref{lemma-staticprov} in terms of $\staticprovnode$ and prove two technical lemmas needed to establish the correctness result in the next section.

\begin{lemma}\label{lemma-staticprovnode}
Let $M$ be a CFG node and $x$ a variable. $\staticprovnode$ has following properties:
\begin{itemize}
    \item If $M' \in \RD(M,x)$ is an assignment node, $M' = \textnormal{Assign}(x = E)$, then
    \begin{align}\label{eq:staticprovnode-assign}
        \staticprovnode(M',E) = \medcup_{y \in \textnormal{vars}(E)} \staticprovnode(M',y) \subseteq \staticprovnode(M,x).
    \end{align}
    \item if $M' \in \RD(M,x)$ is a sample node, $M' = \textnormal{Assign}(x = \kw{sample}(E_0,\dots))$, then
    \begin{align}\label{eq:staticprovnode-sample}
        \staticprovnode(M',E_0) \subseteq \staticprovnode(M,x)\quad \text{and}\quad M' \in \staticprovnode(M,x).
    \end{align}
    \item For all $M' \in \RD(M,x)$ it holds that for all branch parents $ N_\text{bp} = \textnormal{Branch}(E_{M_\text{bp}}) \in \CP(M')$ 
    \begin{align}\label{eq:staticprovnode-control}
        \staticprovnode(M_\text{bp},E_{N_\text{bp}}) \subseteq \staticprovnode(M,x).
    \end{align}
    \item We have following relationship between $\staticprovnode$ on CFG $\cfg$ and on unrolled CFG $\unrolledcfg$:
    \begin{equation}\label{eq:staticprovnode-icfg}
        \staticprovnode(M,x) \subseteq \staticprovnode(\cfgnode(M),x).
    \end{equation}
\end{itemize}
\end{lemma}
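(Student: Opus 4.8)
The four claims are the $\staticprovnode$-analogues of \Cref{lemma-staticprov} and of the third equation of \Cref{lemma:staticprov-unrolled-cfg}, so the plan is to rerun those arguments essentially verbatim. The only way $\staticprovnode$ differs from $\staticprov$ is line~8 of \Cref{alg:staticprov}: where $\staticprov$ accumulates the (possibly infinite) string set $\textnormal{addresses}(N')$, $\staticprovnode$ accumulates the single node~$N'$. This change leaves the queue dynamics (lines 9--26) untouched, so every structural fact about which pairs $(N',y)$ get marked and pushed carries over unchanged; only the returned set is reinterpreted. The first thing I would re-establish is the queue invariant: \emph{if, while computing $\staticprovnode(M,x)$, the pair $(M',y)$ is ever pushed onto the queue, then $\staticprovnode(M',y)\subseteq\staticprovnode(M,x)$.} This holds because the run computing $\staticprovnode(M',y)$ starts with exactly $(M',y)$ in its queue, hence every pair it processes is also processed (transitively) by the run computing $\staticprovnode(M,x)$, while the only non-recursive contribution to the result, line~8, is monotone in the set of processed pairs.

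Granting the invariant, the first three claims drop out by inspecting the loop body for a reaching definition $M'\in\RD(M,x)$. If $M'=\textnormal{Assign}(x=E)$, line~11 sets $E'\gets E$ and lines 14--18 push $(M',y)$ for each $y\in\textnormal{vars}(E)$, which, together with the lifting \eqref{eq:prov-expr} read for $\staticprovnode$, yields $\staticprovnode(M',E)=\medcup_{y\in\textnormal{vars}(E)}\staticprovnode(M',y)\subseteq\staticprovnode(M,x)$ --- the first claim. If $M'=\textnormal{Assign}(x=\kw{sample}(E_0,\dots))$, line~8 puts $M'$ into the result and line~9 sets $E'\gets E_0$, so $M'\in\staticprovnode(M,x)$ and lines 14--18 push $(M',y)$ for $y\in\textnormal{vars}(E_0)$, giving the second claim. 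In either case, lines 20--25 push $(M_\text{bp},y)$ for every branch parent $M_\text{bp}\in\CP(M')$ and every variable $y$ in its conditional expression, giving the third claim. For the fourth claim I would mirror the proof of \eqref{eq:staticprov-subset}: $\staticprovnode(M,x)$ is built purely from iterated applications of $\RD(\cdot,\cdot)$ and $\CP(\cdot)$, exactly as $\staticprov$ is, and \Cref{lemma:staticprov-unrolled-cfg} already provides $\{\cfgnode(M')\colon M'\in\RD(M,x)\}\subseteq\RD(\cfgnode(M),x)$ and $\{\cfgnode(M')\colon M'\in\CP(M)\}=\CP(\cfgnode(M))$; an induction that tracks, for each pair $(M',y)$ enqueued in the $\unrolledcfg$-run, that $(\cfgnode(M'),y)$ is enqueued (or already marked) in the $\cfg$-run then yields the containment, with the line~8 contributions on the $\unrolledcfg$-side mapping under $\cfgnode$ into those on the $\cfg$-side.

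The genuine subtlety --- and the step I expect to be the main obstacle --- lies in the fourth claim: \Cref{alg:staticprov} does not terminate on the unrolled CFG $\unrolledcfg$ (it has countably many nodes), so I first have to fix a meaning for $\staticprovnode$ on $\unrolledcfg$, namely as the union of all pairs ever enqueued, equivalently the least fixed point of the one-step expansion, and then argue that the $\cfgnode$-correspondence between the two runs is preserved along this (possibly transfinite) process. The marking on the $\cfg$-side, which expands each $\cfg$-pair at most once, is what keeps this induction well-founded; everything else is routine bookkeeping already carried out for \Cref{lemma-staticprov} and \Cref{lemma:staticprov-unrolled-cfg}.
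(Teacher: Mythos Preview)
Your proposal is correct and takes essentially the same approach as the paper: the paper does not give a separate proof of this lemma at all, merely introducing it as a rewrite of \Cref{lemma-staticprov} and \Cref{lemma:staticprov-unrolled-cfg} in terms of $\staticprovnode$, and those underlying proofs are precisely the queue-invariant argument and the $\RD$/$\CP$-reduction you reproduce. Your explicit discussion of the non-termination of \Cref{alg:staticprov} on $\unrolledcfg$ and the need to read $\staticprovnode$ there as a least fixed point is a point the paper leaves implicit.
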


\begin{lemma}\label{lemma:alpha-sample-node-in-both}
    Let $\dict_1$ and $\dict_2$ be traces that only differ at address $\alpha$, i.e. $\dict_1(\alpha) \neq \dict_2(\alpha)$ and $\forall \beta \neq \alpha\colon \dict_1(\beta) = \dict_2(\beta).$
    Further, let $M_\alpha$ bet the only sample node in the execution sequences of $\dict_1$ and $\dict_2$ whose address expression evaluates to $\alpha$.
    If CFG node $M$ appears in \emph{both} execution sequences and variable $x$,
    \begin{align*}
        (\progstate_0, \text{START}) \cfgreloverset{\dict_1} \cdots  \cfgreloverset{\dict_1} (\progstate_1, M) \cfgreloverset{\dict_1} \cdots \cfgreloverset{\dict_1} (\progstate_{l_1}, \text{END}),\\
        (\progstate_0, \text{START}) \cfgreloverset{\dict_2} \cdots  \cfgreloverset{\dict_2} (\progstate_2, M) \cfgreloverset{\dict_2} \cdots \cfgreloverset{\dict_2} (\progstate_{l_2}, \text{END}),
    \end{align*}
then the inequality $\progstate_1(x) = \evalf{M}{x}(\dict_1) \neq \evalf{M}{x}(\dict_2) = \progstate_2(x)$ implies $M_\alpha \in \staticprovnode(M,x)$.
\newline
Or equivalently, $M_\alpha \notin \staticprovnode(M,x)$ implies $ \progstate_1(x) = \evalf{M}{x}(\dict_1) = \evalf{M}{x}(\dict_2) =  \progstate_2(x)$.
\end{lemma}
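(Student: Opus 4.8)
The plan is to work entirely in the unrolled CFG $\unrolledcfg$ and to argue by well-founded induction on the parent relation $\prec$, mirroring the proof of \Cref{theorem-evalf-loopy}. I will prove simultaneously the analogous statement for expressions: if $M$ appears in both execution sequences and $\evalf{M}{E}(\dict_1)\neq\evalf{M}{E}(\dict_2)$, then $M_\alpha\in\staticprovnode(M,E)\coloneqq\medcup_{y\in\textnormal{vars}(E)}\staticprovnode(M,y)$. This expression case reduces at once to the variable case, since $\evalf{M}{E}$ is assembled from the $\evalf{M}{y}$ with $y\in\textnormal{vars}(E)$, so two differing evaluations force $\evalf{M}{y}(\dict_1)\neq\evalf{M}{y}(\dict_2)$ for some such $y$.

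First I would set up the induction exactly as in \Cref{theorem-evalf-loopy} (using that $\prec$ is well founded on $\unrolledcfg$) and dispatch the base case $M=\text{START}$, which is vacuous because $\evalf{\text{START}}{x}$ is constantly $\none$. For the induction step I follow the same case split on the node type of the predecessor $M'$ of $M$; note that $M'$ --- and, when $M$ is a join node, the branch node associated to $M$ --- appears in both execution sequences whenever $M$ does, so the induction hypothesis is available for it. The two ``propagation'' cases are routine: when $M'$ does not assign $x$, or is a branch/join node, we have $\evalf{M}{x}=\evalf{M'}{x}$ and $\RD(M,x)=\RD(M',x)$, so the claim follows from the induction hypothesis and \eqref{eq:staticprov-same-rd}; when $M'=\textnormal{Assign}(x=E)$ so that $\evalf{M}{x}=\evalf{M'}{E}$, a difference gives $M_\alpha\in\staticprovnode(M',E)$ by induction and $\staticprovnode(M',E)\subseteq\staticprovnode(M,x)$ by \eqref{eq:staticprovnode-assign}.

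The pivotal case is $M'=\textnormal{Assign}(x=\kw{sample}(E_0,\dots))$, where $\evalf{M}{x}(\dict)=\dict(\evalf{M'}{E_0}(\dict))$. If the evaluated addresses $\evalf{M'}{E_0}(\dict_1)$ and $\evalf{M'}{E_0}(\dict_2)$ differ, I conclude by the (expression) induction hypothesis on $E_0$ together with \eqref{eq:staticprovnode-sample}. Otherwise the common address $\beta$ satisfies $\dict_1(\beta)\neq\dict_2(\beta)$, hence $\beta=\alpha$ since $\dict_1,\dict_2$ differ only at $\alpha$; but then $M'$ is a sample node in the execution sequence of $\dict_1$ whose address expression evaluates --- by \Cref{theorem-evalf-loopy} --- to $\alpha$, so the uniqueness hypothesis forces $M'=M_\alpha$, and $M_\alpha=M'\in\staticprovnode(M,x)$ by \eqref{eq:staticprovnode-sample}. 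Finally, for a join node $J$ I reuse the construction of $\evalf{J}{x}$ from \Cref{theorem-evalf}/\Cref{theorem-evalf-loopy}: for an if-statement join with branch $B$, if both sequences take the same branch the preceding analysis applies to the corresponding predecessor (the relevant reaching definition lies in $\RD(J,x)$, so \eqref{eq:staticprovnode-assign}/\eqref{eq:staticprovnode-sample} keep $M_\alpha$ in $\staticprovnode(J,x)$), while if the branches differ the branch-condition evaluations differ and induction gives $M_\alpha\in\staticprovnode(B,\textnormal{condexp}(B))\subseteq\staticprovnode(J,x)$ via \eqref{eq:staticprovnode-control} and $B\in\CP(M'')$ for a reaching definition $M''$ of $x$ inside the branch; the while-loop join is handled the same way, using that $J$ appearing in both sequences forces both loops to terminate so that $\evalf{J}{x}$ reduces to $\evalf{B_i}{x}$ at a finite branch node in each case.

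I expect the main obstacle to be the bookkeeping in the join-node subcases, namely checking that the provenance of whichever sub-expression or variable actually differs at a predecessor is contained in $\staticprovnode(J,x)$; this rests on the reaching-definition and branch-parent inclusions collected in \Cref{lemma-staticprovnode} together with the monotonicity of $\staticprovnode$ under inclusion of reaching-definition sets. The one genuinely new ingredient relative to the proof of \Cref{theorem-evalf-loopy} is the single appeal to the uniqueness hypothesis in the sample-node case: it is what upgrades the weak conclusion ``$\alpha$ contributes to $\evalf{M}{x}$'' (which already follows from \Cref{theorem-evalf-loopy} and $\dict_1,\dict_2$ differing only at $\alpha$) to the sharp conclusion ``$M_\alpha$ itself lies in $\staticprovnode(M,x)$'', ruling out that the dependency is carried only by some other statically-possible sample node whose address set happens to contain $\alpha$.
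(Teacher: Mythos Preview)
Your proof is correct, but it is organized differently from the paper's. The paper argues by induction on the \emph{position in the execution sequence}: at a node $M$ in both sequences with $\evalf{M}{x}(\dict_1)\neq\evalf{M}{x}(\dict_2)$, it looks directly at the last assigning nodes $M'_1,M'_2\in\RD(M,x)$ along each sequence and splits into three cases ($M'_1\neq M'_2$, which forces a common branch parent to evaluate differently; $M'_1=M'_2$ an assign node; $M'_1=M'_2$ a sample node). You instead do well-founded induction on the parent relation of $\unrolledcfg$ and mirror the recursive construction of $\evalf{M}{x}$ from \Cref{theorem-evalf}/\Cref{theorem-evalf-loopy}, which means you must separately treat join nodes (if-join and while-join) and verify the various inclusions $\staticprovnode(B_i,x)\subseteq\staticprovnode(J,x)$ and $\staticprovnode(B,\textnormal{condexp}(B))\subseteq\staticprovnode(J,x)$. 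The paper's route is shorter precisely because jumping to the last reaching definition in each sequence collapses your entire join-node subcase analysis into the single ``$M'_1\neq M'_2$'' case; on the other hand, your route makes the connection to the definition of $\evalf{M}{x}$ completely transparent and would mechanise more directly. One remark: the monotonicity ``$\RD(N,x)\subseteq\RD(N',x)\Rightarrow\staticprovnode(N,x)\subseteq\staticprovnode(N',x)$'' that you invoke in the join cases is not among the stated items of \Cref{lemma-staticprovnode}, but it follows immediately from the structure of Algorithm~\ref{alg:staticprov}, so this is not a gap.
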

\begin{proof}
    By induction on the position in the execution sequence.
    The claim is trivially true for the start node.
    Let $M$ be a node in both execution sequences with $\evalf{M}{x}(\dict_1) \neq \evalf{M}{x}(\dict_2)$.
    The induction assumption states that for all nodes $M'$ that appear sooner than $M$ in any of the execution sequences, the claim holds.
    
    Let $M'_1 \in \RD(M,x) \neq \emptyset$ be the assign node that last writes $x$ in the execution sequence of $\dict_1$ before evaluating $M$.
    Similarly, let $M'_2 \in \RD(M,x) \neq \emptyset$ be the corresponding assign node in the execution sequences of $\dict_2$.
    \begin{itemize}
        \item Case 1. $M'_1 \neq M'_2$: By definition of reaching definitions, both execution sequences must contain a different path from start node to $M$.
        This can only happen if there exists one branch parent  $B \in \CP(M'_1) \cap \CP(M'_2)$ in both execution sequences that evaluates differently, $\evalf{B}{E}(\dict_1)\neq\evalf{B}{E}(\dict_2)$ for $E = \textnormal{condexp}(B)$.
        By induction assumption, $M_\alpha \in \staticprovnode(B,E)$ and \Cref{eq:staticprovnode-control} implies $M_\alpha \in \staticprovnode(M,x)$.
        \item Case 2. $M' \coloneqq M'_1 = M'_2 = \textnormal{Assign}(x=E)$: It holds that $\evalf{M'}{E}(\dict_1) = \evalf{M}{x}(\dict_1) \neq \evalf{M}{x}(\dict_2) = \evalf{M'}{E}(\dict_2)$.
        By induction assumption, we have  $M_\alpha \in \staticprovnode(M',E)$ and \Cref{eq:staticprovnode-assign} implies $M_\alpha \in \staticprovnode(M,x)$.
        \item Case 3. $M' \coloneqq M'_1 = M'_2 = \textnormal{Assign}(x=\kw{sample}(E_0,\dots))$: It holds that $\dict_1(\evalf{M'}{E_0}(\dict_1)) = \evalf{M}{x}(\dict_1) \neq \evalf{M}{x}(\dict_2) = \dict_2(\evalf{M'}{E_0}(\dict_2))$.
        If $\evalf{M'}{E_0}(\dict_1) = \alpha$ or $\evalf{M'}{E_0}(\dict_2) = \alpha$, then $M' = M_\alpha$, since $M_\alpha$ is the only sample node whose address expression evaluates to $\alpha$ and we have $M_\alpha \in \staticprovnode(M,x)$ by \Cref{eq:staticprovnode-sample}.
        Otherwise, since the traces only differ at $\alpha$, the inequality can only hold if $\evalf{M'}{E_0}(\dict_1) \neq \evalf{M'}{E_0}(\dict_2)$.
        But then as in Case 2, by induction assumption $M_\alpha \in \staticprovnode(M',E_0)$ and by \Cref{eq:staticprovnode-sample} we have $M_\alpha \in \staticprovnode(M,x)$.
    \end{itemize}
    Lastly, if there are no such $M'_1$ and $M'_2$ in both sequences, then $\evalf{M}{x}(\dict_1) = \none = \evalf{M}{x}(\dict_2)$, a contradiction.
    If there is only $M'_1$ or only $M'_2$, then the claim follows as in Case 1.
\end{proof}

\begin{lemma}\label{lemma:alpha-sample-node-in-one}
    Let $\dict_1$ and $\dict_2$ be traces that only differ at address $\alpha$, i.e. $\dict_1(\alpha) \neq \dict_2(\alpha)$ and $\forall \beta \neq \alpha\colon \dict_1(\beta) = \dict_2(\beta).$
    Further, let  $M_\alpha$ be the only sample node in the execution sequences whose address expression evaluates to $\alpha$.
    If CFG node $M$ is in \emph{one but not both} of the execution sequences of $\dict_1$ and $\dict_2$, then there exists a branch parent $B \in \CP(M)$ such that $M_\alpha \in \staticprovnode(B,\textnormal{condexp}(B))$.
\end{lemma}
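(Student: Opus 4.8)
The plan is to reduce \Cref{lemma:alpha-sample-node-in-one} to \Cref{lemma:alpha-sample-node-in-both}, which already handles nodes occurring in \emph{both} execution sequences. The bridge is the characterisation established inside the proof of \Cref{theorem:loopy-factor}: a node $N$ of the unrolled CFG occurs in the execution sequence for a trace $\dict$ if and only if $b_N(\dict)=\true$, where $b_N(\dict)\coloneqq\bigwedge_{B\in\CP(N)} t_B^N(\evalf{B}{\textnormal{condexp}(B)}(\dict))$ and $t_B^N$ is the identity or negation according to whether $N$ lies in the consequent or alternative branch of $B$. The formula there is written for sample nodes, but it only uses the branch structure of the graph and carries over verbatim to an arbitrary node.

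First I would fix, without loss of generality, a node $M$ that lies in the execution sequence of $\dict_1$ but not in that of $\dict_2$; everything to follow is symmetric in $\dict_1$ and $\dict_2$. Then $b_M(\dict_1)=\true$ while $b_M(\dict_2)=\false$, so the two conjunctions disagree in some conjunct: there is a branch parent $B\in\CP(M)$ with $\evalf{B}{\textnormal{condexp}(B)}(\dict_1)\neq\evalf{B}{\textnormal{condexp}(B)}(\dict_2)$ (since each $t_B^N$ is a bijection on $\{\true,\false\}$; note $B$ need not even be executed for $\dict_2$, which is why the values may simply differ). Among all such \emph{diverging} branch parents of $M$ — a finite, non-empty set — let $B^*$ be one that occurs earliest in the execution sequence of $\dict_1$; it does occur there, because $B^*\in\CP(M)$ and $M$ is executed for $\dict_1$.

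The key step is to show that $B^*$ also occurs in the execution sequence of $\dict_2$. If it did not, then $b_{B^*}(\dict_2)=\false$ while $b_{B^*}(\dict_1)=\true$, so by the same argument there is a diverging branch parent $B'\in\CP(B^*)$. Because $\CP$ is closed under composition — if $B'\in\CP(B^*)$ and $B^*\in\CP(M)$ then $B'\in\CP(M)$, obtained by splicing the defining path through $(B^*,J^*)$ into the defining path through $(B',J')$ — the node $B'$ is itself a diverging branch parent of $M$; moreover $B'\in\CP(B^*)$ forces $B'$ to be visited strictly before $B^*$ along any path reaching $B^*$ (the only entry into the if-body of $B'$ is $B'$ itself), hence before $B^*$ in the execution sequence of $\dict_1$, contradicting the minimality of $B^*$. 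Therefore $B^*$ lies in both execution sequences. Since $\evalf{B^*}{\textnormal{condexp}(B^*)}(\dict_1)\neq\evalf{B^*}{\textnormal{condexp}(B^*)}(\dict_2)$ and the evaluation function of an expression is a deterministic composition of those of its variables, some $y\in\textnormal{vars}(\textnormal{condexp}(B^*))$ satisfies $\evalf{B^*}{y}(\dict_1)\neq\evalf{B^*}{y}(\dict_2)$. Applying \Cref{lemma:alpha-sample-node-in-both} to the node $B^*$ and the variable $y$ gives $M_\alpha\in\staticprovnode(B^*,y)\subseteq\staticprovnode(B^*,\textnormal{condexp}(B^*))$, so $B=B^*$ is the required branch parent in $\CP(M)$.

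The main obstacle is exactly the key step: showing that \emph{some} diverging branch parent of $M$ is common to both execution sequences, which is what lets us invoke \Cref{lemma:alpha-sample-node-in-both} instead of recursing without bound. Making this precise rests on two structural facts about the CFGs produced by our translation rules — that $\CP$ composes, and that a branch node is always visited before any node it guards — both of which follow from the shape of the if-statement and while-loop constructions. A minor point to treat carefully is that $\evalf{B}{\textnormal{condexp}(B)}$ is defined on \emph{all} traces, including those for which $B$ is not executed, so one must only compare these values and never assume they are genuine booleans before drawing a conclusion.
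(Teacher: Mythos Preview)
Your proof is correct and follows the paper's approach: use the $b_M$ characterisation from the proof of \Cref{theorem:loopy-factor} to locate a branch parent whose condition evaluates differently on the two traces, then invoke \Cref{lemma:alpha-sample-node-in-both}. Where the paper simply asserts that such a branch parent lies ``in both execution sequences'', you supply an explicit earliest-diverging argument (using transitivity of $\CP$ and dominance) to justify this step, so your version is, if anything, more complete on precisely the point the paper glosses over.
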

\begin{proof}
    As in the proof of \Cref{theorem:loopy-factor}, $b(\dict) = \bigwedge_{B \in \CP(M)} t_{B}(\evalf{B}{ \textnormal{condexp}(B)}(\dict))$ is $\true$ if $M$ is in the execution sequence of $\dict$ else $\false$.
    Since $M$ is not in both execution sequences we have $b(\dict_1) \neq b(\dict_2)$.
    Thus, one of the branch parents has to evaluate differently, i.e. there is a branch parent $B \in \CP(M)$ in both execution sequences such that $\evalf{B}{ \textnormal{condexp}(B)}(\dict_1) \neq \evalf{B}{ \textnormal{condexp}(B)}(\dict_2)$.
    
    Since, $\dict_1$ and $\dict_2$ differ only at $\alpha$ and $M_\alpha$ is the only sample node in the execution sequences whose address expression evaluates to $\alpha$, by \Cref{lemma:alpha-sample-node-in-both}, $M_\alpha \in \staticprovnode(B,\textnormal{condexp}(B))$.
\end{proof}
\pagebreak

\subsection{Proof of \Cref{theorem:slicing-correctness} - Correctness of Slicing}
Now, we are ready to prove the main correctness result of the sliced CFG construction. \medskip

\begin{statement}
    Let $\dict_1$ and $\dict_2$ be traces that only differ at address $\alpha$, i.e. $\dict_1(\alpha) \neq \dict_2(\alpha)$ and $\forall \beta \neq \alpha\colon \dict_1(\beta) = \dict_2(\beta).$
    Let the execution sequences in the unrolled CFG $\unrolledcfg$ be equal up to sample node $M_\alpha = \textnormal{Assign}(x = \kw{sample}(E_0, f(E_1,\dots,E_{n})))$ and state $\sigma$, such that $\alpha = \sigma(E_0)$:
\begin{align}
    (\progstate_0, \text{START}) \cfgreloverset{\dict_1} \cdots  \cfgreloverset{\dict_1} (\progstate, M_\alpha) \cfgreloverset{\dict_1} \cdots \cfgreloverset{\dict_1} (\progstate_{l_1}, \text{END}), \label{eq:slice-correctness-exec-seq-1}\\
    (\progstate_0, \text{START}) \cfgreloverset{\dict_2} \cdots  \cfgreloverset{\dict_2} (\progstate, M_\alpha) \cfgreloverset{\dict_2} \cdots \cfgreloverset{\dict_2} (\progstate_{l_2}, \text{END}).\label{eq:slice-correctness-exec-seq-2}
\end{align}
If $M_\alpha$ is the only sample node in the execution sequences whose address expression evaluates to $\alpha$, then for the sliced CFG $\cfg_k$ of sample node $N_k = \cfgnode(M_\alpha)$, it holds that
\[\log p_G(\dict_1) - \log  p_G(\dict_2) = \log \bar{p}_{\cfg_k}(\dict_1, \sigma) - \log  \bar{p}_{\cfg_k}(\dict_2, \sigma).\]
\end{statement}

\begin{proof}

Let $M_1'$ and $M_2'$ be the last sample nodes in the sequences, such that $\cfgnode(M_1') \in \cfg_k$ and $\cfgnode(M_2') \in \cfg_k$.
By definition of $\cfg_k$, the execution sequences for $\dict_1$ and $\dict_2$ in the unrolled version $\unrolledcfg_k$ of $\cfg_k$ become
\[(\progstate, \text{START})  \underset{\unrolledcfg_k}{\cfgreloverset{\dict_1}} (\progstate, M_\alpha) \underset{\unrolledcfg_k}{\cfgreloverset{\dict_1}} \cdots \underset{\unrolledcfg_k}{\cfgreloverset{\dict_1}} (\sigma'_1,M'_1) \underset{\unrolledcfg_k}{\cfgreloverset{\dict_1}} \cdots\underset{\unrolledcfg_k}{\cfgreloverset{\dict_1}} (\progstate_1'', \text{END}),\]
\[(\progstate, \text{START}) \underset{\unrolledcfg_k}{\cfgreloverset{\dict_2}} (\progstate, M_\alpha) \underset{\unrolledcfg_k}{\cfgreloverset{\dict_2}} \cdots \underset{\unrolledcfg_k}{\cfgreloverset{\dict_2}} (\progstate'_2,M'_2) \underset{\unrolledcfg_k}{\cfgreloverset{\dict_2}} \cdots \underset{\unrolledcfg_k}{\cfgreloverset{\dict_2}} (\progstate_2'', \text{END}),\]
for some states $\progstate'_1$, $\progstate'_2, \progstate''_1$, $\progstate''_2$.
Note that technically we also have $(\progstate'_i,M'_i) \underset{\unrolledcfg_k}{\cfgreloverset{\dict_i}} (\progstate_i', \text{END})$.

To prove the statement, we have to show that the contribution to the density of sample nodes $M$ that come after $M_1'$ and $M_2'$ in the original execution sequences \eqref{eq:slice-correctness-exec-seq-1} or \eqref{eq:slice-correctness-exec-seq-2} and sample nodes $M$ that come between $M_\alpha$ and $M_1'$ or $M_2'$ but were transformed to $\kw{read}$ statements in $G_k$, see \eqref{eq:read-semantics}, is equal.

By \Cref{theorem:loopy-factor}, the contribution to the density of sample node $M$ with $N_j = \cfgnode(M)$, i.e. $M = \textnormal{Assign}(x_j = \kw{sample}(E_0^j, f^j(E_1^j,\dots,E_{n_j}^j)))$, has provenance set
\begin{align*}
    \tilde{A}_j &\coloneqq \textnormal{addresses}(M) \cup  \medcup_{i = 0}^{n_j} \staticprov(M, E_i^j) \cup \medcup_{M_\textnormal{bp} \in \CP(M)}\staticprov(M_\textnormal{bp},\textnormal{condexp}(M_\textnormal{bp}))\\
    &\subseteq \textnormal{addresses}(N_j) \cup  \medcup_{i = 0}^{n_j} \staticprov(N_j, E_i^j) \cup \medcup_{N_\textnormal{bp} \in \CP(N_j)}\staticprov(N_\textnormal{bp},\textnormal{condexp}(N_\textnormal{bp})) \eqqcolon A_j,
\end{align*}
or equivalently expressed in terms of sample nodes
\begin{align*}
    \tilde{\mathcal{A}}_j &\coloneqq \{M\} \cup  \medcup_{i = 0}^{n_j} \staticprovnode(M, E_i^j) \cup \medcup_{M_\textnormal{bp} \in \CP(M)}\staticprovnode(M_\textnormal{bp},\textnormal{condexp}(M_\textnormal{bp}))
\end{align*}
such that by \Cref{lemma-staticprovnode}
\begin{align*}
    \{\cfgnode(M')\colon &M' \in \tilde{\mathcal{A}}_j\} \subseteq\\ &\{N_j\} \cup  \medcup_{i = 0}^{n_j} \staticprovnode(N_j, E_i^j) \cup \medcup_{N_\textnormal{bp} \in \CP(N_j)}\staticprovnode(N_\textnormal{bp},\textnormal{condexp}(N_\textnormal{bp})) \eqqcolon \mathcal{A}_j
\end{align*}

If $M$ comes after $M'_1$ or $M'_2$ in \eqref{eq:slice-correctness-exec-seq-1} or \eqref{eq:slice-correctness-exec-seq-2}, then $\cfgnode(M) = N_j \notin G_k$, and by definition of $\cfg_k$, we have $\cfgnode(M_\alpha) = N_k \notin\mathcal{A}_j$ and thus also $M_\alpha \notin \tilde{\mathcal{A}}_j$.
If $M$ comes between $M_\alpha$ and $M'_1$ or $M'_2$, but was transformed to a $\kw{read}$ statement in $G_k$, then by definition of $\cfg_k$, also $N_k \notin\mathcal{A}_j$ and $M_\alpha \notin \tilde{\mathcal{A}}_j$.

In both cases, by \Cref{lemma:alpha-sample-node-in-one}, $M$ has to be in both original execution sequences, since otherwise we would have the contradiction $M_\alpha \in \medcup_{B \in \CP(M)}\staticprovnode(B,\textnormal{condexp}(B))$.
Let $(\progstate_1, M)$ denote the appearance in~\eqref{eq:slice-correctness-exec-seq-1} and $(\progstate_2, M)$ the appearance in~\eqref{eq:slice-correctness-exec-seq-2}.
Since $M_\alpha \notin \medcup_{i = 0}^{n_j} \staticprovnode(M, E_i^j)$, by \Cref{lemma:alpha-sample-node-in-both}, we have $\sigma_1(y) =\sigma_2(y)$ for all variables $y \in \medcup_{i = 0}^{n_j} \textnormal{vars}(E_i^j)$.
Since it also holds that $\alpha \neq \sigma_1(E_0) = \sigma_2(E_0)$, we have $\dict_1(\sigma_1(E_0)) = \dict_2(\sigma_2(E_0))$, which means that $M$ contributes the same value to the density in both execution sequences, compare \Cref{eq:sample-semantics-cfg}.
\end{proof}
\newpage

\subsection{Proof of \Cref{theorem:slicing-correctness-corollary} - Correctness of Slicing (cont.)}
\begin{statement}
Assume that for each trace $\dict$ there is at most one sample node $M_\alpha$ in its execution sequences such that the address expression of $M_\alpha$ evaluates to $\alpha$.
For initial program state $\progstate_0$ as in \Cref{def:prog-semantics}, define
    \begin{equation*}
        p_{\alpha}(\dict) \coloneqq \begin{cases}
            \bar{p}_{G_k}(\dict, \progstate)& \text{ if } \exists\,\progstate\in\progstates, M_\alpha\colon (\progstate_0, \text{START}) \cfgreloverset{\dict} \cdots  \cfgreloverset{\dict} (\progstate, M_\alpha) \land \cfgnode(M_\alpha) = N_k\\
            1 & \text{ otherwise.}
        \end{cases}
    \end{equation*}
    Then, $\Delta_\alpha(\dict) \coloneqq \log p_G(\dict) - \log p_{\alpha}(\dict)$ is independent of address $\alpha$, $\Delta_\alpha \in \restricedfs{\strings \setminus \{\alpha\}}{\pdfvalues}.$
\end{statement}
\begin{proof}
    By definition $\Delta_\alpha \in \restricedfs{\strings \setminus \{\alpha\}}{\pdfvalues}$, if for two traces $\dict_1$ and $\dict_2$ it holds that
    \[\forall \beta \in \strings \setminus \{\alpha\}\colon \dict_1(\beta) = \dict_2(\beta) \implies \Delta_\alpha(\dict_1) = \Delta_\alpha(\dict_2).\]
    Let $\dict_1$ and $\dict_2$ be traces such that $\dict_1(\alpha) \neq \dict_2(\alpha)$ and $\forall \beta \neq \alpha\colon \dict_1(\beta) = \dict_2(\beta).$
    The execution sequences of $\dict_1$ and $\dict_2$ are equal up to $(\sigma, M_\alpha)$, where $M_\alpha$ is the first sample node  whose address expression evaluates to $\alpha$.
    If there is no such sample node, then the execution sequences are completely identical.
    In this case, $\log p_G(\dict_1) = \log p_G(\dict_2)$  and $\log p_{\alpha} (\dict_1) = \log p_{\alpha} (\dict_2) = 0$.
    If there is such a sample node $M_\alpha$, by assumption it is the only one whose address expression evaluations to $\alpha$ and by \Cref{theorem:slicing-correctness}, 
    $\log p_G(\dict_1) - \log  p_G(\dict_2) = \log \bar{p}_{\cfg_k}(\dict_1, \sigma) - \log  \bar{p}_{\cfg_k}(\dict_2, \sigma)$.
    After rearranging, we have
    \begin{align*}
     \Delta_\alpha(\dict_1) - \Delta_\alpha(\dict_2) &=
     (\log p_G(\dict_1) - \log p_\alpha(\dict_1)) - ( \log p_G(\dict_2) - \log p_\alpha(\dict_2))\\
      &=(\log p_G(\dict_1) - \log \bar{p}_{G_k}(\dict_1,\sigma)) - ( \log p_G(\dict_2) - \log \bar{p}_{G_k}(\dict_2,\sigma)) \\
     &= (\log p_G(\dict_1) - \log p_G(\dict_2)) - (\log \bar{p}_{G_k}(\dict_1,\sigma)  - \log \bar{p}_{G_k}(\dict_2,\sigma)) = 0
    \end{align*}
\end{proof}

\subsection{Proof of \Cref{theorem:lmh-correctness} - Correctness of LMH Optimisation}
To prove correctness of the LMH optimisation, we require the notion of keys and minimal traces.
\begin{definition}
    The set of keys of a trace $\dict$ is the set of addresses that map to an non-$\none$ value
    \[\textnormal{keys}(\dict) \coloneqq \{\alpha \in \strings\colon \dict(\alpha) \neq \none\}.\]
    
    A trace $\dict$ is \emph{minimal} with respect to $\cfg$ if
    \[p_\cfg(\dict) \neq \undefval \quad\text{and}\quad \forall a \in \strings\colon p_\cfg(\dict[\alpha \mapsto \none]) = \undefval.\]
\end{definition}
Note that the density is undefined if a trace is read at an address that maps to $\none$, see \eqref{eq:sample-semantics-cfg}.
Since a trace is minimal if setting a single additional address to $\none$ results in undefined density, the minimality condition captures "key errors".
For a minimal trace $\textnormal{keys}(\dict)$ is the minimal set of addresses that need to be set to non-$\none$ values such that the density is well-defined, i.e. the trace contains no redundant information.
We can immediately establish following lemma.
Here, and in the following, we often omit the dependency on $G$ and write $p(\dict) = p_\cfg(\dict)$ to declutter notation.

\begin{lemma}\label{lemma:minimal-same-exec-seq}
    Let $\dict_1$ be a minimal trace and $\dict_2$ a trace such that $\forall \alpha \in \textnormal{keys}(\dict_1)\colon \dict_1(\alpha) = \dict_2(\alpha)$.
    Then, the execution sequences of $\dict_1$ and $\dict_2$ as in \cref{theorem-evalf-loopy} are the same and $p(\dict_1) = p(\dict_2).$
\end{lemma}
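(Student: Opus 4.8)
The plan is to prove both assertions at once by a short induction along the \emph{finite} execution sequence of $\dict_1$, exploiting that the small-step relation $\cfgreloverset{\dict}$ consults the trace $\dict$ only at sample nodes, and there only at the single address $\progstate(E_0)$, which is necessarily a key of $\dict_1$. So the execution of $\dict_1$ never looks outside $\textnormal{keys}(\dict_1)$, and $\dict_2$ agrees with $\dict_1$ on exactly that set.

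First I would record that, since $\dict_1$ is minimal, $p_\cfg(\dict_1)\neq\undefval$, hence by \Cref{theorem:unrolled-CFG-semantics-equivalence} also $p_\unrolledcfg(\dict_1)\neq\undefval$; therefore there is a finite execution sequence
\[(\progstate_0,\text{START})\cfgreloverset{\dict_1}(\progstate_1,M_1)\cfgreloverset{\dict_1}\cdots\cfgreloverset{\dict_1}(\progstate_l,\text{END})\]
in $\unrolledcfg$ (non-termination would force the density to be $\undefval$), and $M_i\neq\text{END}$ for $i<l$ since $\text{END}$ has no successor.

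Then I would show by induction on $i=0,\dots,l$ that the execution sequence of $\dict_2$ in $\unrolledcfg$ agrees with that of $\dict_1$ through step $i$, i.e.\ $(\progstate_0,\text{START})\cfgreloverset{\dict_2}\cdots\cfgreloverset{\dict_2}(\progstate_i,M_i)$. The base case is immediate. For the step, assume agreement up to $(\progstate_i,M_i)$ with $i<l$ and case on the node type of $M_i$: for $\text{START}$, $\text{Join}$, $\textnormal{Assign}(x=E)$, and $\textnormal{Branch}(E)$ the transition rule does not mention the trace, so the $\dict_2$-transition is identical to the $\dict_1$-transition and yields $(\progstate_{i+1},M_{i+1})$. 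For a sample node $M_i=\textnormal{Assign}(x=\kw{sample}(E_0,f(E_1,\dots,E_n)))$, the $\dict_1$-transition via rule \eqref{eq:sample-semantics-cfg} requires $V_j=\progstate_i(E_j)$, $V_0\in\strings$, and $V=\dict_1(V_0)\neq\none$; in particular $V_0\in\textnormal{keys}(\dict_1)$, so by hypothesis $\dict_2(V_0)=\dict_1(V_0)=V\neq\none$. Since each $V_j$ depends only on the common state $\progstate_i$, rule \eqref{eq:sample-semantics-cfg} fires for $\dict_2$ as well with the same $V$, producing the same updated state $\progstate_{i+1}=\progstate_i[x\mapsto V,\probvar\mapsto\progstate_i(\probvar)\times\pdf_f(V;V_1,\dots,V_n)]$ and the same successor $M_{i+1}$. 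At $i=l$ the $\dict_2$-sequence reaches $(\progstate_l,\text{END})$; since $\cfgreloverset{\dict_2}$ is deterministic and $\text{END}$ has no successor, this is the whole execution sequence of $\dict_2$, so the two sequences coincide. In particular $p_\unrolledcfg(\dict_2)=\progstate_l(\probvar)=p_\unrolledcfg(\dict_1)$, and \Cref{theorem:unrolled-CFG-semantics-equivalence} again gives $p(\dict_1)=p(\dict_2)$.

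I do not expect a genuine obstacle here; the only steps needing care are obtaining the finite execution sequence for $\dict_1$ (which is exactly what $p_\cfg(\dict_1)\neq\undefval$ provides) and keeping the bookkeeping ``$\dict$ is read only at $\progstate(E_0)\in\textnormal{keys}(\dict_1)$'' consistent across all node types. Note that only the clause $p_\cfg(\dict_1)\neq\undefval$ of minimality is actually invoked; minimality appears in the hypothesis mainly because this is the form in which the lemma is later used.
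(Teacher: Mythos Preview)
Your proposal is correct and follows essentially the same approach as the paper's proof: induction along the execution sequence of $\dict_1$, with the only nontrivial case being sample nodes, where $\dict_1(\progstate_i(E_0))\neq\none$ forces $\progstate_i(E_0)\in\textnormal{keys}(\dict_1)$ and hence $\dict_2$ agrees there. Your version is more detailed (explicitly handling each node type and invoking \Cref{theorem:unrolled-CFG-semantics-equivalence}), and your closing remark that only the clause $p_\cfg(\dict_1)\neq\undefval$ of minimality is actually used is a correct and worthwhile observation.
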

\begin{proof}
    By induction on the execution sequence.
    Assume the execution sequences are equal up to sample node $N$ with address expression $E_0$ and state $\sigma$.
    Since $p(\dict_1)\neq\undefval$, we must have $\dict_1(\sigma(E_0)) \neq \none$ in \eqref{eq:sample-semantics-cfg}.
    Thus, we have $\sigma(E_0) \in \textnormal{keys}(\dict_1)$ and  $\dict_2(\sigma(E_0)) = \dict_1(\sigma(E_0)) \neq \none$ and the execution sequence continue to be equal.
    All other transition rules follow immediately.
\end{proof}
\pagebreak

Next, we formally define the semantics of the LMH kernel \cite{wingate2011lightweightmh} in our setting. 

\begin{definition}\label{def:lmh-semantics}
Let $Q(\beta)$ be a proposal distribution for each address $\beta$ and $\texttt{vs}(\beta) \sim Q(\beta)$ a sample from $Q(\beta)$.
For a minimal trace $\dict$, the semantics of the LMH kernel \cite{wingate2011lightweightmh} that resamples at address $\alpha$ is given below, where we introduce new reserved variables $\boldsymbol{tr}$ and $\boldsymbol{q}$.
\begin{equation*}
    \frac{
    \begin{aligned}
    &\forall i \colon \progstate(E_i) = V_i \land  V_i \neq \none \quad V_0 \in \strings
    \\
    &V =
    \begin{cases}
    \texttt{vs}(V_0) & \text{if}\,\, V_0 = \alpha \lor V_0 \notin \textnormal{keys}(\dict)\\
    \dict(V_0) & \text{if}\,\, V_0 \neq \alpha \land V_0 \in \textnormal{keys}(\dict)
    \end{cases}
    \quad
     q = \begin{cases}
    \pdf_{Q(V_0)}(V) & \text{if}\,\, V_0 = \alpha \lor  V_0 \notin \textnormal{keys}(\dict) \\
    1 & \text{otherwise}
    \end{cases}
    \end{aligned}}
    {
    \begin{aligned}
    (\progstate, x = \kw{sample}(E_0, f(E_1,\dots,E_n))) \Downarrow^{\dict, \texttt{vs}}_{\texttt{LMH}(\alpha)} \progstate[&x \mapsto V, \probvar \mapsto \progstate(\probvar) \cdot  \pdf_f(V; V_1,\dots,V_n), \\
    &\boldsymbol{tr} \mapsto \progstate(\boldsymbol{tr})[V_0 \mapsto V], \boldsymbol{q} \mapsto \progstate(\boldsymbol{q}) \cdot q]
    \end{aligned}
    }
\end{equation*}
Let there exists a state $\progstate \in \progstates$ such that $(\progstate_0, S) \Downarrow^{\dict, \texttt{vs}}_{\texttt{LMH}(\alpha)} \sigma$ for program $S$ and initial program state $\sigma_0 = (x_1\mapsto \none,\dots,x_n\mapsto \none, \probvar \mapsto 1, \boldsymbol{tr} \mapsto \texttt{nulltr}, \boldsymbol{q} \mapsto 1)$, where $\texttt{nulltr}$ is a trace that maps all addresses to $\none$.
Then, the new trace $\dict' \coloneqq \progstate(\boldsymbol{tr})$ is called \emph{valid LMH proposal} and $\dict'$ is a minimal trace that agrees with $\dict$ on $\textnormal{keys}(\dict)\cap\textnormal{keys}(\dict') \setminus \{\alpha\}$.
Furthermore, there exists a state $\progstate'$ such that $(\sigma_0, S) \Downarrow^{\dict',\dict}_{\texttt{LMH}(\alpha)} \sigma'$ and $\dict = \sigma'(\boldsymbol{tr})$.

With $p(\dict') = \progstate(\probvar)$, $Q_\alpha(\dict'|\dict)=\progstate(\boldsymbol{q})$,  $p(\dict) = \progstate'(\probvar)$, and $Q_\alpha(\dict|\dict') =\progstate'(\boldsymbol{q})$ the acceptance probability is given by \[A = \min\left(1, \frac{p(\dict') \cdot Q_\alpha(\dict|\dict')}{p(\dict) \cdot Q_\alpha(\dict'|\dict)}\right).\]

Lastly, let the equivalent semantics on the CFG be denoted by $\underset{\texttt{LMH}(\alpha)}{\cfgreloverset{\dict,\texttt{vs}}}$.
\end{definition}

\begin{definition}
We can translate the LMH semantics to the semantics of context \texttt{ForwardCtx}, denoted by $\Downarrow^{\dict, \texttt{vs}}_{\texttt{Fwd}(\alpha)}$, and of context \texttt{BackwardCtx}, denoted by $\Downarrow^{\dict}_{\texttt{Bck}(\alpha)}$.

Using log-densities, at the $\kw{visit}$ node, we have:
\begin{equation*}
    \frac{
    \forall i \colon \progstate(E_i) = V_i \land  V_i \neq \none \quad V_0 = \alpha \quad V = \texttt{vs}(\alpha) \quad q = \pdf_{Q(\alpha)}(V)
    }
    {
    \begin{aligned}
    (\progstate, x = \kw{visit}(E_0, f(E_1,\dots,E_n))) \Downarrow^{\dict, \texttt{vs}}_{\texttt{Fwd}(\alpha)} \progstate[&x \mapsto V, \boldsymbol{\log p} \mapsto \progstate(\boldsymbol{\log p}) +  \log\pdf_f(V; V_1,\dots,V_n), \\
    &\boldsymbol{tr} \mapsto \progstate(\boldsymbol{tr})[V_0 \mapsto V], \boldsymbol{\log q} \mapsto \progstate(\boldsymbol{\log q}) + \log q]
    \end{aligned}
    }
\end{equation*}
At $\kw{read}$ nodes, we just read the trace $\dict$:
\begin{equation*}
    \frac{
    \forall i\colon\progstate(E_i) = V_i \land V_i \neq \none \quad V_0 \in \textnormal{keys}(\dict) \quad V = \dict(V_0)
    }
    {
    (\progstate, x = \kw{read}(E_0, f(E_1,\dots,E_n))) \Downarrow^{\dict, \texttt{vs}}_{\texttt{Fwd}(\alpha)} \progstate[x \mapsto V,
    \boldsymbol{tr} \mapsto \progstate(\boldsymbol{tr})[V_0 \mapsto V]]
    }
\end{equation*}
The semantics of $\kw{score}$ nodes look similar to the semantics of sample statements in \cref{def:lmh-semantics} with the difference that $V_0 \neq \alpha$:
\begin{equation*}
    \frac{
    \begin{aligned}
    &\forall i \colon \progstate(E_i) = V_i \land  V_i \neq \none \quad V_0 \in \strings \quad V_0 \neq \alpha
    \\
    &V =
    \begin{cases}
    \texttt{vs}(V_0) & \text{if}\,\, V_0 \notin \textnormal{keys}(\dict)\\
    \dict(V_0) & \text{if}\,\, V_0 \in \textnormal{keys}(\dict)
    \end{cases}
    \quad
     q = \begin{cases}
    \pdf_{Q(V_0)}(V) & \text{if}\,\, V_0 \notin \textnormal{keys}(\dict) \\
    1 & \text{otherwise}
    \end{cases}
    \end{aligned}}
    {
    \begin{aligned}
    (\progstate, x = \kw{score}(E_0, f(E_1,\dots,E_n))) \Downarrow^{\dict, \texttt{vs}}_{\texttt{Fwd}(\alpha)} \progstate[&x \mapsto V, \boldsymbol{\log p} \mapsto \progstate(\boldsymbol{\log p}) +  \log\pdf_f(V; V_1,\dots,V_n), \\
    &\dict' \mapsto \progstate(\dict')[V_0 \mapsto V], \boldsymbol{\log q} \mapsto \progstate(\boldsymbol{\log q}) + \log q]
    \end{aligned}
    }
\end{equation*}
The semantics of the \texttt{BackwardCtx} are simply
\begin{equation*}
    \Downarrow^{\dict}_{\texttt{Bck}(\alpha)} \,=\, \Downarrow^{\dict', \dict}_{\texttt{Fwd}(\alpha)}.
\end{equation*}
\end{definition}

Now, we are ready to prove \Cref{theorem:lmh-correctness} which we make more precise here.\medskip

\begin{statement}
    For address $\alpha$ and minimal trace $\dict$, let $\texttt{vs}$ be values from the proposal distributions as in  \Cref{def:lmh-semantics} and let $\dict'$ be a valid LMH proposal.
    Assume that the address expression of at most one sample node in the execution sequence evaluates to $\alpha$ for all traces.
    Then, the execution sequences of $\dict$ and $\dict'$ are equal up to $(\sigma, M_\alpha)$, where $M_\alpha$ is the first and only sample node whose address expression evaluates to~$\alpha$.
    For the sliced CFG $\cfg_k$, where $\cfgnode(M_\alpha) = N_k$, we have \[\log p(\dict') - \log p(\dict) = \log \bar{p}_{G_k}(\dict', \sigma) - \log \bar{p}_{G_k}(\dict, \sigma).\]
    The quantities $\log \bar{p}_{G_k}(\dict', \sigma)$, and $\log Q_\alpha(\dict'|\dict)$ are computed by \texttt{ForwardCtx} and the quantities  $\log \bar{p}_{G_k}(\dict, \sigma)$ and $\log Q_\alpha(\dict|\dict')$ are computed by \texttt{BackwardCtx}. The new trace $\dict'$ can be constructed from $\dict$ and the output of both contexts.
\end{statement}

\begin{proof}
    Since $\dict'$ is a valid LMH proposal for $\dict$, we have that both traces are minimal and $\forall\beta \in \textnormal{keys}(\dict)\cap\textnormal{keys}(\dict') \setminus \{\alpha\}\colon \dict(\beta) = \dict'(\beta)$.

    Let $\dict_1 = \dict$ and $\dict_2 = \dict'$ and define two new traces
    \[
    \overline{\dict_i}(\beta) = \begin{cases}
    \dict_i(\alpha) & \text{ if } \beta = \alpha \\
    \dict_1(\beta)=\dict_2(\beta) & \text{ if } \beta \in (\textnormal{keys}(\dict_1) \cap \textnormal{keys}(\dict_2)) \setminus \{\alpha\} \\
    \dict_1(\beta) & \text{ if } \beta \in \textnormal{keys}(\dict_1) \setminus \textnormal{keys}(\dict_2) \\
    \dict_2(\beta) & \text{ if } \beta \in \textnormal{keys}(\dict_2) \setminus \textnormal{keys}(\dict_1) \\
    \none & \text{ otherwise.}
    \end{cases}
    \]
    It holds that $\forall \beta \in \textnormal{keys}(\dict_i)\colon \overline{\dict_i}(\beta) = \dict_i(\beta)$ (Note that $\alpha \in \textnormal{keys}(\dict_i)$).
    By \Cref{lemma:minimal-same-exec-seq}, $\overline{\dict_1}$ has the same execution sequence as $\dict$ and $\overline{\dict_2}$ the same as $\dict'$.
    
    Since $\overline{\dict_1}$ differs from $\overline{\dict_2}$ only at $\alpha$ the by \Cref{theorem:slicing-correctness} we have
    \begin{align*}
        \log p(\dict') - \log  p(\dict) &= \log p(\overline{\dict_2}) - \log  p(\overline{\dict_1})= \\
        &= \log \bar{p}_{G_k}(\overline{\dict_2},\progstate) - \log \bar{p}_{G_k}(\overline{\dict_1},\progstate) = \log \bar{p}_{G_k}(\dict',\progstate) - \log \bar{p}_{G_k}(\dict,\progstate).
    \end{align*}

    As the LMH semantics do not change the computation of $\progstate(\probvar)$, we immediately conclude that the \texttt{ForwardCtx} computes $\bar{p}_{G_k}(\dict',\progstate)$ and the \texttt{BackwardCtx} computes $\bar{p}_{G_k}(\dict,\progstate)$.
    
    \newcommand{\cfgrellmhfwd}{\underset{\texttt{LMH}(\alpha)}{\cfgreloverset{\dict,\texttt{vs}}}}
    \newcommand{\cfgrellmhbck}{\underset{\texttt{LMH}(\alpha)}{\cfgreloverset{\dict',\dict}}}
    
    As in the proof of \Cref{theorem:slicing-correctness}, but with LMH semantics, let $M'_1$ be the last sample node in the execution sequence of $\dict$ such that $\cfgnode(M'_1) \in \cfg_k$ and likewise $M'_2$ for $\dict'$.
    \[(\progstate_0, \text{START})  \cfgrellmhfwd \cdots \cfgrellmhfwd (\progstate, M_\alpha) \cfgrellmhfwd \cdots \cfgrellmhfwd (\sigma'_1,M'_1) \cfgrellmhfwd \cdots \cfgrellmhfwd (\progstate_1, \text{END}),\]
    \[(\progstate_0, \text{START}) \cfgrellmhbck \cdots \cfgrellmhbck (\progstate, M_\alpha)  \cfgrellmhbck \cdots  \cfgrellmhbck (\progstate'_2,M'_2)  \cfgrellmhbck \cdots \cfgrellmhbck (\progstate_2, \text{END}).\]
    For sample nodes that come before $M_\alpha$ or after $M'_1$ or $M'_2$ the address expressions evaluate to some $\beta \in \textnormal{keys}(\dict)\cap\textnormal{keys}(\dict') \setminus \{\alpha\}$.
    Thus, they do not update $\boldsymbol{q}$ and $\log Q_\alpha(\dict'|\dict) = \log \progstate_1(\boldsymbol{q})$ as well as $\log Q_\alpha(\dict|\dict') = \log \progstate_2(\boldsymbol{q})$ are fully computed in the sliced CFG by \texttt{ForwardCtx} and \texttt{BackwardCtx}, respectively (compare \Cref{def:lmh-semantics}).
    
    For the same reason, the difference between $\dict$ and $\dict'$ is also fully accounted for in the sliced CFG.
    The new trace $\dict'$ can be constructed by removing the address-value pairs encountered in the \texttt{BackwardCtx} from $\dict$ and adding the address-value pairs encountered in the \texttt{ForwardCtx}.

\end{proof}

\subsection{Proof of \Cref{theorem:bbvi-correctness} - Correctness of BBVI optimisation}

Before we can prove correctness of the BBVI optimisation, we need to formally define the variational distribution over traces $Q_\phi$.

\begin{definition}\label{def:variational-distribution}
    For every address $\alpha$, let there be a probability measure on some measurable space $\mathcal{M_\alpha} = (Z_\alpha, \Sigma_\alpha)$ parameterised by $\phi_\alpha$ that admits a density $q_\alpha(z_\alpha|\phi_\alpha)$ differentiable with respect to~$\phi_\alpha$ (the density is with respect to the Lebesgue measure on $\mathbb{R}^{d_\alpha}$ or counting measure on $\mathbb{Z}^{d_\alpha}$).\medskip
    
    For every finite address set $A$, define the joint density $q_A((z_\alpha)_{\alpha \in A}|(\phi_\alpha)_{\alpha\in A}) = \prod_{\alpha\in A} q_\alpha(z_\alpha|\phi_\alpha)$ on the product space $\mathcal{M}_A = (Z_A, \Sigma_A) = \bigotimes_{\alpha\in A} \mathcal{M_\alpha}.$\medskip
    
    On the set of traces $\dicts$, define the $\sigma$-algebra $\Sigma_\dicts$, such that $T \in \Sigma_\dicts$ if and only if for all finite address sets $A$, we have $\{(\dict(\alpha))_{\alpha\in A}\colon \dict \in T\} \in \Sigma_A$.
    Lastly, assume that for program density $p$ the set $\dicts_\textnormal{minimal} = \{\dict \in \dicts\colon \dict \text{ is minimal wrt. } p\}$ is measurable in $\Sigma_\dicts$.\medskip
    
    Then, the variational distribution is defined by the density \[Q_\phi(\dict) = 1_{\dicts_\textnormal{minimal}}(\dict)\sum_{\substack{A \subseteq \strings,\\|A|<\infty}} 1_{\dicts_A}(\dict) \prod_{\alpha\in A} q_\alpha(\dict(\alpha)|\phi_\alpha),\]
    where $\dicts_A\coloneqq\{\dict\in\dicts\colon \forall\alpha\in\strings\colon \dict(\alpha)\neq \none \Leftrightarrow \alpha \in A\}$.
    For all measurable functions $f$ we can express the expectation with respect to $Q_\phi$ as
    \begin{align*}
    \expectation{\dict \sim  Q_\phi}{f(\dict)}
    &= \int f(\dict)Q_\phi(\dict) d\dict\\
    &= \sum_{\substack{A \subseteq \strings,\\|A|<\infty}}\int 1_{\dicts_\textnormal{minimal}}(\iota_\dicts(z_A)) \, f(\iota_\dicts(z_A)) \prod_{\alpha\in A} q_\alpha(z_\alpha|\phi_\alpha) \,\,dz_A.
    \end{align*}
    Above $\iota_\dicts(z_A)$ for $z_A = (z_\alpha)_{\alpha \in A} \in Z_A$ is the trace that maps address $\alpha$ to $z_\alpha$ if $\alpha \in A$ and all other addresses to $\none$.
    Note that we may exchange integration and sum, since for a minimal trace only one term in the sum is non-zero.
\end{definition}

Before repeating the correctness statement in terms of \Cref{def:variational-distribution}, we formally defined the semantics of \texttt{BBVICtx}.

\begin{definition}
    The semantics of the \texttt{BBVICtx} are very similar to the standard semantics with the only difference that we store $\log q_\alpha(\dict(\alpha)| \phi_\alpha)$ at $\kw{visit}$ statements and the use of log-densities.
    
    \begin{equation*}
        \frac{
        (\progstate, x = \kw{sample}(E_0, f(E_1,\dots,E_n))) \Downarrow^{\dict} \progstate' \quad \progstate(E_0) =\alpha \quad V = \dict(\alpha)
        }{
        (\progstate, x = \kw{visit}(E_0, f(E_1,\dots,E_n))) \Downarrow^{\dict}_{\texttt{BBVI}(\alpha)} \progstate[x\mapsto V, \boldsymbol{\log p} \mapsto \log \progstate'(\boldsymbol{p}), \boldsymbol{\log q} \mapsto \log q_\alpha(V| \phi_\alpha)]
        }
    \end{equation*}

    The semantics of $\kw{read}$ statements remains unchanged:
    \begin{equation*}
        \frac{
        (\progstate, x = \kw{read}(E_0, f(E_1,\dots,E_n))) \Downarrow^{\dict} \progstate'
        }{
        (\progstate, x = \kw{read}(E_0, f(E_1,\dots,E_n))) \Downarrow^{\dict}_{\texttt{BBVI}(\alpha)} \progstate'
        }
    \end{equation*}

    Lastly, the semantics of $\kw{score}$ statements are only rewritten in terms of log-densities:
    \begin{equation*}
        \frac{
        (\progstate, x = \kw{sample}(E_0, f(E_1,\dots,E_n))) \Downarrow^{\dict} \progstate' \quad \progstate(E_0) \neq\alpha \quad V=\dict(\sigma(E_0))
        }{
        (\progstate, x = \kw{score}(E_0, f(E_1,\dots,E_n))) \Downarrow^{\dict}_{\texttt{BBVI}(\alpha)} \progstate[x\mapsto V, \boldsymbol{\log p} \mapsto \log \progstate'(\boldsymbol{p}))]
        }
    \end{equation*}
\end{definition}

\begin{statement}
    Let $Q_\phi$ be a variational distribution over traces as in \Cref{def:variational-distribution}.
    Assume that the address expression of at most one sample node in the execution sequence evaluates to $\alpha$ for all traces.
    Assume that the program density $p$ and the set $\dicts_\textnormal{minimal}$ are measurable with respect to $(\dicts,\Sigma_\dicts)$.
    With $p_\alpha$ as in \Cref{theorem:slicing-correctness-corollary}, it holds that
    \begin{align*}
    &\nabla_{\phi_\alpha}\expectation{\dict \sim  Q_\phi}{\log p(\dict) - \log  Q_\phi(\dict)} = \\
    &\quad\quad\quad\expectation{\dict \sim  Q_\phi}{\delta_{\dict(\alpha)\neq\none}\cdot\nabla_{\phi_\alpha} \log q_\alpha(\dict(\alpha)|\phi_\alpha) \cdot \big(\log p_\alpha(\dict) -\log  q_\alpha(\dict(\alpha)|\phi_\alpha)\big)}.
    \end{align*}
    The quantity $\log p_\alpha(\dict) -\log  q_\alpha(\dict(\alpha)|\phi_\alpha)$ is computed by \texttt{BBVICtx}.
\end{statement}

\begin{proof}
    First, we repeat the known identity for the BBVI estimator
    \begin{align*}
        \nabla_{\phi_\alpha}\expectation{\dict \sim  Q_\phi}{\log p(\dict) - \log  Q_\phi(\dict)} = 
        \expectation{\dict \sim  Q_\phi}{\nabla_{\phi_\alpha}\big[\log Q_\phi(\dict) \big]\big(\log p(\dict) - \log  Q_\phi(\dict)\big) }.
    \end{align*}
    Then, for a minimal trace $\dict$, we have
    \begin{align*}
    \nabla_{\phi_\alpha}\big[\log Q_\phi(\dict)\big] &=  \delta_{\alpha\in\textnormal{keys}(\dict)}\sum_{\beta \in \textnormal{keys}(\dict)}\nabla_{\phi_\alpha}\log q_\beta(\dict(\beta)|\phi_\beta)\\
    &=\delta_{\dict(\alpha)\neq\none}\cdot\nabla_{\phi_\alpha} \log q_\alpha(\dict(\alpha)|\phi_\alpha).
    \end{align*}

    Further,  let $\Delta_p(\dict) = \log p(\dict) - \log p_\alpha(\dict)$ and $\Delta_q(\dict) = \log Q_\phi(\dict) - \log q_\alpha(\dict(\alpha)|\phi_\alpha)$.
    For an address set $A$ with $\alpha\in A$, by \Cref{theorem:slicing-correctness-corollary} and by definition of $Q_\phi$, both $\Delta_p \circ \iota_\dicts$ and $\Delta_q \circ \iota_\dicts$ are constant with respect to $z_\alpha$ in $Z_A$.
     Thus, 
    \begin{align*}
        &\int 1_{\dicts_\textnormal{minimal}}(\iota_\dicts(z_A)) \, \nabla_{\phi_\alpha} \log q_\alpha(z_\alpha|\phi_\alpha) 
  \, \Delta_p(\iota_\dicts(z_A)) \prod_{\beta\in A} q_\beta(z_\beta)|\phi_\beta) \,\,dz_A  = \\
        &\int \Delta_p(\iota_\dicts(z_A)) \int 1_{\dicts_\textnormal{minimal}}(\iota_\dicts(z_A)) \, \nabla_{\phi_\alpha} \big[\log q_\alpha(z_\alpha|\phi_\alpha)\big] \,   q_\alpha(z_\alpha|\phi_\alpha)\,dz_\alpha \prod_{\beta\in A\setminus\{\alpha\}} q_\beta(z_\beta|\phi_\beta) \,\,dz_{A\setminus\{\alpha\}}\\
        &= 0,
    \end{align*}
    because $\int\nabla_{\phi_\alpha}\big[\log q_\alpha(z_\alpha|\phi_\alpha)\big]  q_\alpha(z_\alpha|\phi_\alpha)\,dz_\alpha =0 $.
    Similarly,
    \begin{align*}
        &\int 1_{\dicts_\textnormal{minimal}}(\iota_\dicts(z_A)) \, \nabla_{\phi_\alpha} \log q_\alpha(z_\alpha|\phi_\alpha) 
  \, \Delta_q(\iota_\dicts(z_A)) \prod_{\beta\in A} q_\beta(z_\beta)|\phi_\beta) \,\,dz_A = 0.
    \end{align*}
    Lastly, we rewrite
    \begin{align*}
    f(\dict) &\coloneqq \nabla_{\phi_\alpha}\big[\log Q_\phi(\dict) \big]\big(\log p(\dict) - \log  Q_\phi(\dict)\big) \\&
    =
    \delta_{\dict(\alpha)\neq\none}\cdot\nabla_{\phi_\alpha} \log q_\alpha(\dict(\alpha)|\phi_\alpha)\cdot \big(\log p_\alpha(\dict) -  \log q_\alpha(\dict(\alpha)|\phi_\alpha)\big) \\
    & \quad\quad+\delta_{\dict(\alpha)\neq\none}\cdot\nabla_{\phi_\alpha} \log q_\alpha(\dict(\alpha)|\phi_\alpha)\cdot\big(\Delta_p(\dict) - \Delta_q(\dict)\big).
    \end{align*}
    Since we have shown that the second term in the sum integrates to $0$ for every address set $A$, with $\alpha \in A$, by linearity of the integral, we have
    \begin{align*}
        \expectation{\dict \sim  Q_\phi}{f(\dict)} &=
        \sum_{\substack{A \subseteq \strings,\\|A|<\infty, \alpha\in A}}\int 1_{\dicts_\textnormal{minimal}}(\iota_\dicts(z_A)) \, f(\iota_\dicts(z_A)) \prod_{\beta\in A} q_\beta(z_\beta)|\phi_\beta) \,\,dz_A \\
        &= \expectation{\dict \sim  Q_\phi}{\delta_{\dict(\alpha)\neq\none}\cdot\nabla_{\phi_\alpha} \log q_\alpha(\dict(\alpha)|\phi_\alpha) \cdot \big(\log p_\alpha(\dict) -\log  q_\alpha(\dict(\alpha)|\phi_\alpha)\big)}.
    \end{align*}
    
    Finally, from the definition of the \texttt{BBVICtx} it follows immediately that this context computes the quantities $\log p_\alpha(\dict)$ and $\log  q_\alpha(\dict(\alpha)|\phi_\alpha)$.
\end{proof}

\subsection{Proof of \Cref{theorem:smc-correctness} - Correctness of SMC optimisation}

First, we formalise the sliced CFG construction for SMC.
\begin{definition}\label{def:SMC-sliced-CFG}
Let $\cfg=(N_\textnormal{start},N_\textnormal{end},\mathcal{N},\mathcal{E})$ be a CFG with $K$ sample nodes, which we enumerate with $N_j = \textnormal{Assign}(x_j = \kw{sample}(E_0^j, f^j(E_1^j,\dots,E_{n_j}^j)))$ for $j = 1,\dots,K$.

    We define the set of nodes that lie on path from $N_k$ to some of other sample node $N_j$:
    \begin{align*}
        \mathcal{N}_k \coloneqq \{N \in \mathcal{N}\colon \exists j\in\{1,\dots,K\}\colon&\exists\text{ path } \nu=(N_k,\dots,N_j)\colon\\
        &N \in \nu \text{ and } N_k \text{ and } N_j \text{ are the only sample nodes in }\nu\}.
    \end{align*}
    In this construction, we do not require $\kw{read}$ statements.
    We restrict the set of edges to the subset of nodes~$\mathcal{N}_k$:
    \[\mathcal{E}_k' \coloneqq \{(e_1, e_2)\colon (e_1,e_2) \in \mathcal{E}\land e_1 \in \mathcal{N}_k \land e_2 \in \mathcal{N}_k\}.\]
    We connect the new start node $N^k_\textnormal{start}$ to $N_k$ and the new end node
    $N^k_\textnormal{end}$ to all $N_j$, $j\in\{1,\dots,K\}$:
    \[\mathcal{E}_k = \mathcal{E}_k'\cup \{(N^k_\textnormal{start}, N_k)\} \cup \{(N_j,N^k_\textnormal{end})\colon j\in\{1,\dots,K\}\land N_j \in \mathcal{N}_k\}.\]
    Finally, define the \emph{CFG sliced with respect to $N_k$} as
    \[\cfg_k^\texttt{SMC} \coloneqq (N^k_\textnormal{start}, N^k_\textnormal{end},\mathcal{N}_k, \mathcal{E}_k).\]    
\end{definition}

\medskip
The "continuation property" of this sliced CFG follows directly from its construction.
\medskip

\begin{statement}
    Let $\cfg$ be a CFG such that for initial program state $\progstate_0$ and trace~$\dict$, we have execution sequence
    $(\progstate_0, \text{START}) \cfgrel \cdots  \cfgrel (\progstate_i, M_i) \cfgrel \cdots \cfgrel (\progstate_l, \text{END})$ in its unrolled version $\unrolledcfg$.
    Let $(\progstate'_1, M'_1) \cfgrel \cdots  \cfgrel (\progstate'_2, M'_2)$ be a sub-sequence such that $\cfgnode(M'_1) = N_k$ and $\cfgnode(M'_2) = N_j$ are the only sample nodes.
    Then, {\revvcolor the execution sequence in the unrolled version of the sliced CFG $\cfg_k^\texttt{SMC}$  matches the sub-sequence}:
    $(\progstate'_1, \text{START}) \cfgrel (\progstate'_1, M'_1) \cfgrel \cdots  \cfgrel (\progstate'_2, M'_2)\cfgrel (\progstate'_2, \text{END}).$
\end{statement}
\begin{proof}
    Follows directly from \cref{def:SMC-sliced-CFG}.
\end{proof}

}

\end{document}